\documentclass[11pt, a4paper]{article}
\usepackage[ruled,linesnumbered,vlined,noend]{algorithm2e}
\SetAlFnt{\footnotesize}  %
\usepackage[font={footnotesize}]{caption}

\usepackage[utf8]{inputenc}
\usepackage[T1]{fontenc}
\usepackage{lmodern}

\usepackage[a4paper, margin=1in]{geometry}
\usepackage[mathcal]{euscript}
\usepackage{amsmath}    %
\usepackage{amssymb}    %
\usepackage{amsthm}     %
\usepackage{bbm}        %
\usepackage{mathtools}  %
\usepackage{physics}    %
\usepackage{braket}     %
\usepackage{dsfont}     %
\usepackage{tikz}
\usepackage{tcolorbox}
\usepackage{relsize}	%
\usepackage{xparse}  	%
\usepackage{MnSymbol}	%
\usepackage{authblk}    %
\usepackage{graphicx}   %
\usepackage{float}      %
\usepackage{subcaption} %
\usepackage{xpatch}
\usepackage{setspace}
\usepackage{enumitem}
\usepackage{float}
\usepackage{etoolbox}
\usepackage{mdframed} 	%
\usepackage{multicol}
\usepackage[dvipsnames]{xcolor} %
\usepackage{thmtools}
\usepackage{doi}
\usepackage[style=alphabetic,   %
            backend=biber,
            natbib=true,
            uniquename=false,
            uniquelist=false,
            maxalphanames=5,
            maxcitenames=10,
            maxbibnames=20,
            hyperref=true]{biblatex}

\usepackage{hyperref}
\usepackage[capitalize, nameinlink]{cleveref}

\hypersetup{
	colorlinks=true,
	linkcolor=MidnightBlue,
	citecolor=ForestGreen,
	urlcolor=RoyalBlue
}

\newtheoremstyle{plainroman} %
{}                          %
{}                          %
{\normalfont}               %
{}                          %
{\bfseries}                 %
{.}                         %
{.5em}                      %
{}                          %

\declaretheorem[name=Theorem, style=plainroman]{theorem}
\declaretheorem[name=Lemma, sibling=theorem, style=plainroman]{lemma}
\declaretheorem[name=Corollary, sibling=theorem, style=plainroman]{cor}
\declaretheorem[name=Proposition, sibling=theorem, style=plainroman]{prop}

\declaretheorem[name=Remark, sibling=theorem, style=plainroman]{remark}

\theoremstyle{definition}

\newtcolorbox{dbox}[1][]{colback=green!5!white, colframe=white, boxrule=0pt, 
    left=0pt, right=0pt, boxsep=3pt,
    fonttitle=\bfseries, #1}

\newtcolorbox{tbox}[1][]{colback=blue!5!white, colframe=white, boxrule=0pt, 
    left=0mm, right=0mm, boxsep=3pt,
    fonttitle=\bfseries, #1}

\DeclareMathOperator*{\argmax}{argmax}
\DeclareMathOperator*{\argmin}{argmin}

\renewcommand{\eqref}[1]{Eq.~(\ref{#1})}
\renewcommand{\tr}{\mathrm{Tr}}

\newcommand{\qaq}{\quad \text{and} \quad}
\newcommand{\inn}{{i \in [n]}}

\newcommand{\iv}{^{-1}}
\newcommand{\ihf}{^{-\frac12}}
\newcommand{\hf}{^{\frac12}}
\newcommand{\dg}{^{\dagger}}
\newcommand{\ra}{\rangle}
\newcommand{\la}{\langle}
\newcommand{\kb}[1]{\ketbra{#1}{#1}}
\newcommand{\ot}{\otimes}
\newcommand{\Lt}{\left}
\newcommand{\Rt}{\right}

\newcommand{\cle}{\mathcal{E}}

\newcommand{\clp}{\mathcal{P}}

\newcommand{\cpab}{\mathrm{CP} (\mathsf A, \mathsf B)}
\newcommand{\cpt}{\mathrm{CPTP} (\mathsf A, \mathsf B)}

\newcommand{\posa}{\mathrm{Pos}(\mathsf{A})}

\newcommand{\posab}{\mathrm{Pos}(\mathsf{A \otimes B})}
\newcommand{\dna}{\mathrm{D}(\mathsf{A})}

\newcommand{\dnab}{\mathrm{D}(\mathsf{A \otimes B})}

\newcommand{\hrm}{\mathrm{Herm}}

\newcommand{\si}{\sigma}
\newcommand{\rr}{\rho}

\newcommand{\Te}{\Theta}

\newcommand{\eps}{\varepsilon}

\newcommand{\pc}{\Pi_\mathcal{C}}

\newcommand{\hr}{\hat{\rho}}

\newcommand{\hrf}{\hat{\rho}_{\mathrm{FPLS}}}

\newcommand{\hrl}{\hat{\rho}_{\mathrm{LS}}}

\newcommand{\rf}{\mathrm{F}}

\newcommand{\sfa}{\mathsf{A}}

\newcommand{\sfb}{\mathsf{B}}

\newcommand{\bbc}{\mathbb{C}}

\newcommand{\dab}{d_{\mathsf{AB}}}

\newcommand{\cab}{\mathcal {C}_{\mathsf A, \mathsf B}}

\newcommand{\fc}{\mathfrak{C}}
\newcommand{\cphi}{\mathfrak{C}(\Phi)}

\newcommand{\frs}{\mathrm{F}(\rho, \sigma)}
\newcommand{\db}{\mathrm{d}_\mathrm{B}}
\newcommand{\dpu}{\mathrm{d}_{\mathrm p}}

\newcommand{\dimab}{d_{\mathsf{A B}}}

\newcommand{\lmin}{\lambda_{\min}}

\usetikzlibrary{calc, shadings, 3d}

\setlist[enumerate]{itemsep=0.5pt, parsep=1pt, topsep=2pt}  %

\crefname{cor}{Corollary}{Corollaries}
\Crefname{cor}{Corollary}{Corollaries}

\newcommand{\inm}{{i \in [m]}}
\newcommand{\me}{{\mathrm{M}_\cle}}
\newcommand{\vo}{\vec o}
\newcommand{\vs}{\vec s}
\newcommand{\mpn}{{\mathrm{M}_{\mathcal{P}_n}}}
\newcommand{\mps}{{\mathrm{M}_{\mathcal{P}_1}}}

\newcommand{\li}{\lambda_{i}}
\newcommand{\hli}{\hat \lambda_{i}}

\newcommand{\lina}{\mathrm{L}(\sfa)}
\newcommand{\linb}{\mathrm{L}(\sfb)}

\newcommand{\smin}{s_{\min}}
\newcommand{\vecm}{\mathrm{vec}}
\newcommand{\na}{\mathrm{N}_{\mathsf{A}}}
\newcommand{\cji}{Choi--Jamiołkowski isomorphism}

\newcommand{\rsin}{\| \rho - \sigma \|_\infty}
\renewcommand{\hrm}{\mathrm{Hr}}

\ifdefined\cleanversion

\else

\fi
\let\oldtableofcontents\tableofcontents
\renewcommand{\tableofcontents}{{\hypersetup{linkcolor=black}
\oldtableofcontents}}

\begin{document}
    
\title{\vspace*{-2cm} Fast and Sure-ious Quantum Process Tomography}      
\date{}

\author[1]{A. Afham}
\author[1]{Sayantan Sen}
\author[1,2]{Marco Tomamichel}

\affil[1]{Centre for Quantum Technologies,
National University of Singapore}

\affil[2]{Department of Electrical and Computer Engineering,
National University of Singapore}

\affil[ ]{
\texttt{afham@nus.edu.sg},
\texttt{sayantan789@gmail.com},
\texttt{marco.tomamichel@nus.edu.sg}
}

\maketitle
\vspace*{-0.5cm}

\begin{abstract}
Quantum process tomography (QPT) protocols are often designed by lifting quantum state tomography (QST) protocols through the Choi--Jamiołkowski isomorphism, at the cost of an extra step enforcing the trace-preserving (TP) constraint.
Existing approaches fall into two classes.
The folklore \textit{partial normalization}, which sandwiches the density estimate between inverse square roots of its input marginal, has a closed form but is used as a heuristic, its only analyses being asymptotic and losing a factor polynomial in the dimension.
Projection-based methods are principled but have no closed forms and scale poorly with dimension.
We resolve this dichotomy by showing that the partial normalization \emph{is} the exact projection onto the Choi states of quantum channels in fidelity (equivalently in Bures or purified distance).
Consequently, \emph{any} QST algorithm with guarantees in fidelity lifts to a QPT algorithm that inherits its sample complexity, through a single closed-form congruence that preserves the Choi rank and is computed in Kraus form without any eigendecomposition of Choi dimension.
We lift two single-copy protocols and prove that the non-adaptive one (\emph{Fidelity Projected Least Squares}) recovers the Choi rank and reaches infidelity $\eps$ from a number of channel uses proportional to $1/\eps$.
Numerically, for channels of low Choi rank, it improves on the state-of-the-art projected least squares in accuracy, and by orders of magnitude in the time and memory of the TP-regularization step and of the returned estimate.
\end{abstract}

\section{Introduction}

Quantum process tomography (QPT) is the task of {explicitly characterizing} an unknown quantum process (channel), and is central to the characterization of quantum devices~\cite{chuang1997prescription, Nielsen2010Quantum}. 
\textit{Ancilla-assisted} QPT~\cite{Leung2003Chois, d2001quantum} reduces QPT to quantum state tomography (QST) through the \cji~\cite{choi1975completely,jamiolkowski1972linear}, which maps completely positive (CP) maps bijectively onto the bipartite positive semidefinite (PSD) cone. However, there is an important caveat.
A QST protocol returns a sufficiently accurate density matrix estimate, but the Choi state of a quantum channel is not \emph{any} arbitrary (bipartite) density matrix, as {its \textit{input marginal} must additionally be maximally mixed}:
\begin{equation}
    \Phi : \lina \to \linb \text{ is a quantum channel} \Longleftrightarrow \fc(\Phi) \in \dnab \text{ and } \tr_\sfb[\cphi] = \frac1{d_\sfa} I_\sfa,
\end{equation} 
where $\fc(\Phi)$ is the normalized Choi state {of} $\Phi$ (see \cref{Sec:Prelims}).
We denote the set of Choi states of quantum channels from $\sfa$ to $\sfb$ as $\cab := \{\rho \in \dnab  : \tr_\sfb(\rho) = I_\sfa/ d_\sfa \} = \fc[\cpt]$, where $\cpt$ denotes the set of all quantum channels from $\sfa$ to $\sfb$.
This \textit{input marginal constraint} corresponds to $\Phi$ being a trace-preserving {(TP)} map {(see \cref{Fig:LiftingDiagram})}. 
Hence any \textit{QST to QPT lift} must include an additional \textit{TP regularization} step to produce the Choi state of a valid channel.

\begin{figure}[h]
    \centering
    \begin{minipage}[c]{0.42\textwidth}
        \centering
        \resizebox{\linewidth}{!}{%
        \begin{tikzpicture}[scale=1.0, >=stealth]
          \draw[thick, fill=gray!8] (-3.4, -2.5) rectangle (3.4, 2.5);
          \node[anchor=south east, font=\footnotesize] at (3.35, -2.5) {$\{S \in \mathrm{Hr}(\sfa),\ \operatorname{Tr}[S] = 1\}$};
          \draw[thick, fill=blue!10] (0, 0) circle (1.9);
          \node[font=\small] at (0, -1.1) {$\mathrm{D}(\mathsf{A})$};
          \coordinate (rho) at (1.25, 1.13);
          \begin{scope}
            \clip (0, 0) circle (1.9);
            \fill[red!30, opacity=0.5] plot[smooth cycle, tension=0.9] coordinates {
              ($(rho)+(0.69, 0.06)$) ($(rho)+(0.44, 0.56)$) ($(rho)+(-0.06, 0.69)$)
              ($(rho)+(-0.56, 0.38)$) ($(rho)+(-0.69, -0.19)$) ($(rho)+(-0.38, -0.60)$)
              ($(rho)+(0.19, -0.65)$) ($(rho)+(0.62, -0.38)$)};
          \end{scope}
          \draw[red!70!black, thick] plot[smooth cycle, tension=0.9] coordinates {
              ($(rho)+(0.69, 0.06)$) ($(rho)+(0.44, 0.56)$) ($(rho)+(-0.06, 0.69)$)
              ($(rho)+(-0.56, 0.38)$) ($(rho)+(-0.69, -0.19)$) ($(rho)+(-0.38, -0.60)$)
              ($(rho)+(0.19, -0.65)$) ($(rho)+(0.62, -0.38)$)};
          \fill[black] (rho) circle (2pt);
          \node[anchor=north east, font=\small, inner sep=1pt] at ($(rho)+(-0.04, -0.04)$) {$\rho$};
          \node[font=\scriptsize, align=center, text=red!70!black] (qst) at (-2.4, 2.05) {valid QST\\output region};
          \draw[->, thick, red!70!black] (qst.east) to[bend left=25] ($(rho)+(-0.35, 0.3)$);
        \end{tikzpicture}}
    \end{minipage}\hspace{0.04\textwidth}%
    \begin{minipage}[c]{0.42\textwidth}
        \centering
        \resizebox{\linewidth}{!}{%
        \begin{tikzpicture}[scale=1.0, >=stealth]
          \draw[thick, fill=gray!8] (-3.4, -2.5) rectangle (3.4, 2.5);
          \node[anchor=south east, font=\footnotesize] at (3.35, -2.5) {$\{S \in \mathrm{Hr}(\sfa \ot \sfb),\ \operatorname{Tr}[S] = 1\}$};
        \draw[thick, fill=blue!10] (0, 0) circle (1.9);
          \node[font=\small] at (0.0, -1.1) {$\mathrm{D}(\mathsf{A} \otimes \mathsf{B})$};
          \coordinate (dir) at (40:1);
          \draw[green!40!black, line width=1.1pt, dash pattern=on 0pt off 4pt, line cap=round] ($-3.2*(dir)$) -- ($-1.9*(dir)$);
          \draw[green!40!black, line width=1.1pt, dash pattern=on 0pt off 4pt, line cap=round] ($1.9*(dir)$) -- ($3.2*(dir)$);
          \draw[green!40!black, very thick, dash pattern=on 7pt off 4pt] ($-1.9*(dir)$) -- ($1.9*(dir)$);
        
          \node[green!40!black, anchor=south, font=\small, inner sep=2pt, rotate=40] at ($-0.75*(dir)$) {$\mathcal{C}_{\mathsf{A},\mathsf{B}}$};
          \coordinate (phi) at ($1.55*(dir)$);
          \begin{scope}
            \clip (0, 0) circle (1.9);
            \fill[red!30, opacity=0.5] plot[smooth cycle, tension=0.9] coordinates {
              ($(phi)+(0.69, 0.06)$) ($(phi)+(0.44, 0.56)$) ($(phi)+(-0.06, 0.69)$)
              ($(phi)+(-0.56, 0.38)$) ($(phi)+(-0.69, -0.19)$) ($(phi)+(-0.38, -0.60)$)
              ($(phi)+(0.19, -0.65)$) ($(phi)+(0.62, -0.38)$)};
          \end{scope}
          \draw[red!70!black, thick] plot[smooth cycle, tension=0.9] coordinates {
              ($(phi)+(0.69, 0.06)$) ($(phi)+(0.44, 0.56)$) ($(phi)+(-0.06, 0.69)$)
              ($(phi)+(-0.56, 0.38)$) ($(phi)+(-0.69, -0.19)$) ($(phi)+(-0.38, -0.60)$)
              ($(phi)+(0.19, -0.65)$) ($(phi)+(0.62, -0.38)$)};
          \draw[green!40!black, line width=2.4pt] ($0.82*(dir)$) -- ($1.9*(dir)$);
          \fill[black] (phi) circle (2pt);
          \node[anchor=north, font=\footnotesize, inner sep=1pt] at ($(phi)+(0.15, -0.16)$) {$\mathfrak{C}(\Phi)$};
          \node[font=\scriptsize, align=center, text=red!70!black] (qst) at (-2.4, 2.05) {valid QST\\output region};
          \draw[->, thick, red!70!black] (qst.east) to[bend left=25] ($(phi)+(-0.3, 0.35)$);
          \node[font=\scriptsize, align=center, text=green!40!black] (qpt) at (2.4, -1.25) {valid QPT \\ output region};
          \draw[->, thick, green!40!black] (qpt.north) to[bend left=30] ($1.25*(dir)+(-0.15, -0.15)$);
          \node[green!40!black, anchor=north east, font=\footnotesize, inner sep=1pt] at (-1.2, -2.1) {$\operatorname{Tr}_{\mathsf{B}}[S] = I_\mathsf{A}/d_\mathsf{A}$};
        \end{tikzpicture}}
    \end{minipage}
    \caption{From state to process tomography. 
    Left, QST: inside the affine space of Hermitian trace-one matrices (rectangle) lies the set of density matrices $\mathrm{D}(\mathsf{A})$ (disc). 
    A QST protocol returns an estimate in a small neighborhood (red outline) of the true state $\rho$ intersected with density matrices (shaded region). 
    Right, QPT on the bipartite space $\mathsf{A} \otimes \mathsf{B}$: the matrices whose $\mathsf{A}$-marginal is maximally mixed form {an affine subspace through the maximally mixed state $I_{\mathsf{AB}}/d_{\mathsf{AB}}$} (green line, dotted outside $\mathrm{D}(\mathsf{A} \otimes \mathsf{B})$), and its intersection with $\mathrm{D}(\mathsf{A} \otimes \mathsf{B})$ (dashed chord) is the set $\mathcal{C}_{\mathsf{A},\mathsf{B}}$ of Choi states of channels. 
    A lifted QPT protocol must regularize a QST output to $\cab$.}
    \label{Fig:LiftingDiagram}
\end{figure}

\paragraph{The TP regularization.}
Running a QST protocol on copies of $\cphi$, treated as a bipartite state on $\sfa \ot \sfb$, returns what we call the \textit{density estimate} $\hr \in \dnab$: accurate by the QST guarantee but blind to the marginal constraint, so in general $\tr_\sfb[\hr] \neq I_\sfa/d_\sfa$ and $\hr$ is the Choi state of a CP map that is not TP. 
Various methods exist for TP regularization, understood as a post-processing step that maps an unconstrained density estimate to a channel, and broadly, they can be classified into two categories which we term \textit{natural} normalizations and \textit{projections}.
(A third family avoids the step altogether by never producing an unconstrained estimate, imposing trace preservation inside the estimator: as a Lagrange multiplier in maximum-likelihood estimation~\cite{fiuravsek2001maximum, jezek2003quantum}, as a linear constraint of a convex program~\cite{shabani2011efficient, Kliesch2019Guaranteed}, through projected gradient descent~\cite{Knee2018Process}, or by a parametrization that is trace preserving by construction~\cite{Ahmed2023GradientDescent}; see \cref{Sec:RelatedWorks}.)
The natural approach involves the following \textit{partial normalization} operation: for $\rr \in \dnab$ with invertible input marginal $\rho_\sfa := \tr_\sfb[\rr]$, consider
\begin{equation} \label{Eq:PartialNormalization}
    \mathrm{N}_{\mathsf{A}}(\rr) := \frac1{d_\sfa} \left(\rho_\sfa \ihf \ot I_\sfb \right) \rr \left(\rho_\sfa \ihf \ot I_\sfb \right).
\end{equation}
The operation $\mathrm{N}_{\mathsf{A}}$ (scaled by $1/d_\sfa$, since we work with trace-normalized Choi states) maps \textit{any} such matrix $\rho$ into $\cab$ and is a non-commutative analogue of constructing a conditional distribution from a joint distribution, $p(b|a) = p(a,b)/p(a)$.

The partial normalization of \cref{Eq:PartialNormalization} is rather folklore, discovered and used repeatedly across quantum information. 
For example, it implements the correspondence between arbitrary bipartite states and channels~\cite{verstraete2003quantumchannels, arrighi2004quantum}; it is the \emph{standard sampler for random channels}~\cite[Eq.~(8)]{bruzda2009random}; in the causally neutral formulation of quantum theory~\cite{leifer2006quantum, Leifer2007Conditional, Leifer2013towards} it \textit{defines} the \textit{conditional state} of $\sfb$ given $\sfa$; and it has long appeared in process tomography itself, as the trace-preservation correction in maximum-likelihood ($R\rho R$) estimation~\cite{fiuravsek2001maximum, jezek2003quantum}, where it appears in the Lagrange multiplier associated with the TP-constraints, as the normalization step of the two-stage protocols of~\cite{Xiao2022Twostage, Xiao2023AAPT, Xiao2025TwoStage}, and as a closed-form heuristic approximation to the intractable Frobenius projection onto $\cab$ in~\cite{barbera2025boosting}.
In each case it enters either as a definition or as a heuristic. 
Without quantitative guarantees (such as optimality), closeness of the density estimate $\hr$ to $\cphi$ does not provide guarantees on the closeness of channel estimate $\mathrm{N}_{\mathsf{A}}(\hr)$ to $\cphi$.
The only error analysis of the operation in process tomography~\cite{Xiao2023AAPT, Xiao2025TwoStage}, that we are aware of, is an asymptotic bound on the expected Frobenius error of a specific linear-regression based QPT protocol, in which the correction step costs a factor of order $d_\sfa^{3/2}$, and its authors note that the step is not the nearest valid point.
By contrast, the guarantee of this article (\cref{Thm:Main}) holds at every sample size, for any density estimate, in fidelity, and the correction step costs a dimension-independent constant factor.

Projection-based regularizations follow a more principled approach as, by definition, they carry optimality.
\textcite{surawy2022projected} projected the {Projected Least Squares (PLS)} density estimate~\cite{guta2020fast} onto $\cab$ in Hilbert--Schmidt distance and \textcite{Mele2025Optimal} in diamond distance~\cite{kitaev1997quantum} via a semidefinite program (SDP).\footnote{\cite{Mele2025Optimal} studies sample complexity, for which this projection is only a final classical-postprocessing whose runtime is unimportant.}
We refer to the QST estimator of~\cite{guta2020fast} as PLS-QST, to the lifted QPT protocol of~\cite{surawy2022projected} as PLS-QPT, and to its TP regularization step, the \textit{hyperplane intersection projection}, as HIP; the resulting channel estimates are labeled PLS in the figures.
Thus a rigorous inheritance of guarantees requires the TP regularization to be \textit{stable}, that is, $\mathrm{d}(\mathrm{R}(\hr), \cphi) \leq c \mathrm{d}(\hr, \cphi)$ for every density estimate $\hr$ and some distance $\mathrm{d}$, with $c$ independent of the sample size and the dimension.
Optimality provides this at once, via triangle inequality, which is why projection-based regularizations transfer guarantees from QST to QPT algorithm.
A non-optimal regularization needs a separate stability analysis.
However, the projection step comes with a significant computational bottleneck: both projections are expensive, owing to the geometry of the set projected onto; \cref{Fig:Headline}~(left) compares their runtime with our \textit{fidelity projection} (discussed next).

\paragraph{A dichotomy and its resolution.}
Projections thus carry guarantees but are expensive, while the partial normalization is cheap but has lacked guarantees.
Ideally, one would have the \emph{best of both worlds}, a regularization that is both optimal and cheap to compute, and this is exactly what we achieve in this work.
Using a recent result on closed-forms for projections with respect to fidelity~\cite{afham2026projections}, this article strikes a unifying note: beyond a convenient way of restoring trace preservation, the partial normalization is the \textit{exact} projection onto $\cab$ with respect to fidelity (equivalently to Bures and purified distances). 
For $\rho \in \dnab$ with $\rho_\sfa \equiv \tr_\sfb[\rho] > 0$, define $\pc(\rho) := \argmax_{\sigma \in \cab} \rf(\rho, \sigma)$. Then
\begin{equation} \label{Eq:FidelityProjectionDef}
     \pc(\rho) = \na(\rho),
\end{equation}
where $\rf(\rho, \sigma) := \|\rho\hf \si\hf\|_1$ is the \textit{(root) fidelity}, and the equality is the main result of this article (\cref{Thm:Main}).
Any fidelity-based QST algorithm therefore lifts to a QPT algorithm at the cost of running it at accuracy $\eps/4$; the factor $4$ is a worst-case constant, and the overhead in our numerical experiments is close to $1$ (see \cref{Sec:Lifting}  and~\cref{Fig:Contraction}: in practice the lift essentially costs no extra accuracy).

\begin{figure}[t]
    \centering
    \includegraphics[width=\textwidth]{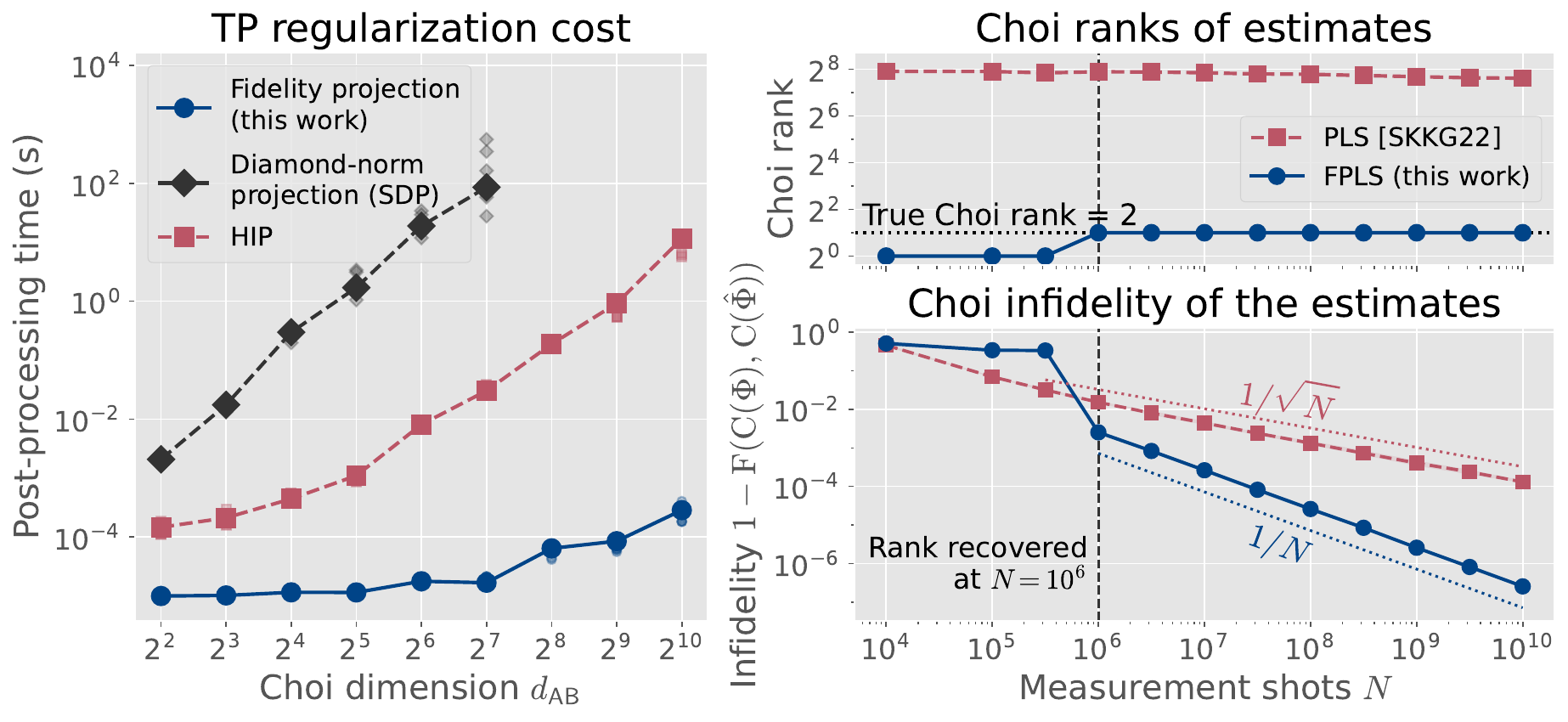}
    \caption{\textbf{(Left)} Runtime of different TP regularization methods against Choi dimension: the iterative \textit{Hyperplane Intersection Projection} (``HIP'')~\cite{surawy2022projected}, our fidelity projection~(\cref{Thm:FactorizedImplementation}), and the diamond-distance projection SDP~\cite{Mele2025Optimal}.
    All three regularize the \emph{same} instance, the QST output obtained from a Haar-random isometry.
    The SDP curve is the solver's own solve time, which understates its cost (\cref{App:HeadlineLadder}).
    \textbf{(Right)} Recovered Choi rank and Choi infidelity with the true channel, for Fidelity Projected Least Squares (FPLS, this work) against PLS~\cite{surawy2022projected}, both applied to the \emph{same} least-squares estimate of a rank-$2$ channel at $\dimab = 2^8$.
    The dashed line at $N = 10^6$ marks the (empirical) onset of \textit{exact rank recovery}, beyond which every FPLS trial returns the true rank $2$; the guaranteed onset of~\cref{Cor:BernsteinRank} is $3 \times 10^6$.
    PLS returns rank $178$--$240$ of $256$ throughout.
    The dotted guides carry the two \emph{predicted} rates $1/N$ and $1/\sqrt{N}$.
    \Cref{App:HeadlineImplementation} details the construction of both panels in full.}
    \label{Fig:Headline}
\end{figure}

Thus we have a TP regularization method that is (i) a projection, (ii) identical to the commonly used partial normalization, and (iii) fast to compute, by its closed form (see~\cref{Fig:Headline}). 
It is also exact: HIP and the SDP return their projection only up to a stopping tolerance, so their runtime depends on how tight that tolerance is set, as for any numerical approximation scheme.
We run HIP at its authors' default settings throughout the cost comparisons (\cref{App:HeadlineLadder}); the closed form has no such parameter and is evaluated to machine precision at a fixed cost.
We emphasize that it is not the formula, but its \textit{optimality}---in terms of it being a projection---that is new, which is what allows for transfer of guarantees from QST to QPT. 
A projection satisfies $\db(\hr, \pc[\hr]) \leq \db(\hr, \cphi)$, so by the triangle inequality the regularized estimate is at most twice as far from $\cphi$ as the non-regularized estimate~(see \cref{Sec:Lifting}).

\paragraph{Kraus- and Stinespring-representation perspectives.} 
Note that $\pc$ corresponds to the \textit{Choi--Bures} projection\footnote{The names record the representation and the metric: \textit{Choi--Bures} is a projection of Choi states in Bures distance, and \textit{Stinespring--Frobenius}, below, one of Stinespring operators in Frobenius distance.} of a CP map onto the set of {completely positive trace-preserving (CPTP)} maps. 
The projection also has compelling interpretations in terms of Kraus and Stinespring representations~\cite{kraus1971general,Stinespring1955Positive}.
A \emph{Kraus representation} $\{K_i\}$ of the CP map becomes $\{K_i R\ihf\}$ with $R := \sum_i K_i^\dagger K_i$, and a \emph{Stinespring operator} $K$ is replaced by its polar factor $V = K|K|\iv$, which is itself the Frobenius projection of $K$ onto the set of isometries~\cite[Theorem IX.7.2]{bhatiamatrixanalysis}. 
The TP regularization is thus both the \textit{normalization} of the Kraus operators of the density estimate and the \textit{Stinespring--Frobenius} projection onto the set of isometries (see also~\cite[Section 2.3]{afham2026projections}).

\paragraph{Rank preservation.}
Being a congruence by an invertible operator, the fidelity projection satisfies $\rank(\pc[\hr]) = \rank(\hr)$, and to the best of our knowledge it is the only TP regularization that achieves this (provided $\hr_\sfa$ is invertible).
The Choi rank is the minimal number of Kraus operators, and the channels of typical interest in quantum information theory have low Choi rank. 
Crucially, rank preservation lets a rank-selecting QST algorithm lift to a QPT algorithm returning a channel with correspondingly few Kraus operators, storable and applicable as $r$ operators instead of a $\dimab \times \dimab$ matrix, whereas a \textit{rank-inflating} regularizer destroys that structure in the last step.
Our non-adaptive protocol, \textit{Fidelity Projected Least Squares} (FPLS), is a rank-selecting QST algorithm lifted through the rank-preserving fidelity projection, and it returns the Choi rank exactly: in~\cref{Fig:Headline}~(top right), on a channel of Choi rank $2$ at $\dimab = 2^8$, it returns \emph{exactly} rank $2$ at every shot count above its onset, while PLS returns rank $178$--$240$ of $256$ throughout; rank \textit{selection} by the estimator is useless unless the regularizer \textit{preserves} it, and~\cref{Sec:Numerics} explains why HIP cannot preserve rank in general. 
On every instance reported here, HIP returns an estimate with high Choi rank. 

\begin{figure}[t]
    \centering
    \includegraphics[width=\textwidth]{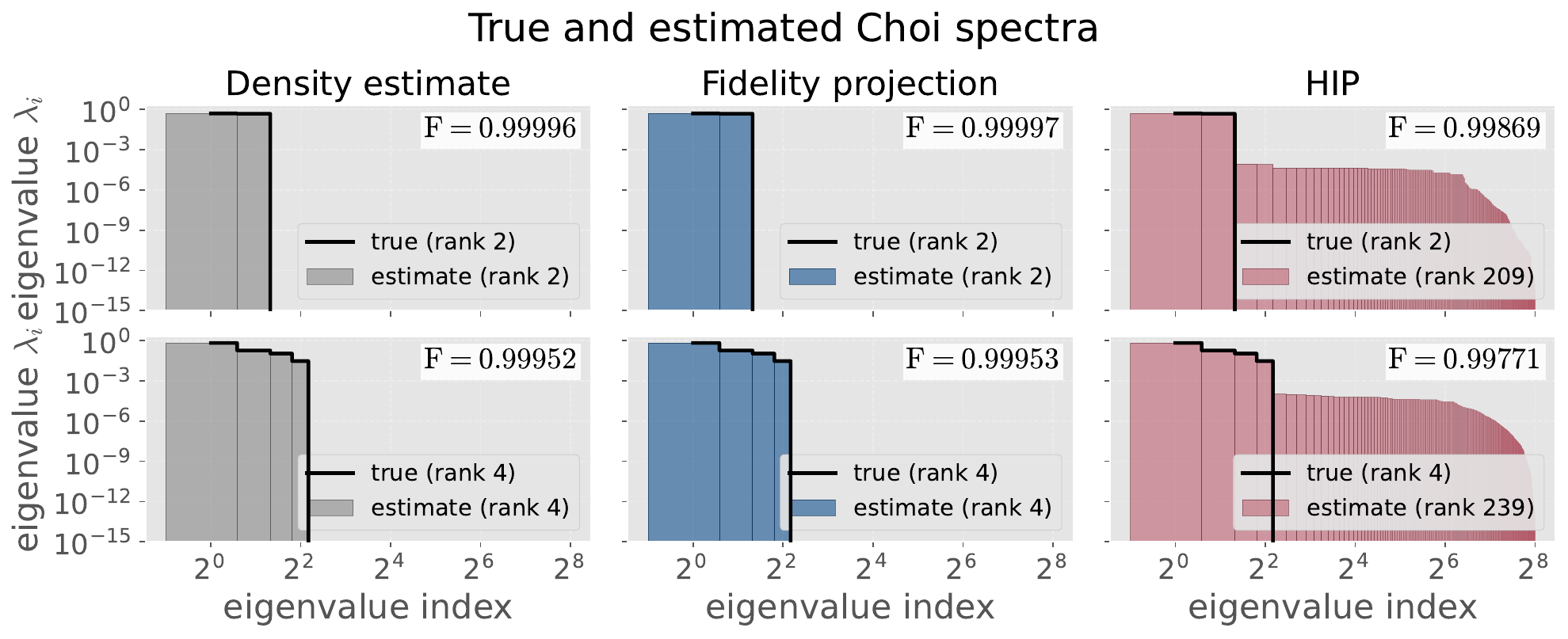}
    \caption{Rank preservation in practice, at $\dimab = 2^8$. 
    Top row: the rank-$2$ channel of~\cref{Fig:Headline}; bottom row: independent amplitude damping on two of the four qubits, Choi rank $4$; both at $N = 10^8$. 
    Left: ordered spectrum of the density estimate, \cref{Alg:SparseDensityEstimate} at $\tau = \beta_N/2$, which corresponds to a CP but not TP map. Middle and right: the channel estimates obtained by TP regularizing this \emph{same} density estimate with fidelity projection and with HIP. 
    The black line marks the true eigenvalues, the bars those of the estimate, and the inset $\rf$ is the fidelity of the estimate with the true Choi state.}
    \label{Fig:SpectrumMatching}
\end{figure}

\paragraph{Related Works.} \label{Sec:RelatedWorks}
Process tomography protocols are often characterized in terms of their access to the unknown quantum channels and the sample complexity. 
The unknown channels can be accessed either jointly, in the \textit{multi-copy} setting~\cite{Mele2025Optimal, chen2026quantum}, or one at a time, adaptively or not, in the \textit{single-copy} setting~\cite{surawy2022projected} we work in; matching bounds are now known in some of these settings~\cite{chen2026quantum, oufkir2023sample}.

Beyond projection-based regularization, QPT has been approached by gradient and maximum-likelihood methods~\cite{Knee2018Process, fiuravsek2001maximum, quiroga2023using}, Bayesian inference~\cite{Granade2016}, two-stage normalizations~\cite{barbera2025boosting, Xiao2023AAPT, Xiao2025TwoStage}, and compressed sensing~\cite{shabani2011efficient, roth2018recovering, Kliesch2019Guaranteed}, which compresses the measurement stage under a rank promise and recovers by convex programming; our contribution is orthogonal, making the \emph{post-processing} exact for any fidelity-based front end, with rank exploited a posteriori rather than promised.
The closest rigorous and practical alternative to PLS-QPT is~\cite{Kliesch2019Guaranteed}: single-copy, non-adaptive, with diamond-norm recovery from $O(r d_\sfa^2 \log d_\sfa)$ settings.
It differs in architecture, as physicality is imposed as a feasibility constraint inside an SDP, whereas~\cref{Thm:Main} lifts \emph{any} QST algorithm. 
Moreover, its guarantees require $4$-designs rather than the local Pauli measurements used here.
On the state side, truncating the spectrum of a least-squares estimate has been studied in~\textcite{butucea2015spectral}.
Our density estimate is of this kind; what we add is its analysis in fidelity, where recovering the rank exactly improves the rate from $1/\eps^2$ to $1/\eps$, and its composition with a TP regularization that preserves the rank.
Finally, if additional {structure is} assumed on the channel, computationally and statistically efficient algorithms have been designed: such as Pauli channels~\cite{flammia2020efficient,chen2022quantum,rouze2023efficient,fawzi2025lower,chen2025efficient}, unitary channels~\cite{acin2001optimal, bagan2004quantum,hayashi2006parallel,yang2020optimal,haah2023query}, and isometry channels~\cite{yoshida2026quantum}.

\paragraph{Novelty relative to prior uses of partial normalization.} \label{Sec:NoveltyPN}
The partial-normalization formula~\cref{Eq:PartialNormalization} is folklore in QPT: it appears in~\cite{jezek2003quantum, Xiao2022Twostage, Xiao2023AAPT, Xiao2025TwoStage, barbera2025boosting}, and in~\cite{barbera2025boosting}, the closest work to ours, it is reached as a heuristic approximation to the Frobenius projection onto $\cab$ by solving an optimization~\cite[Eq.~(12)]{barbera2025boosting}. 
In our companion paper, we show that is in fact the exact projection in Bures distance~\cite[Sec.~2.3]{afham2026projections}.
None of these previous works proves that projection property, which is essential for the inheritance of sample complexity we show in~\cref{Thm:Main}.
What this article contributes is the consequence for QPT: the generic QST-to-QPT lifting theorem~(\cref{Thm:Main}), the finite-sample analysis of FPLS-QPT, the rank-factorized Kraus implementation of the correction~(\cref{Thm:FactorizedImplementation}), and the empirical comparison of~\cref{Sec:Numerics}.
The speed and memory advantages we report come from this exactness in the fidelity geometry together with the factorized implementation, not from the formula alone.

\begin{table}[t]
\centering
\small
\setlength{\tabcolsep}{4pt}
\renewcommand{\arraystretch}{1.15}
\begin{tabular}{llcc|ccc|ccc}
\hline
 & & \multicolumn{2}{c|}{Infidelity $1 - \rf$} & \multicolumn{3}{c|}{TP regularization time} & \multicolumn{3}{c}{Storage of the Estimate} \\
$\dimab$ & true rank & PLS & FPLS & PLS & FPLS & ratio & PLS & FPLS & ratio \\
\hline
    $2^{6}$ & $2$ & {$8.1{\times}10^{-4}$} & {$\mathbf{4.2{\times}10^{-6}}$} & 3.6\,ms & \textbf{0.02\,ms} & $190\times$ & 66\,kB & \textbf{2\,kB} & $32\times$ \\
    $2^{10}$ & $2$ & {$4.1{\times}10^{-3}$} & {$\mathbf{1.6{\times}10^{-4}}$} & 6.9\,s & \textbf{0.29\,ms} & $2.4\!\times\!10^{4}$ & 16.8\,MB & \textbf{33\,kB} & $512\times$ \\
    $2^{6}$ & $\log_2 \dimab$ & {$1.0{\times}10^{-3}$} & {$\mathbf{4.2{\times}10^{-5}}$} & 3.3\,ms & \textbf{0.03\,ms} & {$120\times$} & 66\,kB & \textbf{6\,kB} & $11\times$ \\
    $2^{10}$ & $\log_2 \dimab$ & {$1.2{\times}10^{-2}$} & {$\mathbf{4.6{\times}10^{-3}}$} & 3.7\,s & \textbf{1.02\,ms} & $3.6\!\times\!10^{3}$ & 16.8\,MB & \textbf{164\,kB} & $102\times$ \\
\hline
\end{tabular}
\caption{PLS~\cite{surawy2022projected} against FPLS (this work) on the same $N = 10^8$ local Pauli shots, medians over three trials: Choi infidelity of the channel estimate, wall-clock of the TP regularization step (HIP against the factorized projection of~\cref{Thm:FactorizedImplementation}), and storage of the returned estimate (dense matrix against $r$ Kraus operators, in double precision).
The rank-$2$ channel is that of~\cref{Fig:Headline}; the other is a random channel of Choi rank $\log_2 \dimab$~\cite{bruzda2009random}. 
PLS-QPT is run with its authors' code at their default settings (\cref{App:HeadlineLadder}).
FPLS returns the true Choi rank in every case, PLS-QPT $63$ of $64$ and at least $1016$ of $1024$.}
\label{Tab:HeadToHead}
\end{table}

\paragraph{Organization.}
In \Cref{Sec:Prelims}, we define the notation. Then in \Cref{Sec:Lifting}, we develop the main results: the closed-form fidelity projection and its optimality, the lifting theorem with the resulting non-adaptive and adaptive QPT protocols and their sample complexities, and the choice of metric.
Later in \Cref{Sec:Numerics}, we discuss the numerical experiments, benchmarked throughout against PLS-QPT~\cite{surawy2022projected}.
The appendices review the non-adaptive QST algorithm we lift and prove relevant guarantees, describe the instances used in numerical experiments, and present the remaining proofs.

\section{Notation and preliminaries} \label{Sec:Prelims}
We now introduce our formal notations and collect some useful mathematical results. 
We denote finite-dimensional complex Hilbert spaces {by} $\sfa, \sfb${,} with dimensions denoted as $\dim(\sfa) \equiv d_\sfa, \dim(\sfb) \equiv d_\sfb$, respectively. We will also use the shorthand $\dimab \equiv d_\sfa d_\sfb$.
$[n] = \{1, 2, \ldots, n\}$ denotes the index set.
The {sets of linear operators}, Hermitian, positive semidefinite, and density matrices on $\sfa$ will be denoted as $\mathrm{L}(\sfa),{\hrm(\sfa)}, \posa, \dna$. 
We denote the set of completely positive maps and quantum channels from $\sfa$ to $\sfb$ as $\cpab$ and $\cpt$, respectively.
$I_\sfa$ denotes the identity matrix on $\sfa$ and $\mathrm{Id}_\sfa$ denotes the identity channel on $\mathrm{L}(\sfa)$. 

For PSD matrices $P,Q \in \posa$, the \textit{(square-root) fidelity} between them is defined as $\rf(P, Q) := \| P \hf Q \hf \|_1$~\cite{uhlmann1976transition, jozsa1994fidelity}.
Fidelity is related to two important distances in quantum information: the \emph{Bures distance} and the \emph{purified distance}~\cite{bures1969extension, tomamichel2013thesis}.
For PSD matrices $P,Q \in \posa$ and density matrices $\rho, \sigma \in \dna$, they are defined as
\begin{equation}
    \mathrm{d_B}(P,Q) := \sqrt{\tr[P+Q] - 2 \rf(P,Q) }, \qaq \dpu(\rho, \si) = \sqrt{1 - \rf(\rho, \si)^2}.
\end{equation}
For $A \in \mathrm{L}(\sfa)$ the \textit{trace norm} is $\|A\|_1 := \tr[\sqrt{A^\dagger A}]$ and the \textit{Frobenius norm} is $\|A\|_2 := \sqrt{\tr[A^\dagger A]}$, giving the \textit{trace distance} $\frac12 \|\rho - \sigma\|_1$ and the \textit{Frobenius distance} $\|\rho - \sigma\|_2$ between states.
The Fuchs--van de Graaf inequalities~\cite{fuchs1999cryptographic} relate the Bures and purified distances to the trace distance:
\begin{equation} \label{Eq:FuchsVan}
    \frac12\db^2(\rr, \si) := 1 - \rf(\rr, \si) \leq \frac12 \|\rho - \si \|_1 \leq \sqrt{1 - \rf(\rr, \si)^2} =: \dpu(\rr, \si),
\end{equation}
for all $\rho, \si \in \dna$.
The \textit{(normalized) Choi representation} of a CP map $\Te \in \cpab$ is defined as $\fc(\Te) := (\mathrm{Id}_\sfa \ot \Te)(\kb{\omega})$, where $|\omega \ra = \frac{1}{\sqrt{d_\sfa}} \sum_{i}^{d_\sfa} |i,i \ra $ is the canonical maximally entangled state. 
The Choi representation bijectively maps $\cpab$ to $\posab$, allowing us to write $\fc[\cpab] = \posab$.
We write $\tr_\sfb$ for the partial trace over $\sfb$, and abbreviate the marginals of a bipartite operator $P \in \posab$ by subscripts: $P_\sfa := \tr_\sfb[P]$ and $P_\sfb := \tr_\sfa[P]$.
Throughout, $\rho \in \cab$ denotes the Choi state of the unknown channel and $\hr \in \dnab$ an estimate of it, say the output of a QST algorithm; we call $\hr$ the \textit{density estimate} and stress that it is an arbitrary bipartite state, which need not lie in $\cab$, its marginal $\hr_\sfa$ being in general different from $I_\sfa/d_\sfa$.
Specific estimates will carry subscripts, such as the least-squares estimate $\hrl$ of~\cref{Eq:LSEstimate}.

\section{Lifting QST algorithms to QPT algorithms} \label{Sec:Lifting}
In this section, we present our approach of lifting QST algorithms to QPT algorithms. We first formally identify the partial-normalization based TP regularization as a fidelity projection, and discuss its implications for QPT. 
We then lift two single-copy QST algorithms, the non-adaptive \textit{Projected Least Squares} QST algorithm by~\textcite{guta2020fast} and the adaptive QST algorithm by~\textcite{chen2023does}. 

Our lifting approach can be summarized as a composition of the partial normalization map and the QST algorithm: $\mathcal{A}_{\mathrm{QPT}} := \mathrm{N}_{\sfa} \circ \mathcal{A}_{\mathrm{QST}}$.
The lifted QPT protocol inherits the sample complexity of \textit{any} QST protocol whose guarantees are given in fidelity, up to an overhead (in the number of samples) by a constant factor.
The following theorem identifies the partial normalization as the fidelity projection onto $\cab$, bounds the loss of projecting, and lifts QST guarantees to QPT.
\begin{tbox}
    \begin{theorem}[TP regularization via fidelity projection] \label{Thm:Main}
    Let $\rho \in \dnab$ with $\rho_\sfa  > 0$. Define
    \begin{equation} \label{Eq:ClosedFormProj}
        \na(\rho) := \frac1{d_\sfa} \Lt(\rho_\sfa \ihf \ot I_\sfb\Rt) \rho \Lt(\rho_\sfa \ihf \ot I_\sfb\Rt).
    \end{equation}
    \begin{enumerate}
        \item \emph{Projection.} $\na(\rho)$ is the unique maximizer of $\rf(\rho, \sigma)$ over $\sigma \in \cab$, and equivalently the unique minimizer of $\db(\rho, \sigma)$ and of $\dpu(\rho, \sigma)$. We call it the \emph{fidelity projection} of $\rho$ onto $\cab$ and write $\pc(\rho) := \na(\rho)$.
        \item \emph{Stability.} Let $\rho \in \cab$ be the Choi state of a channel and let $\hr \in \dnab$ be bipartite state with $\hr_\sfa > 0$. Then
        \begin{equation} \label{Eq:ProjLoss}
            \db(\rho, \pc(\hr)) \leq 2\, \db(\rho, \hr) \qaq 1 - \rf(\rho, \pc(\hr)) \leq 4 \left(1 - \rf(\rho, \hr)\right).
        \end{equation}
        \item \emph{Lifting.} Let $\eps, \delta \in (0, 1)$, and let $\mathcal{A}_\mathrm{QST}$ be a QST algorithm that, given $N(\eps)$ copies of an unknown state, returns an estimate of fidelity at least $1 - \eps$ with that state, with probability at least $1 - \delta$.
        Let $\Phi \in \cpt$ be an unknown channel, and let $\hr$ be the output of $\mathcal{A}_\mathrm{QST}$ run at accuracy $\eps/4$ on $N(\eps/4)$ copies of its Choi state $\cphi$.
        Then, with probability at least $1 - \delta$, the channel $\hat\Phi$ with Choi state $\fc(\hat\Phi) := \pc(\hr)$ satisfies
        \begin{equation}
            \rf(\cphi, \fc(\hat\Phi)) \geq 1 - \eps .
        \end{equation}
        That is, $\mathcal{A}_\mathrm{QPT} := \pc \circ \mathcal{A}_\mathrm{QST}$ is a QPT algorithm of accuracy $\eps$ and confidence $1 - \delta$ that uses $N(\eps/4)$ copies.
    \end{enumerate}
    \end{theorem}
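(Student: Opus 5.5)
The plan is to prove the three items in order. Item 1 carries the real content and goes through Uhlmann's theorem in a Stinespring parametrization. Items 2 and 3 then follow from metric properties of the Bures distance.

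\emph{Projection.} I would parametrize $\cab$ by isometries. Fix an environment $\mathsf E$ with $\dim \mathsf E \ge \dimab$. Every $\sigma \in \cab$ has a purification of the form $(I_\sfa \ot V)|\omega\ra$, where $V : \sfa \to \sfb \ot \mathsf E$ is an isometry (a Stinespring isometry of the channel). Conversely, every such isometry gives a purification of some element of $\cab$. Here $V$ is applied to the second tensor factor via the transpose trick, so that the first factor stays the reference system $\sfa$ of the Choi state. Likewise, every purification of a general $\rho \in \dnab$ can be written as $(I_\sfa \ot X)|\omega\ra$ for some linear map $X : \sfa \to \sfb \ot \mathsf E$. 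Since the reduced state on the reference system is $\rho_\sfa$, the map $X$ satisfies $X^\dagger X = d_\sfa\, \rho_\sfa^{T}$, with the transpose taken in the canonical basis.

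By Uhlmann's theorem, with the purification of $\rho$ held fixed,
\begin{equation*}
\max_{\sigma \in \cab} \rf(\rho,\sigma) = \max_{V} \big|\la \omega| (I \ot V^\dagger X)|\omega\ra\big| = \frac{1}{d_\sfa}\max_V |\tr[V^\dagger X]| = \frac{1}{d_\sfa}\|X\|_1 = \frac{\tr[\rho_\sfa^{1/2}]}{\sqrt{d_\sfa}} .
\end{equation*}
The third equality holds because the maximum of $|\tr[V^\dagger X]|$ over isometries equals $\|X\|_1$. This maximum is attained exactly at the polar factor $V_\star = X|X|^{-1}$, which is unique because $\rho_\sfa > 0$ makes $X$ injective.

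Next I would compute the Choi state of the channel with isometry $V_\star$. Using $|X|^{-1} = (d_\sfa \rho_\sfa^{T})^{-1/2}$ and transposing back onto the reference system, this Choi state is $\frac{1}{d_\sfa}(\rho_\sfa^{-1/2} \ot I)\rho(\rho_\sfa^{-1/2} \ot I) = \na(\rho)$. This step is the one needing care with transposes and normalizations.

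For uniqueness, suppose $\sigma$ attains the maximum. Then some purification of $\sigma$ attains the Uhlmann overlap with the fixed purification of $\rho$, so its isometry must equal $V_\star$, and hence $\sigma = \na(\rho)$. The claims for $\db$ and $\dpu$ follow because both are strictly decreasing functions of $\rf$ when the traces are fixed to $1$.

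\emph{Stability.} Since $\rho \in \cab$, the projection property gives $\db(\hr, \pc(\hr)) \le \db(\hr, \rho)$. The Bures distance is a metric, so the triangle inequality yields
\begin{equation*}
\db(\rho,\pc(\hr)) \le \db(\rho,\hr) + \db(\hr,\pc(\hr)) \le 2\,\db(\rho,\hr).
\end{equation*}
Squaring and using $\db^2 = 2(1-\rf)$ for states gives the factor $4$ in the infidelity bound.

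\emph{Lifting.} With probability at least $1-\delta$, the QST output satisfies $1 - \rf(\cphi, \hr) \le \eps/4$. Item 2 then gives $1 - \rf(\cphi, \pc(\hr)) \le \eps$. By item 1, $\pc(\hr) \in \cab$, so it is the Choi state of a genuine channel $\hat\Phi$.

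The main obstacle I expect is the standing hypothesis $\hr_\sfa > 0$, which item 3 needs but a generic QST algorithm does not guarantee. I would first try to argue that high fidelity already forces it: $\rf(\cphi, \hr) \ge 1 - \eps/4$ controls the marginal fidelity $\rf(I/d_\sfa, \hr_\sfa)$ by monotonicity under partial trace, but by itself this does not exclude a zero eigenvalue. Failing that, I would either state the condition as an assumption or mix $\hr$ with an arbitrarily small amount of $I/\dimab$. That mixing makes the marginal invertible and lowers the fidelity by a vanishing amount, which is absorbed by slightly tightening the accuracy at which the QST algorithm is run.
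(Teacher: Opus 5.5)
Your proposal is correct. Items 2 and 3 coincide with the paper's proof: optimality against the feasible point $\rho \in \cab$, the triangle inequality for $\db$, then $\db^2 = 2(1-\rf)$, and running the QST routine at accuracy $\eps/4$.

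Item 1 is where you take a different route. The paper only checks $\tr_\sfb[\na(\rho)] = I_\sfa/d_\sfa$ and imports optimality from the fixed-marginal case of \cite[Corollary~14]{afham2026projections}; you prove it directly. Your key observation is that the purifications of elements of $\cab$ are exactly the vectors $(I_\sfa \ot V)\ket{\omega}$ with $V$ an isometry. This collapses the double maximization in Uhlmann's theorem to $\max_V |\tr[V^\dagger X]|/d_\sfa$, which the polar factor solves. Your transpose bookkeeping also checks out: $(I_\sfa \ot X|X|^{-1})\ket{\omega} = \bigl((d_\sfa \rho_\sfa)^{-1/2} \ot I\bigr)(I_\sfa \ot X)\ket{\omega}$, and tracing out $\mathsf E$ gives $\na(\rho)$.

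Compared with the citation, your route buys several things. Membership in $\cab$ is automatic and the argument is self-contained. It also yields the optimal value $\max_{\sigma \in \cab} \rf(\rho,\sigma) = \tr[\rho_\sfa^{1/2}]/\sqrt{d_\sfa}$, which the paper uses in \cref{Eq:DistanceToCPTP} without derivation. Finally, it turns the Stinespring--Frobenius (polar factor) interpretation from the introduction into the actual proof.

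One small correction: because you maximize an absolute value, the optimal isometry is unique only up to a global phase. Equality in $|\tr[U|X|]| \leq \tr|X|$ with $|X| > 0$ and $\|U\|_\infty \leq 1$ forces $U = V^\dagger V_\star$ to be a multiple $e^{i\phi} I$, so $V$ is a phase multiple of $V_\star$. This is harmless, since the phase drops out of the Choi state.

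On the invertibility issue you flag: the paper does not derive $\hr_\sfa > 0$ inside the theorem either; it is a standing assumption of the section (\cref{Rem:Singular}). However, your claim that the marginal fidelity bound ``by itself'' cannot exclude a zero eigenvalue holds only for moderate $\eps$. If $\rank(\hr_\sfa) \leq d_\sfa - 1$, Cauchy--Schwarz gives $\rf(I_\sfa/d_\sfa, \hr_\sfa) = \tr[\hr_\sfa^{1/2}]/\sqrt{d_\sfa} \leq \sqrt{1 - 1/d_\sfa} < 1 - 1/(2d_\sfa)$. Hence accuracy $\eps/4 < 1/(2 d_\sfa)$ already forces invertibility. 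This is the conclusion of \cref{Lem:AutoInvertible}, and the reason the paper's corollaries assume $\eps < 2/d_\sfa$.

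Your mixing fallback is the weaker fix. It costs either an additive $4t$ in infidelity or a change of the copy count from $N(\eps/4)$ to $N(\eps/4 - t)$. It also makes the estimate full rank, which forfeits the rank preservation the paper relies on.
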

\end{tbox}
\begin{proof}
    (1) Note that the marginal is $\tr_\sfb[\na(\rho)] = \frac1{d_\sfa}\rho_\sfa\ihf \rho_\sfa \rho_\sfa\ihf = I_\sfa/d_\sfa$, {which} implies that $\na(\rho) \in \cab$. The optimality guarantee follows from the fixed-marginal specialization of~\cite[Corollary 14]{afham2026projections}. Moreover, since $\rf$, $\db$ and $\dpu$ are monotone transforms of one another on states, the three optimizations have the same solution.
    (2) Since $\rho \in \cab$ is feasible, the optimality of the projection gives $\db(\hr, \pc(\hr)) \leq \db(\hr, \rho)$. 
    Applying the triangle inequality, we have $\db(\rho, \pc(\hr)) \leq \db(\rho, \hr) + \db(\hr, \pc(\hr)) \leq 2\,\db(\rho, \hr)$. 
    Using the fact that $\db^2 = 2(1 - \rf)$, we conclude that $1 - \rf(\rho, \pc(\hr)) \leq 4 \left(1 - \rf(\rho, \hr)\right)$.
    (3) Running $\mathcal{A}_\mathrm{QST}$ at accuracy $\eps/4$ ensures that $1 - \rf(\rho, \hr) \leq \eps/4$. 
    Applying (2) gives the result.
\end{proof}

The hypothesis $\rho_\sfa > 0$ constrains only the marginal and not the bipartite state, which may be rank deficient; an estimate whose marginal is rank deficient can moreover always be extended, though not uniquely, to one with a full-rank marginal, which we do not pursue here for the reasons given in~\cref{Rem:Singular}.
Note that the overhead of lifting QST protocols to QPT protocols is a change of \textit{accuracy parameter}, not a fixed multiple of the sample count. In particular, the cost of lifting is $N(\eps/4)/N(\eps)$, a factor $4$ for an algorithm scaling as $1/\eps$ and $16$ for one scaling as $1/\eps^2$.
The constants $2$ and $4$ are for the worst case; in practice the projection is far less lossy, since by {the} data processing inequality {(DPI)}, we have $\rf(\rho_\sfa, \hat \rho_\sfa) \geq \rf(\rho, \hat \rho)$, and $\rho_\sfa = I_\sfa/d_\sfa$ for $\rho \in \cab$, so an accurate estimate already has a marginal close to $I_\sfa/d_\sfa$.
In practice the lift therefore costs essentially no extra accuracy: over $870$ tomography estimates of channels from eight families, with $\dimab$ from $2^2$ to $2^{10}$, the Bures distance to the true Choi state after the projection is at most $1\%$ above the distance before it, against the worst case of a factor $2$, and for low-rank channels it is often well below (\cref{App:ContractionNumerics}, \cref{Fig:Contraction}; on the calibrated instances of~\cref{App:SyntheticInstances} it never increased).
The reason is that the worst case of~\cref{Eq:ProjLoss} requires states to be far from $\cab$, whereas a tomography estimate is close to the Choi state of the true channel, and hence to $\cab$, by construction.

\begin{remark}[Invertibility of the marginal] \label{Rem:Singular}
    If (the density estimate) $\hat \rho_\sfa$ is singular, the inverse in~\cref{Eq:ClosedFormProj} is taken on the support $\operatorname{supp}(\hat \rho_\sfa)$; the result is a CP map that is trace preserving only on $\operatorname{supp}(\hr_\sfa)$ and whose extension to a channel is not unique, so all statements in this section assume $\hat\rho_\sfa > 0$ (one explicit extension is given in~\cref{App:Invertibility}).
    For tomography protocols, however, the assumption holds automatically at reasonable accuracies: the marginal of an accurate estimate is close to $I_\sfa/d_\sfa$ by the bound above, and any $\hr$ with $\rf(\rho, \hr) > 1 - 1/(2 d_\sfa)$ has an invertible marginal (see \cref{Lem:AutoInvertible}).
    Consequently, for sufficiently large $N$, the estimate has an invertible $\sfa$-marginal with high probability.
\end{remark}
\paragraph{On the choice of metric.}
Tomographically, any loss function serves; however operational interpretations differ. For concreteness, we consider three metrics.
The \textbf{Frobenius} projection onto $\cab$ is non-expansive, which is what alternating schemes such as HIP rely on. However, it has several drawbacks: it has no closed form, its iterations scale poorly with dimension (see \cref{Fig:Headline}, left), and the distance violates the data-processing inequality~\cite{ozawa2000entanglement}.
The \textbf{diamond} distance is the operational worst case and the right figure of merit for a channel used inside a larger circuit, but its projection must be computed via a semidefinite program, is not unique, and is computationally prohibitive (see \cref{Fig:Headline}, left). 
In contrast, the \textbf{Bures/fidelity} projection recovers the {widely used} partial normalization (see~\cref{Eq:PartialNormalization} and \cref{Fig:Headline} (left)). Additionally, {of the three, it is the only one whose projection has a closed form}, is inexpensive to compute in \textit{rank-factorized form} (\cref{Thm:FactorizedImplementation}), and {is} rank preserving; the Choi fidelity is moreover related to the average and entanglement fidelities of the channel~\cite{horodecki1999general,gilchrist2005distance,nielsen2002simple,Watrous2018Theory}.
What it gives up is the worst-case reading: a fidelity guarantee converts to a diamond-norm one only through $\|\Phi - \hat\Phi\|_\diamond \leq d_\sfa \|\cphi - \fc(\hat\Phi)\|_1 \leq 2 d_\sfa \sqrt{2(1 - \rf(\cphi, \fc(\hat\Phi)))}$~\cite{Watrous2018Theory}, so our guarantees are complementary to the diamond-distance protocols of~\textcite{Mele2025Optimal}.

We now state the two QST algorithms we lift and the rates they yield: for concreteness, we take a non-adaptive single-copy QST algorithm~\cite{guta2020fast} and an adaptive one~\cite{chen2023does}.\footnote{Our approach lifts multi-copy QST algorithms in the same way, as long as accuracy is measured in fidelity.}
The first algorithm is reviewed in full in~\cref{Sec:QSTAlgorithms}, for the second one, we refer to~\textcite{chen2023does}. In brief, they operate as follows: 
\begin{description}[style=nextline]
    \item[$\mathcal{A}_\mathrm{QST}^\mathrm{NA}$ (Projected least squares method~\cite{guta2020fast}).] First we measure every copy with the local Pauli measurement (all settings are fixed in advance), then we form the closed-form least-squares estimate $\hrl$ from the outcome frequencies. Finally, we convert it to a density matrix by the thresholded eigenvalue truncation of~\cite{surawy2022projected} (\cref{Alg:SparseDensityEstimate}). 
    For an unknown state $\rho$ of rank at most $r$, it takes $N = O\bigl(r^2 d_\sfa^{1.6} \log(d_\sfa/\delta)/\eps^2\bigr)$ copies of $\rho$, and returns $\hat{\rho}$ such that with probability at least $1 - \delta$, $\rf (\rho, \hat{\rho}) \geq 1 - \eps$ (see \cite[Theorem~1]{guta2020fast}, \cref{Eq:GutaQST}). 
    \item[$\mathcal{A}_\mathrm{QST}^\mathrm{A}$ (Successive eigenspace projection method~\cite{chen2023does}).] 
    Over $t = O(\log(r/\eps))$ rounds, we first measure a fresh batch of copies of the unknown state $\rho$ {(of rank at most $r$) using} the uniform POVM projected onto the current residual subspace, then we learn the eigenspace of eigenvalues above the round's threshold, project it out, and recurse; finally we assemble the learned blocks and normalize.
    Given a rank bound $r$ of the unknown state $\rho$, it takes $N = O\bigl(r^2 (d_\sfa + \log(1/\delta)) \log^2(r/\eps)/\eps\bigr)$ copies of $\rho$ and returns a state $\hat{\rho}$ such that with probability $1 - \delta$, $\rf (\rho, \hat{\rho}) \geq 1 - \eps$ holds~\cite[Theorem~6.1]{chen2023does}.\footnote{Their theorem is stated for the squared fidelity, which implies the same bound for $\rf$ since $\rf \geq \rf^2$ on states.}
\end{description}

The lifted QPT protocols appear in algorithm form as~\cref{Alg:FPLSNonAdaptive,alg:FPLSadaptive}, whose QST stages are steps 1--3 and stage 1, respectively.
For simplicity we state the non-adaptive results in the multi-qubit setting, so that $\sfa$ and $\sfb$ correspond to $n_\sfa$- and $n_\sfb$-qubit spaces, $n := n_\sfa + n_\sfb = \log_2 \dimab$, and the local Pauli measurement $\mathcal P_n$ of~\cref{Alg:FPLSNonAdaptive} is defined; the lifting theorem and the analysis behind~\cref{Cor:NonAdaptSampleComp} hold for arbitrary finite-dimensional $\sfa$ and $\sfb$, with the ensemble factor $g(\dimab) = \dimab^{\log_2 3}$ of~\cref{Eq:LSConcentration} replaced by that of whichever measurement ensemble is used.

\subsection{Non-adaptive single-copy QPT} \label{Sec:NonAdaptQPT}
We begin with the \textit{non-adaptive single-copy} protocol, \textit{Fidelity Projected Least Squares} (FPLS) QPT, stated as~\cref{Alg:FPLSNonAdaptive}: it is the lift of PLS-QST~\cite{guta2020fast, surawy2022projected} applied to the Choi state of the unknown channel, with one change inside the QST stage and one at the end (summarized in~\cref{Tab:PLSvsFPLS}).
The measurement and least-squares steps of this protocol are those of PLS-QST and remain unchanged, so all data may be collected before any post-processing.
The algorithm is stated as consuming copies of the Choi state $\cphi$: each of which is obtained by preparing the maximally entangled state $\ket{\omega}$ on $\sfa \ot \sfa'$ and applying $\Phi$ on the $\sfa'$.
The least-squares estimate is Hermitian and, for certain POVMs, such as the local Pauli POVM used here, of unit trace (\cref{App:LSEstimate}). 
However, in general, it need not be PSD, so it is not a state and~\cref{Thm:Main} does not apply to it. 
Thus the protocol must first map $\hrl$ to a density matrix, and a threshold is what that step is built on.

\begin{table}[t]
\centering
\small
{\begin{tabular}{p{0.1\textwidth} p{0.15\textwidth} p{0.3\textwidth} p{0.3\textwidth}}
\hline
 & (i) LS estimate & (ii) Density estimate (\cref{Alg:SparseDensityEstimate} at threshold $\tau$) & (iii) Channel estimate via TP regularization \\
\hline
PLS-QPT \cite{surawy2022projected} & $\hrl$ (\cref{Eq:LSEstimate}) & $\tau = -\lmin(\hrl)$, below the noise level; the rank is left to the noise & HIP: iterative Frobenius quasi-projection, not rank preserving \\
FPLS-QPT (this work) & identical & $\tau = \beta_N$ (\cref{Eq:BernsteinThreshold}), above the noise level; exact rank beyond an onset & Fidelity projection $\na$ (\cref{Thm:Main}): closed form, rank preserving \\
\hline
\end{tabular}}
\caption{The three stages of PLS-QPT and FPLS-QPT. Both form the same least-squares estimate from the same local Pauli-measurement data and differ in the threshold of the density estimate and in the TP regularization.
In our numerical experiments the density estimate of PLS-QPT is computed by its authors' code, whose routine departs from~\cref{Alg:SparseDensityEstimate} in how it weights the surviving eigenvalues (\cref{App:HeadlineLadder}).}
\label{Tab:PLSvsFPLS}
\end{table}

Informally, one diagonalizes $\hrl$, discards the eigenvalues at or below the threshold $\tau \geq 0$, shifts the surviving ones up by $\tau$, and restores unit trace by adjusting the surviving (non-zero) eigenvalues (\cref{Alg:SparseDensityEstimate}); the discarded directions are where the rank of the density estimate is decided, which is why the choice of $\tau$ matters here.
We refer to~\textcite{guta2020fast} and~\textcite{surawy2022projected} for PLS-QST and PLS-QPT in full, and to~\cref{App:ProjectionStep} for the properties of this step that our proofs use.

In stage (ii) of~\cref{Tab:PLSvsFPLS}, we replace the sub-noise thresholds of PLS-QST, $\tau = 0$ or $\tau = -\lmin(\hrl)$, by the \emph{Bernstein threshold} $\tau = \beta_N$, the operator-norm concentration radius of the least-squares estimate.
The reason is rank: a \textit{supra-noise} threshold recovers the Choi rank exactly once $N$ exceeds an onset defined by the smallest nonzero Choi eigenvalue (see \cref{App:SupraNoise,App:BernsteinThreshold}), and an estimate of the right rank is closer to the true {state} in fidelity by a quadratic factor rather than a linear factor of its operator-norm error (\cref{App:Rates}), which is where the $1/\eps$ rate of~\cref{Cor:NonAdaptSampleComp} comes from.
PLS-QPT had no need of such a rule: its guarantees are stated in (trace-distance) accuracy alone, for which the sub-noise thresholds ($\tau \leq \| \rho - \hrl\|_\infty$) already inherit the least-squares rate, and its TP regularization by HIP does not preserve the rank of the density estimate in any case, so nothing would have been gained by recovering it at the QST stage.
In stage (iii), the closed form of~\cref{Thm:Main} replaces HIP.

\begin{algorithm}
    \caption{Fidelity Projected Least Squares (FPLS) QPT --- non-adaptive, single-copy}
    \label{Alg:FPLSNonAdaptive}
    
    \SetKwInOut{Input}{Input}
    \SetKwInOut{Output}{Output}
    \DontPrintSemicolon
    \Input{$N$ copies of the Choi state $\rho \equiv \cphi$ of the unknown channel $\Phi$, and confidence $\delta$; the Bernstein-threshold ($\tau = \beta_N$, \cref{App:BernsteinThreshold}) density estimator (\cref{Alg:SparseDensityEstimate}). The threshold is set from $N$, $\dimab$ and $\delta$ alone, before the data are seen.}
    \Output{A channel estimate $\hat \Phi \in \cpt$ whose accuracy $\eps$ at confidence $1 - \delta$ is given by~\cref{Cor:NonAdaptSampleComp}.}
    \tcp{1. Measure (non-adaptive; all settings fixed in advance)}
    \For{each of the $N$ copies of $\rho$}{
        Measure the copy with the local Pauli POVM $\mathcal P_n$ on $\sfa \ot \sfb$, $n = n_\sfa + n_\sfb = \log_2 \dimab$ \;
        }
        Collect the normalized outcome frequencies $c = (N_{(\vo, \vs)}/N)$. \;
        
        \tcp{2. Least-squares estimate}
        $\hrl \gets \sum_{(\vo, \vs)} c_{(\vo, \vs)} \bigotimes_{i \in [n]} \left(3 \kb{o_i, s_i} - I_2\right)$ \tcc*{\cref{Eq:LSEstimate}}
        
        \tcp{3. Density estimate: {CP with unit trace (CP1)}}
        $\hr \gets \texttt{projCP}(\hrl)$ \tcc*{\cref{Alg:SparseDensityEstimate} at $\tau = \beta_N$}
        
        \tcp{4. TP regularization: the fidelity projection}
        $\hr_\sfa \gets \tr_\sfb[\hr]$ \tcc*{if singular, see~\cref{Rem:Singular}}
        $\hrf \gets \frac1{d_\sfa}\left(\hr_\sfa \ihf \ot I_\sfb\right) \hr \left(\hr_\sfa \ihf \ot I_\sfb \right)$ \tcc*{computed cheaply via~\cref{Thm:FactorizedImplementation}}
        
        \Return $\hat \Phi = \fc \iv [\hrf]$ \;
    \end{algorithm}

    \begin{tbox}
        \begin{restatable}[FPLS-QPT: slow and fast regimes]{theorem}{NonAdaptQPTThm}\label{Cor:NonAdaptSampleComp}
            Let $\Phi \in \cpt$ have Choi rank $r$ and smallest nonzero Choi eigenvalue $\lambda_r$, and let $\delta \in (0,1)$ and $\eps \in (0, 2/d_\sfa)$.
            \Cref{Alg:FPLSNonAdaptive} at the Bernstein threshold $\tau = \beta_N$ (\cref{Eq:BernsteinThreshold}) returns $\hat\Phi \in \cpt$ with $\rf(\cphi, \fc(\hat\Phi)) \geq 1 - \eps$ with probability at least $1 - \delta$ from
            \begin{equation} \label{Eq:SlowQPTSamples}
                N = O\bigl(r^2 \dimab^{1.6} \log(\dimab/\delta)/\eps^2\bigr)
            \end{equation}
           copies of $\fc(\Phi)$.
            Moreover, if $\eps \leq {r}\lambda_r/3$, then $\rank(\fc(\hat\Phi)) = r$ and the same guarantee holds from
            \begin{equation} \label{Eq:BernsteinQPTSamples}
                N = O\bigl(r \lambda_r^{-1} \dimab^{1.6} \log(\dimab/\delta)/\eps\bigr)
            \end{equation}
            copies of $\fc(\Phi)$.
        \end{restatable}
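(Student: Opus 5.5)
The plan is to reduce both regimes to a fidelity guarantee for the density estimate $\hr$ produced in step~3 of \cref{Alg:FPLSNonAdaptive}, and then apply part~(2) of \cref{Thm:Main}. The QST accuracy only needs to be $\eps/4$, at the cost of a constant factor in $N$. Throughout, we work on the event $E := \{\|\hrl - \rho\|_\infty \leq \beta_N\}$. By the matrix Bernstein concentration of the least-squares estimate for local Pauli measurements (\cref{Eq:LSConcentration}), $E$ has probability at least $1-\delta$ once
\begin{equation*}
\beta_N \asymp \sqrt{\dimab^{\log_2 3}\log(\dimab/\delta)/N}.
\end{equation*}
Note that $\log_2 3 < 1.6$. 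On $E$, we first check that $\hr_\sfa > 0$, so that $\pc$ is defined. This follows from \cref{Lem:AutoInvertible}: the fidelity bound we derive exceeds $1 - 1/(2d_\sfa)$, and this is where the hypothesis $\eps < 2/d_\sfa$ is used, since $\eps/4 < 1/(2d_\sfa)$.

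For the slow regime, we use the standard properties of the thresholded eigenvalue truncation (\cref{App:ProjectionStep}). With $\tau = \beta_N \geq \|\hrl-\rho\|_\infty$, the output satisfies $\|\hr - \rho\|_1 = O(r\,\beta_N)$, because at most $O(r)$ eigen-directions carry weight beyond the threshold and each contributes $O(\beta_N)$. Fuchs--van de Graaf (\cref{Eq:FuchsVan}) then gives
\begin{equation*}
1-\rf(\rho,\hr) \leq \tfrac12\|\rho-\hr\|_1 = O(r\beta_N).
\end{equation*}
Requiring $O(r\beta_N) \leq \eps/4$ yields \cref{Eq:SlowQPTSamples}, and \cref{Thm:Main}(2) completes this regime.

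For the fast regime, we proceed in two steps.

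\textbf{Rank recovery.} By Weyl's inequality on $E$, every eigenvalue of $\hrl$ lies within $\beta_N$ of the corresponding eigenvalue of $\rho$. The $\dimab - r$ null directions therefore fall at or below $\tau = \beta_N$ and are discarded. The top $r$ eigenvalues survive provided $\lambda_r > 2\beta_N$, which is the onset condition (\cref{App:SupraNoise,App:BernsteinThreshold}). In that case $\rank(\hr) = r$, and since $\pc$ is a congruence by an invertible operator, $\rank(\fc(\hat\Phi)) = r$ as well.

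\textbf{Quadratic fidelity bound.} For two states of equal rank $r$ whose supports are close, the infidelity is second order in the perturbation (\cref{App:Rates}). I would write $\hr = \rho + \Delta$ with $\|\Delta\|_\infty = O(\beta_N)$ and split $\Delta$ into blocks relative to $\operatorname{supp}\rho$. The on-support block contributes to $1-\rf$ only at second order, roughly $\tr[\Delta_{11}^2\rho^{-1}]/8$. The off-diagonal block contributes through the rotation of the support, roughly $\|\Delta_{12}\|_2^2/\lambda_r$ by Davis--Kahan. Together these give
\begin{equation*}
1-\rf(\rho,\hr) = O\bigl(r\beta_N^2/\lambda_r\bigr).
\end{equation*}
Setting this to $\eps/4$ gives $N = O(r\lambda_r^{-1}\dimab^{1.6}\log(\dimab/\delta)/\eps)$, which is \cref{Eq:BernsteinQPTSamples}. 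The onset condition $\lambda_r > 2\beta_N$ must be implied by this choice of $N$. Indeed, $\beta_N^2 \asymp \eps\lambda_r/r$, and then $\beta_N \leq \lambda_r/2$ is equivalent to $\eps \lesssim r\lambda_r$, which is the hypothesis $\eps \leq r\lambda_r/3$ once constants are fixed.

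The main obstacle is the quadratic bound: controlling the off-support leakage and the renormalization shift by $\tau$ at second order, with constants uniform in $\dimab$. I would first try the block expansion of $\sqrt{\rho^{1/2}\hr\rho^{1/2}}$ on $\operatorname{supp}\rho$, treating the trace renormalization as an $O(r\beta_N)$ scalar correction. The linear term from that correction must cancel, because both states have unit trace.
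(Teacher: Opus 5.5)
Your skeleton matches the paper's. You reduce to accuracy $\eps/4$ for $\hr$ through \cref{Thm:Main}, get $\hr_\sfa>0$ from \cref{Lem:AutoInvertible}, recover the rank on the concentration event, and transfer it through the invertible congruence. The gap is the quadratic conversion. It is the entire content of the fast regime, you leave it as a perturbative sketch, and you flag it yourself as the main obstacle. As written, the sketch does not deliver the claimed bound, for three reasons.

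First, a second-order expansion of $\sqrt{\rho^{1/2}\hr\rho^{1/2}}$ identifies leading terms but is not an inequality. The theorem needs a bound valid while $\|\Delta\|_\infty$ is a constant fraction of $\lambda_r$, with dimension-free constants, and you give no remainder control.

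Second, Davis--Kahan bounds the unweighted angle, $\|(I-P_\rho)\hat P\|_2^2\lesssim\|\Delta_{12}\|_2^2/(\lambda_r-\|\Delta\|_\infty)^2$, where $\hat P$ is the support projector of $\hr$. Turning this into off-support mass via $\ell\leq\|\hr\|_\infty\|(I-P_\rho)\hat P\|_2^2$ loses a factor of order $\lambda_1/\lambda_r$ against your claimed $\|\Delta_{12}\|_2^2/\lambda_r$. That would spoil the $\lambda_r^{-1}$ of \cref{Eq:BernsteinQPTSamples}.

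Third, normalization is not why the linear term disappears. At first order $1-\rf\approx-\tfrac12\tr\Delta_{11}=\tfrac12\ell$, with $\ell:=\tr[(I-P_\rho)\hr]$. The unit-trace state $(1-t)\rho+t\sigma_\perp$ has $\ell=t$, which is first order (cf.\ \cref{Rem:OffSupportWeight}). It is $\rank(\hr)=r$ that must make $\ell$ second order.

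You also need not track the renormalization shift as a separate scalar correction. \cref{Lem:SupraNoiseStability} already bounds the final output, $\|\hr-\rho\|_\infty\leq2\beta_N$. What is missing is a deterministic statement about two states of rank $r$ at spectral distance below $\lambda_r$.

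The paper supplies this as \cref{Prop:QuadraticConversion}, which combines three facts:
\begin{enumerate}
\item $1-\rf\leq\tfrac12\|\rho^{1/2}-\hr^{1/2}\|_2^2$.
\item The divided-difference identity $\langle u_i|\rho^{1/2}-\hr^{1/2}|v_j\rangle=\langle u_i|\rho-\hr|v_j\rangle/(\sqrt{a_i}+\sqrt{b_j})$, whose denominators satisfy $(\sqrt{a_i}+\sqrt{b_j})^2\geq\lambda_r-\|\Delta\|_\infty$ on every nonvanishing term precisely because the ranks agree.
\item $\|\rho-\hr\|_2^2\leq2r\|\Delta\|_\infty^2$, since $\rank(\rho-\hr)\leq2r$.
\end{enumerate}

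Your block picture can also be made exact. Because $\rho^{1/2}=P_\rho\rho^{1/2}P_\rho$, one has $\rf(\rho,\hr)=\rf(\rho,P_\rho\hr P_\rho)$, hence $1-\rf=\tfrac12\db^2(\rho,P_\rho\hr P_\rho)+\tfrac12\ell$. Divided differences on the $r$-dimensional support give $\db^2(\rho,P_\rho\hr P_\rho)\leq r\|\Delta\|_\infty^2/(2\lambda_r-\|\Delta\|_\infty)$. Since $\hr_{11}\geq(\lambda_r-\|\Delta\|_\infty)P_\rho$ is invertible on $\operatorname{supp}\rho$ while $\rank(\hr)=r$, the Schur complement vanishes: $\hr_{22}=\hr_{12}^\dagger\hr_{11}^{-1}\hr_{12}$. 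Therefore $\ell\leq\|\Delta_{12}\|_2^2/(\lambda_r-\|\Delta\|_\infty)\leq r\|\Delta\|_\infty^2/(\lambda_r-\|\Delta\|_\infty)$, which recovers \cref{Eq:QuadraticConversion}.

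Either way, the denominator requires $2\beta_N$ to sit a fixed fraction below $\lambda_r$. The paper takes $\beta_N\leq\lambda_r/8$, which is automatic at the stated $N$ when $\eps\leq r\lambda_r/3$. Checking only the rank onset $\beta_N<\lambda_r/2$ is therefore not enough.

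Finally, in the slow regime your ``each eigen-direction contributes $O(\beta_N)$'' presupposes spectral stability at a supra-noise threshold, which \cref{Eq:CP1InftyBound} does not cover. Below the onset, signal eigenvalues are clipped and the subnormalized branch can run. It is \cref{Lem:SupraNoiseStability} together with \cref{Eq:RankConversion} that gives $1-\rf\leq2r\beta_N$.
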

    \end{tbox}
    Explicitly, $N \geq \tfrac{512}{3} r^2 \dimab^{1.6}\log(\dimab/\delta)/\eps^2$ suffices for~\cref{Eq:SlowQPTSamples} and $N \geq \tfrac{512}{9} r \lambda_r^{-1} \dimab^{1.6} \log(\dimab/\delta)/\eps$ for~\cref{Eq:BernsteinQPTSamples}; \cref{Sec:AccuracyRate} shows the constants are loose by orders of magnitude.\footnote{The exponent of $\dimab$ is set by the measurement ensemble: for the local Pauli measurements used throughout it is $\log_2 3 \approx 1.585 \leq 1.6$, which we round up as~\textcite{surawy2022projected} do, and for the uniform POVM the factor $\dimab^{1.6}$ is replaced by $2 \dimab$~\cite{guta2020fast}.}
    In both bounds, $r$ and $\lambda_r$ are parameters of the bound{;} they are not inputs to the algorithm.
    {The ingredients of the proof are developed in~\cref{Sec:QSTAlgorithms}; the proof itself is in~\cref{App:QPTProof}.}
    \paragraph{The role of each component.}
    The fast rate needs both departures from PLS-QPT listed in~\cref{Tab:PLSvsFPLS}.
    \emph{Threshold:} at $\tau = 0$ or $\tau = -\lmin$, the rank of $\hr$ is left to the noise (\cref{Fig:ThresholdStrategies}) and the operator-norm error converts to infidelity only linearly (\cref{Eq:RankConversion}), so the rate is the slow one (\cref{Eq:SlowQPTSamples}) whichever regularizer follows.
    The Bernstein threshold instead recovers the rank beyond a finite onset (\cref{Cor:BernsteinRank}), which upgrades the conversion to the quadratic one of~\cref{Prop:QuadraticConversion}.

    \emph{Regularizer:} the fidelity projection carries that quadratic bound from the density estimate to the channel estimate (\cref{Thm:Main}) and preserves its rank.
    Preserving the rank is sufficient for the fast rate but not necessary: it suffices that the weight the estimate places outside the (estimated approximation of the) support of $\cphi$ be quadratically small in its operator-norm error (\cref{Rem:OffSupportWeight}).
    The thresholds $\tau \in \{0, - \lmin\}$ do not meet even that weaker condition.
    Moreover, HIP returns high Choi rank in~\cref{Sec:Numerics}, and on those instances the weight it leaves outside the support is first order in $\Delta$ rather than second.

    \emph{Onset:} before it, the threshold clips true eigenvalues of $\cphi$, the rank falls short of $r$, and only the slow bound applies (\cref{Lem:SupraNoiseStability}); the clipped weight shows as a plateau in~\cref{Fig:ThresholdStrategies,Fig:Headline}.
    The onset is the price of the fast rate, and $\lambda_r$ sets it: for well-conditioned Choi spectra, where $\lambda_r = \Theta(1/r)$, the fast bound~\cref{Eq:BernsteinQPTSamples} reads $N = O(r^2 \dimab^{1.6} \log(\dimab/\delta)/\eps)$, which matches the $\eps$-dependence of the adaptive protocol of~\cref{Cor:AdaptSampleComp} without adaptivity.
    
    \subsection{Adaptive single-copy QPT} \label{Sec:AdaptQPT}
    For the adaptive setting, in the QST stage, we consider the adaptive algorithm of~\cite{chen2023does}, and the lifted QPT protocol carries the following guarantee.
    
    \begin{tbox}
        \begin{cor}[Adaptive QPT] \label{Cor:AdaptSampleComp}
            Let $\Phi \in \cpt$ be the unknown channel with Choi state $\rho \equiv \fc(\Phi) \in \cab$, let $\gamma, \delta \in (0,1)$, and let $r \geq \rank(\rho)$ be a known rank bound. For $\gamma < 2/d_\sfa$, \cref{alg:FPLSadaptive}, given
            \begin{equation}
                N = O\Lt(\frac{r^2}{\gamma} \left(\dimab + \log(1/\delta)\right) \log^2(r/\gamma)\Rt)
            \end{equation}
            copies of $\fc(\Phi)$, produces an estimate $\tilde{\rho} \in \cab$ that satisfies $\rf(\rho, \tilde{\rho}) \geq 1 - \gamma$ with probability at least $1-\delta$.
        \end{cor}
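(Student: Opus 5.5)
This follows directly from the lifting part of \cref{Thm:Main}, with $\mathcal{A}_\mathrm{QST}^\mathrm{A}$ of~\cite{chen2023does} as the QST front end. The algorithm \cref{alg:FPLSadaptive} first runs the successive eigenspace projection method on copies of $\rho = \fc(\Phi)$, viewed as an unknown state on $\sfa \ot \sfb$ of dimension $\dimab$ and rank at most $r$. It then applies $\pc$ to the output $\hr$. We run the QST stage at accuracy $\eps := \gamma/4$. By~\cite[Theorem~6.1]{chen2023does}, together with the remark that a squared-fidelity guarantee implies the same guarantee for $\rf$, this uses
\[
N = O\Bigl(r^2 (\dimab + \log(1/\delta)) \log^2(4r/\gamma)/(\gamma/4)\Bigr)
\]
copies. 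With probability at least $1-\delta$ it returns $\hr \in \dnab$ with $1 - \rf(\rho, \hr) \leq \gamma/4$. The constant $4$ and the change $\log(4r/\gamma) = O(\log(r/\gamma))$ are absorbed into the $O(\cdot)$ (using $r \geq 1$ and $\gamma < 1$), which gives the stated sample count.

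The remaining step is to check that the hypothesis $\hr_\sfa > 0$ of \cref{Thm:Main} holds on the success event. This is where the condition $\gamma < 2/d_\sfa$ enters. On that event we have $\rf(\rho, \hr) \geq 1 - \gamma/4 > 1 - 1/(2 d_\sfa)$, so \cref{Lem:AutoInvertible} (quoted in \cref{Rem:Singular}) gives that $\hr_\sfa$ is invertible. One can also see this directly. By data processing, $\rf(I_\sfa/d_\sfa, \hr_\sfa) \geq 1 - \gamma/4$. The Fuchs--van de Graaf inequality \cref{Eq:FuchsVan} then bounds the trace distance of $\hr_\sfa$ from $I_\sfa/d_\sfa$. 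Any zero eigenvalue would force that distance to be at least of order $1/d_\sfa$, which contradicts the bound once $\gamma/4 < 1/(2d_\sfa)$. The only real obstacle is this constant bookkeeping, and the lemma already handles it.

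With invertibility in hand, part (2) of \cref{Thm:Main} gives $1 - \rf(\rho, \pc(\hr)) \leq 4(1 - \rf(\rho, \hr)) \leq \gamma$. Part (1) gives $\tilde\rho := \pc(\hr) \in \cab$. All of this holds on the same event of probability at least $1-\delta$, so no further union bound is needed.
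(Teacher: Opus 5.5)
Your proof is correct and follows the paper's argument: apply \cref{Thm:Main}(3) to the algorithm of~\cite[Theorem~6.1]{chen2023does} run at accuracy $\gamma/4$, with $\gamma < 2/d_\sfa$ guaranteeing $\hr_\sfa > 0$ through \cref{Lem:AutoInvertible}. The paper's one-line proof leaves that invertibility step implicit, but it makes the same move explicitly for \cref{Cor:NonAdaptSampleComp}. Drop your ``direct'' Fuchs--van de Graaf aside, though. It bounds the trace distance of $\hr_\sfa$ from $I_\sfa/d_\sfa$ only by $\sqrt{\gamma/2}$, while a zero eigenvalue forces a distance of merely $1/d_\sfa$, so the contradiction appears only for $\gamma$ of order $1/d_\sfa^2$, a factor $d_\sfa$ short of the stated range; the sum-of-squares argument in \cref{Lem:AutoInvertible} is what gives the sharp threshold.
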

    \end{tbox}
    \begin{proof}
        We apply~\cref{Thm:Main}(3) to the rank-$r$ tomography algorithm of~\textcite[Theorem~6.1]{chen2023does}, run at accuracy $\gamma/4$. 
    \end{proof}
    
    Explicitly, $N \geq 4c\, r^2 (\dimab + \log(1/\delta)) \log^2(4r/\gamma)/\gamma$ suffices, where $c$ is the universal constant from \textcite[Theorem~6.1]{chen2023does}.
    Since the lifting only changes the accuracy parameter, any improvement of $c$, or an improved QST algorithm, is inherited by the QPT algorithm with no change to our argument.
    The measurement model here differs from the non-adaptive setting in an important way. 
    Each round measures with a \textit{projected uniform POVM} on the current residual subspace. Consequently, it cannot be handed a data set collected in advance --- round $j$'s measurement depends on the outcomes of rounds $1,\dots,j-1$ --- so the input is a supply of \textit{fresh} Choi-state copies consumed batch by batch. These requirements make the protocol harder to realize than~\cref{Alg:FPLSNonAdaptive}.
    \begin{algorithm}
        \caption{Adaptive single-copy QPT}
        \label{alg:FPLSadaptive}
        
        \SetKwInOut{Input}{Input}
        \SetKwInOut{Output}{Output}
        \DontPrintSemicolon
        \Input{Copies of the Choi state $\rho \equiv \cphi$ on demand, accuracy $\gamma$, confidence $\delta$, and a rank bound $r$.}
        \Output{A channel estimate $\hat \Phi \in \cpt$ with Choi state $\tilde \rho$ satisfying $1 - \rf(\rho, \tilde{\rho}) \leq \gamma$ with probability at least $1-\delta$.}
        
        Set the number of rounds $T$ and the per-round batch size $m$ as in~\cite{chen2023does}, for target state-accuracy $\gamma/4$ {(the factor $4$ is the lifting loss of~\cref{Thm:Main})} \;
        Initialize the residual projector $\Gamma_1 \gets I_{\sfa \ot \sfb}$ \;
        
        \tcp{1. Adaptive QST of~\cite{chen2023does}: measurements depend on earlier outcomes}
        \For{$j = 1, \dots, T$}{
            Draw a \textit{fresh} batch of $m$ copies of $\rho$ \;
            Measure each with the projected uniform POVM on $\Gamma_j$, and form the projected estimator $\sigma_j$ \;
            $\Pi_j \gets$ spectral projector of $\sigma_j$ onto eigenvalues $\geq 2^{-j}$ \tcc*{resolved eigenspace}
            $\Gamma_{j+1} \gets I - \sum_{i \leq j} \Pi_i$ \tcc*{restrict to the orthogonal complement}
            }
            Assemble the block estimate $\hat S \gets \sum_{j=1}^T \Pi_j \sigma_j \Pi_j$ and normalize: $\hr \gets \hat S / \tr[\hat S]$ \;
            
            \tcp{2. TP regularization: the fidelity projection}
            $\hr_\sfa \gets \tr_\sfb [\hr]$ \tcc*{if singular, see~\cref{Rem:Singular}}
            $\tilde \rho \gets \frac1{d_\sfa}\left(\hr_\sfa \ihf \ot I_\sfb\right) \hr \left(\hr_\sfa \ihf \ot I_\sfb\right)$ \tcc*{\cref{Eq:ClosedFormProj}}
            
            \Return $\hat \Phi = \fc \iv [\tilde \rho]$ \;
        \end{algorithm} 
\section{Numerical experiments} \label{Sec:Numerics}
We now present numerical results that complement the theory; code and data reproducing every figure and table are available in the accompanying repository~\cite{afham2026fplscode}.
We study only FPLS, the non-adaptive algorithm of~\cref{Sec:NonAdaptQPT}, since it needs only single-copy local Pauli measurements with settings fixed in advance, and compare it with PLS-QPT~\cite{surawy2022projected} (run with its authors' code, at their default settings wherever its cost is reported; see~\cref{App:HeadlineLadder}), the only other non-adaptive single-copy QPT algorithm that is constructed via lifting and carries rigorous guarantees.
Both follow the three stages of~\cref{Tab:PLSvsFPLS} on the same measurement data.
A reported rank is a \emph{numerical} rank, the number of eigenvalues with absolute value above $\eps_{\mathrm{zero}} = 10^{-10}$, and the feasibility tolerances quoted for HIP and for the SDP are floating-point residuals rather than exact membership of $\cab$.

Throughout, FPLS thresholds at $\tau = \beta_N/2$, half the Bernstein radius at confidence parameter $\delta = 0.05$, with the curves at $\tau = \beta_N$ of~\cref{Cor:NonAdaptSampleComp} shown wherever the two differ.
The halved threshold is a heuristic that brings the onset forward; the guarantee of~\cref{Cor:NonAdaptSampleComp} holds only for $\tau = \beta_N$.
On every instance reported here it nevertheless satisfies the hypothesis $\tau \geq \|\hrl - \rho\|_\infty$ of~\cref{Lem:SupraNoiseStability}, the realized noise never exceeding $0.29 \beta_N$, so the estimate never over-counts the rank and recovers it exactly whenever $\beta_N < \lambda_r$, a quarter of the samples required by~\cref{Cor:BernsteinRank}.
Verifying that hypothesis requires the true state, so an experimenter cannot certify it for their data.

The diamond-distance projection of~\textcite{Mele2025Optimal} appears only in~\cref{Fig:Headline}~(left): it is an SDP whose cost grows as roughly $\dimab^3$, which puts it out of reach at the dimensions and shot counts studied here.
We solve it with \textsc{Clarabel} and \textsc{SCS} and report the solver's own solve time (\cref{App:HeadlineLadder}), which excludes the modelling overhead and so understates its cost; HIP and the fidelity projection are timed end to end.\footnote{The authors' implementation of HIP logs intermediate results at every iteration; this logging is not included in the reported time.}
HIP, an iterative scheme whose unit of work is one eigendecomposition of a $\dimab \times \dimab$ matrix, can be compared in iterations as well as in seconds, and we report both in~\cref{Sec:CostTP}.
Its cost is reported at its authors' default stopping tolerance; a tighter tolerance buys accuracy with more iterations (\cref{App:HeadlineLadder}), a trade-off that the closed-form fidelity projection, being exact, does not need to pay.

When the underlying channel is low rank, FPLS-QPT improves on every metric of interest.
Time and space concern stage~(iii) alone; accuracy reflects stages~(ii) and~(iii).
\begin{itemize}
    \item \emph{Time} (\cref{Sec:CostTP}), ahead throughout: one $d_\sfa \times d_\sfa$ eigendecomposition replaces a number of eigendecompositions of size $\dimab$ that grew as $\dimab^{0.4}$ over the tested range, a gap of four orders of magnitude at $\dimab = 2^{10}$ (\cref{Tab:HeadToHead}).
    \item \emph{Space} (\cref{Sec:CostTP}), ahead whenever the channel is low rank: the estimate is held as $r$ Kraus operators, $O(r \dimab)$ numbers, against $O(\dimab^2)$ for HIP.
    \item \emph{Accuracy} (\cref{Sec:AccuracyRate,Sec:ChannelRobustness}): past the empirical onset of rank recovery our infidelity converges at $N^{-1}$ against $N^{-1/2}$ for PLS-QPT, an advantage growing to more than two orders of magnitude, while in trace and Frobenius distance both converge at $N^{-1/2}$ and we lead by a constant below $2$; before the onset we are behind, by up to $10\times$ on the instance of~\cref{Fig:Headline}.
\end{itemize}

\emph{The Bernstein threshold captures the true Choi rank of the channel, and the rank preservation of the fidelity projection lets the channel estimate inherit it}; both are needed for the fast rate, for the reasons given after~\cref{Cor:NonAdaptSampleComp}.
HIP, a \textit{quasi-projection}\footnote{We term it a quasi-projection as it is not a true Frobenius projection, but is contractive, like a true Frobenius projection.}~\cite[Sec.~4]{surawy2022projected}, does not preserve the rank: its Frobenius projection onto the TP hyperplane~\cite[Section 13.1]{surawy2022projected} and its final mixing step~\cite[Eq.~(15)]{surawy2022projected} yield estimates with high or nearly full Choi rank throughout our experiments.
A high numerical rank does not by itself imply the slow rate, which is decided by the off-support mass $\ell$ of~\cref{Eq:OffSupportMass}; on our instances, however, it comes with a slower infidelity slope.
For channels of (nearly) full Choi rank the preference for low-rank estimates backfires and PLS-QPT dominates over a wide range of $N$ (\cref{Sec:ChannelRobustness}, \cref{Fig:FullRankTradeoff}).

\subsection{Partial normalization vs HIP: Time and space costs} \label{Sec:CostTP}
We now compare the time and space costs of TP regularization by fidelity projection and by HIP.
If the density estimate $\hr$ is given in eigendecomposed form (equivalently, as a Kraus representation of the CP map $\fc\iv(\hr)$), a Kraus representation of the channel estimate $\pc(\hr)$ is {inexpensive to compute}.
\begin{tbox}
\begin{restatable}[Fidelity projection in Kraus-form]{theorem}{fidelityprojectioninKrausform} \label{Thm:FactorizedImplementation}
        Let $\Theta \in \cpab$ have a Kraus representation $\{K_i\}_{i=1}^r$, so that $\Theta(\si) = \sum_{i\in[r]} K_i \si K_i \dg$, and let $\hr = \fc(\Theta)$ be its Choi state. Let $R := \sum_{i \in [r]} K_i \dg K_i \in \posa$
        and assume $R > 0$.
        Then $\pc[\hr]$ is the Choi state of the channel with Kraus operators
        $\{K_i' = K_i R \ihf\}_{i \in [r]}$
        which can be computed in
        \begin{equation}
            O\!\left(r d_\sfa^2 d_\sfb + d_\sfa^3\right) \quad \text{arithmetic operations and} \quad O\!\left(r \dimab + d_\sfa^2\right) \ \text{memory},
        \end{equation}
        without ever forming a $\dimab \times \dimab$ matrix.
\end{restatable}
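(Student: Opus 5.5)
The proof has two parts: an algebraic identity, and a cost count that never forms a $\dimab \times \dimab$ matrix.

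\emph{Step 1: the marginal of $\hr$ in terms of $R$.} With the normalized Choi convention $\hr = \fc(\Theta) = \sum_i (I_\sfa \ot K_i)\kb{\omega}(I_\sfa \ot K_i)\dg$, the $\sfa$-marginal is
\[
\hr_\sfa = \frac1{d_\sfa}\sum_i \big(K_i\dg K_i\big)^{T} = \frac1{d_\sfa} R^T,
\]
where $T$ is the transpose in the computational basis fixing $\ket{\omega}$. This uses the identity $\tr_\sfb[(I\ot X)\kb{\omega}(I\ot Y)\dg] = \frac1{d_\sfa}(Y\dg X)^T$. Since $R>0$, we get $\hr_\sfa>0$, so \cref{Thm:Main} applies and $\pc(\hr)=\na(\hr)$. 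Also $(R^T)\ihf = (R\ihf)^T$.

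\emph{Step 2: move the congruence to the output side.} Apply the ricochet identity $(M\ot I)\ket{\omega} = (I\ot M^T)\ket{\omega}$ with $M = d_\sfa^{1/2}(R\ihf)^T$. The factor $d_\sfa^{1/2}$ from each side cancels the $1/d_\sfa$ in \cref{Eq:ClosedFormProj}. This gives
\[
\na(\hr) = \sum_i (I\ot K_i)(M\ot I)\kb{\omega}(M\ot I)\dg (I\ot K_i)\dg \cdot \tfrac{1}{d_\sfa} = \sum_i (I\ot K_i R\ihf)\kb{\omega}(I\ot K_iR\ihf)\dg,
\]
where $M\ot I$ commutes with $I\ot K_i$ and $(M^T)=d_\sfa^{1/2}R\ihf$ is Hermitian. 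Hence $\na(\hr) = \fc(\Theta')$ with Kraus operators $K_i' = K_iR\ihf$. As a check, $\sum_i K_i'^\dagger K_i' = R\ihf R R\ihf = I_\sfa$, so $\Theta'$ is trace preserving, consistent with \cref{Thm:Main}(1). The only delicate point in the whole proof is bookkeeping the transpose and the $d_\sfa$ normalization in the ricochet step, and I will fix conventions once at the start to handle it.

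\emph{Step 3: cost.} Each $K_i$ is a $d_\sfb\times d_\sfa$ matrix.
\begin{enumerate}
\item Forming each product $K_i\dg K_i$ costs $O(d_\sfa^2 d_\sfb)$, so $R$ costs $O(r d_\sfa^2 d_\sfb)$ in total.
\item One Hermitian eigendecomposition of the $d_\sfa\times d_\sfa$ matrix $R$, giving $R\ihf$, costs $O(d_\sfa^3)$.
\item Each product $K_iR\ihf$ costs $O(d_\sfb d_\sfa^2)$, so all of them cost $O(r d_\sfa^2 d_\sfb)$.
\end{enumerate}
Memory is $r\,d_\sfa d_\sfb = r\dimab$ for the Kraus operators (input and output) plus $O(d_\sfa^2)$ for $R$, its eigenvectors and $R\ihf$. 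No matrix of size $\dimab\times\dimab$ appears at any stage.
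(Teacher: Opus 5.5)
Your proposal is correct and follows essentially the same route as the paper, which also computes $d_{\mathsf A}\hat\rho_{\mathsf A} = R^{\mathsf T}$ by tracing out $\mathsf B$ term by term. It then uses $(M\otimes I_{\mathsf B})\,\mathrm{vec}(K_i) = \mathrm{vec}(K_i M^{\mathsf T})$, which is your ricochet step, with $M=\hat\rho_{\mathsf A}^{-1/2}$ to get $K_i\mapsto K_iR^{-1/2}$, and gives the same operation and memory count. The paper's only extra is a remark that the $K_i$ can be read off the eigenpairs of the density estimate by a reshape; this is not needed, since the theorem takes the Kraus representation as given.
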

\end{tbox}
\noindent The proof is in~\cref{App:FactorizedProof}.

Two properties of the fidelity projection enable this advantage: a closed-form solution and a factorized implementation of it.
HIP, in contrast, performs an eigendecomposition of a {$\dab \times \dab$} Hermitian matrix at each iteration, at cost $O(\dab^3)$, and the number of iterations it needs is not known a priori. 
In \cref{Fig:HIPPerformanceAnalysis}, we study the iteration complexity of HIP against dimension.
Every (unprojected) instance there is calibrated to approximately a fixed fidelity from $\cab$,\footnote{Fidelity of a state $\rho$ to the set $\cab$ is defined as $\max_{\sigma \in \cab}\rf(\rho, \sigma)$.} and this calibration protocol and the true channels are described in~\cref{App:SyntheticInstances}.

\begin{figure}[htbp]
     \centering
     \includegraphics[width=\textwidth]{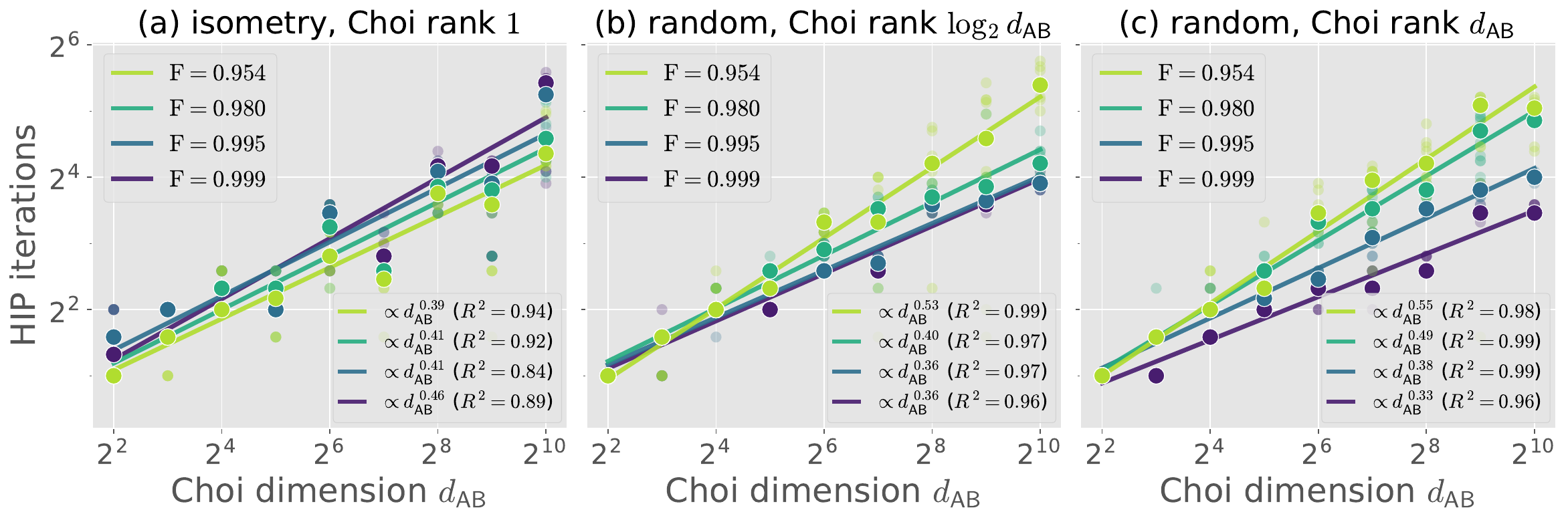}
     \caption{Iterations required by HIP~\cite{surawy2022projected}, run as described in~\cref{App:HeadlineLadder}, against Choi dimension, every instance calibrated approximately to a common fidelity $\mathrm{F}$ from $\cab$, at four levels colored from dark (closest to $\cab$) to light (farthest) and labeled by the corresponding fidelity. 
     True channels are (a) a Haar-random isometry, Choi rank $1$, and (b), (c) random channels of Choi rank $\log_2 \dimab$ and $\dimab$, a Wishart matrix normalized by its input marginal~\cite{bruzda2009random}; ten measurement draws each. 
     Translucent markers are single instances, opaque markers their median. Solid lines are log--log power-law fits to the medians.}
     \label{Fig:HIPPerformanceAnalysis}
\end{figure}

For isometric channels, over $\dimab \in [4, 1024]$ and four fidelity levels $\rf =$ 0.999, 0.995, 0.980, 0.954, the power-law fits to the medians are
\begin{equation} \label{Eq:HIPIterScaling}
    \#\text{iterations} \approx 1.3\, \dimab^{\,0.46}, \quad 1.5\, \dimab^{\,0.41}, \quad 1.3\, \dimab^{\,0.41}, \quad 1.2\, \dimab^{\,0.39}.
\end{equation}
It is therefore not only the cost of an iteration that grows with dimension, which is expected, but the \emph{number} of them: over the tested range it grows as $\dimab^{0.4}$.
Indeed the Choi rank of the true channel also plays a role, and where it does is the point.
The fitted exponent $b$ of $\#\text{iterations} \propto \dimab^{\,b}$, for the three channel families of~\cref{Fig:HIPPerformanceAnalysis} at each of the four calibration levels, is summarized below.
\begin{center}
    \small
    \begin{tabular}{lcccc}
        \hline
        & \multicolumn{4}{c}{fidelity of the estimate to $\cab$} \\
        Choi rank of the channel & $0.999$ & $0.995$ & $0.980$ & $0.954$ \\
        \hline
        $1$ (isometry) & $0.46$ & $0.41$ & $0.41$ & $0.39$ \\
        $\log_2 \dimab$ & $0.36$ & $0.36$ & $0.40$ & $0.53$ \\
        $\dimab$ (full rank) & $0.33$ & $0.38$ & $0.49$ & $0.55$ \\
        \hline
    \end{tabular}
\end{center}
\begin{itemize}
    \item For estimates close to $\cab$, the higher the Choi rank the slower the iteration count grows with dimension: the first column falls from $0.46$ to $0.33$.
    \item Far from $\cab$ it is the isometry that grows slowest: the last column reads $0.39$ against $0.53$ and $0.55$ for the two random families.
    This is caused by HIP's stopping rule, as its default tolerance is proportional to the noise of the least-squares estimate, $\epsilon_{\mathrm{tol}} = \max\bigl(10^{-1} |\lmin(\hrl)|, 10^{-3}\bigr)$ (\cref{App:HeadlineLadder}), so a noisier estimate is regularized less precisely.
    At a fixed distance from $\cab$, the estimate of an isometry comes from far fewer shots than that of a higher-rank channel, $3.5 \times 10^4$ against $9.1 \times 10^5$ for the full-rank channel at $\dimab = 2^{10}$, and is correspondingly noisier: $\epsilon_{\mathrm{tol}} \approx 0.1$ against $0.02$.
    For the isometry this tolerance does not tighten as the dimension grows; for the other two families it does, which adds iterations at the large dimensions and steepens their fits.
\end{itemize}
These numerics further validate our method, especially when the true channel is of low Kraus-rank and the estimate is close to $\cab$, as  HIP requires more iterations for convergence in this setting. 
Observe that the entire cost of the fidelity projection is thus below that of a single HIP iteration: its only eigendecomposition is of the $d_\sfa \times d_\sfa$ marginal defect $R := \sum_{{i \in [r]}} K_i^\dagger K_i$, so for $d_\sfa = d_\sfb$ and $r = O(1)$, it costs $O(\dimab^{3/2})$ against HIP's $O(\dimab^3)$ \emph{per iteration}, in $O(\dimab)$ memory rather than $O(\dimab^2)$.

This compounding advantage produces the gap in~\cref{Fig:Headline}~(left), which reports the time cost on the same calibrated instances, medians over ten instances.
At $\dimab = 2^{10}$, regularization takes HIP $11.5$ seconds against $0.28$ milliseconds for the rank-factorized projection; the diamond-norm projection of~\textcite{Mele2025Optimal}, an SDP growing as roughly $\dimab^{3}$, takes $19$ seconds at $\dimab=2^6$ and $86$ seconds at $\dimab=2^7$ and extrapolates to around ten hours at $\dimab=2^{10}$.
{We expect the gap in~\cref{Fig:Headline}~(left) to widen beyond $\dab = 2^{10}$: HIP costs one $\dimab$-sized eigendecomposition per iteration, $O(\dimab^3)$, times a number of iterations that grew as $\dimab^{0.4}$ over the tested range (\cref{Eq:HIPIterScaling}), about $\dimab^{3.4}$ in total if that trend continues.} 
The factorized projection of~\cref{Thm:FactorizedImplementation} is not on this scale at all: its cost grows linearly with the Choi rank $r$, so the advantage over HIP is largest for low-rank channels and shrinks as $r$ grows, but even at full rank it stays below a single HIP iteration by a factor $d_\sfb$.
The space advantage follows from the same theorem.
FPLS holds its estimate as $r$ Kraus operators of size $d_\sfb \times d_\sfa$ and touches only the $d_\sfa \times d_\sfa$ defect $R$, $O(r \dimab + d_\sfa^2)$ numbers in all.

Thanks to the rank-preservation of {the} fidelity projection, the channel- and the density-estimates have the same rank (\cref{Fig:SpectrumMatching}), so a low-rank channel is returned, stored and applied using a low-rank representation.
In contrast, HIP must hold and eigendecompose the full $\dimab \times \dimab$ matrix at every iteration, handling $O(\dimab^2)$ numbers, and its output is of high rank, so it cannot be compressed afterwards without a further approximation.
At $\dimab = 2^{10}$ and $r = 2$, the returned estimates hold $2 \times 10^3$ against $10^6$ complex numbers, $33$ kB against $16.8$ MB in double precision, and the ratio grows as $\dimab / r$.

\subsection{Accuracy and the observed convergence rate} \label{Sec:AccuracyRate}
\begin{figure}[!t]
    \centering
    \includegraphics[width=\textwidth]{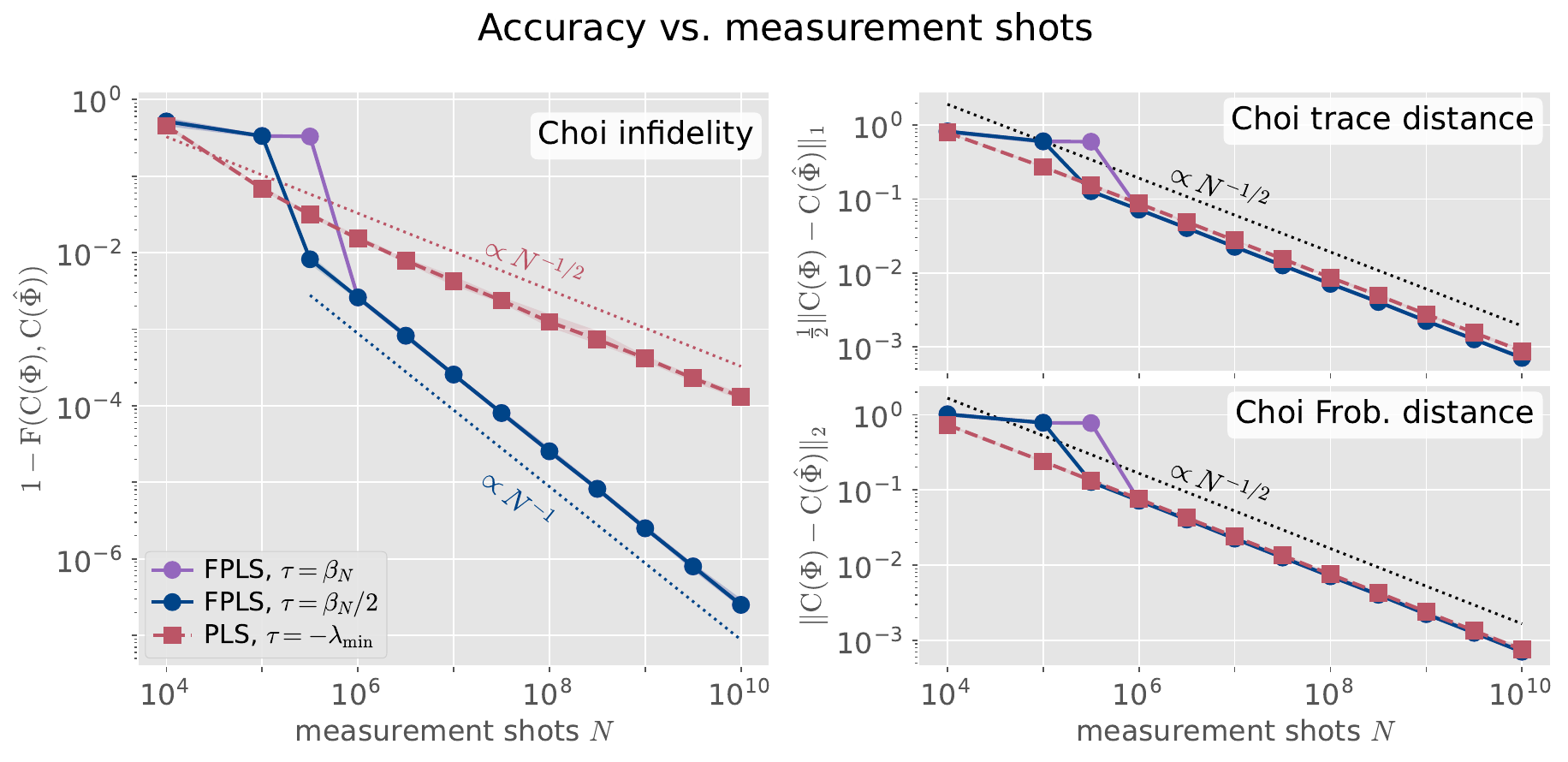}
    \caption{Reconstruction accuracy against the shot count for a rank-$2$ channel of~\cref{Fig:Headline} at $\dimab = 2^8$, one panel per metric. 
    FPLS at the Bernstein threshold $\tau = \beta_N$ (purple) and at $\beta_N/2$ (blue) against PLS at $\tau = -\lmin$ (red); {geometric means} and $[\min,\max]$ bands over $10$ trials. {The dotted guides carry $N^{-1/2}$ and $N^{-1}$ with fixed, never fitted exponents, drawn as in~\cref{Fig:Headline} (\cref{App:HeadlineRight})}.}
    \label{Fig:AccuracyVsN}
\end{figure}

\Cref{Fig:AccuracyVsN} presents the predicted rates in three metrics on the (rank-2 mixed-unitary) channel studied in~\cref{Fig:Headline} at $\dimab = 2^8$.
In infidelity, the thresholded estimates show the rank-recovery onset that~\cref{Cor:BernsteinRank} guarantees from $N = 3 \times 10^6$ at $\tau = \beta_N$; empirically it arrives earlier, at $N = 10^6$ for $\tau = \beta_N$ and at $N = 3 \times 10^5$ for $\tau = \beta_N/2$.
Below its empirical onset, every trial returns rank $1$: a true eigenvalue is discarded and the infidelity sits on a plateau, behind PLS-QPT by $10.5\times$ and $4.9\times$ at the last shot count before the respective onsets.
From the onset on, every trial returns rank $2$ and the two thresholded estimates coincide.
The infidelity then drops by two orders of magnitude and converges at a fitted exponent $-1.00$, the $N^{-1}$ of~\cref{Eq:BernsteinQPTSamples}, while PLS-QPT continues at $-0.51$.
The ratio grows from $6\times$ at $N = 10^6$ to $49 \times$ at $N=10^8$ and $520 \times$ at $N=10^{10}$.

In trace and Frobenius distances, the picture is different.
All three estimates converge at fitted exponents of $-0.50$ in both norms, and the thresholded estimate is ahead only by a constant, $1.2\times$ in trace distance and $1.1\times$ in Frobenius distance at $N = 10^{10}$; before the empirical onset it is behind by up to $2.2\times$ and $3.3\times$.
Rank recovery thus buys a faster rate in fidelity, and the reason is a property of the fidelity rather than of the estimator (see \cref{Prop:QuadraticConversion}).
The infidelity lines of \cref{Fig:Headline,Fig:AccuracyVsN,Fig:ChannelRobustness,Fig:FullRankTradeoff} are geometric means over trials.
\Cref{Fig:ThresholdStrategies} instead draws one median per branch for $\tau = -\lmin(\hrl)$, whose trials under~\cref{Alg:SparseDensityEstimate} are bimodal, the rank being recovered in about half of them.

\subsection{Performance across channel families} \label{Sec:ChannelRobustness}
We conclude this section by comparing FPLS- and PLS-QPT on more channels (all of which are 3-qubit channels with $\dab = 2^6$), with the metrics being rank-recovery and accuracy (infidelity) as we have done previously. 
We consider two classes of channels: channels without and with full Choi rank. 

\paragraph{Channels without full Choi rank.}
\begin{figure}[!h]
    \centering
    \includegraphics[width=\textwidth]{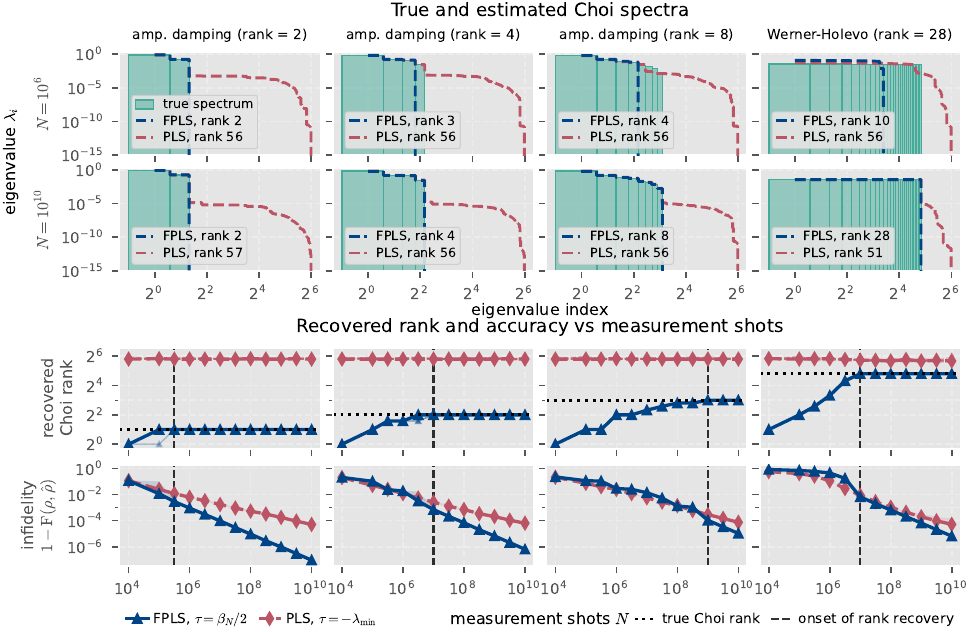}
    \caption{Rank and accuracy of the estimates across channels at $\dimab = 2^6$: independent amplitude damping on $k$ qubits (Choi rank $2^k$) and the Werner--Holevo channel (Choi rank $28$). Top two rows, the true Choi spectrum (teal bars) against the spectra recovered by FPLS (blue dashed line) and PLS (red dashed line) at $N = 10^{6}$ (first row) and $N = 10^{10}$ (second row), each panel's legend giving the ranks of the two estimates; third row, recovered Choi rank, translucent lines the ten trials, opaque line their mode, dotted line the true rank; last row, infidelity, geometric means over $[\min,\max]$ bands. Blue is FPLS at the Bernstein threshold $\tau = \beta_N/2$, red PLS; the dashed vertical line marks the empirical onset, the smallest plotted shot count from which every FPLS trial returns the true rank.}
    \label{Fig:ChannelRobustness}
\end{figure} 
We consider the tensor-product of $k$ single-qubit amplitude damping channels applied on $k \in \{1, 2, 3\}$ qubits, which leads to Choi states with rank $2^{k}$.
The damping strengths {$\phi_1, \dots, \phi_k$} are drawn in $[0.15, 0.45]$, each with Kraus operators $A_0 = \kb{0} + \sqrt{1 - {\phi_j}}\,\kb{1}$ and $A_1 = \sqrt{{\phi_j}}\,\ketbra{0}{1}$, so that the Choi eigenvalues are the products of $1 - {\phi_j}/2$ and ${\phi_j}/2$ over the $k$ qubits. 
We also consider the \emph{Werner--Holevo} channel~\cite{werner2002counterexample} $\rho \mapsto (\tr(\rho) I - \rho^{\mathsf T})/(d_\sfa - 1)$, whose Choi state is proportional to the projector onto the antisymmetric subspace and has a flat spectrum of rank $d_\sfa(d_\sfa - 1)/2 = 28$.
The Werner--Holevo channel is chosen for its non-trivial flat spectrum.
In the spectra of~\cref{Fig:ChannelRobustness} (top two rows), at $N = 10^{10}$, FPLS returns the true rank exactly in every column, whereas PLS returns $51$ to $57$ of a possible $64$ directions, most of them carrying weights below $10^{-5}$; at $N = 10^{6}$, FPLS under-counts on every channel except the rank-$2$ one, while PLS returns $56$ throughout ($48$ to $63$ over all shot counts).

The third row plots rank against measurement shots, which shows that the empirical (rank-recovery) onset moves with the smallest nonzero Choi eigenvalue $\lambda_r$, as the guaranteed onset of~\cref{Cor:BernsteinRank} does: FPLS returns the true rank in every trial from $N = 3 \times 10^5$ at rank $2$, $N=10^7$ at rank $4$, $N=10^9$ at rank $8$ and $N=10^7$ for the Werner--Holevo channel, and below the onset it always undercounts.
The onset is proportional to $1/\lambda_r^2$: the Werner--Holevo channel, of rank $28$ but with a flat spectrum, $\lambda_r = 1/28$, recovers its rank two decades before amplitude damping of rank $8$, whose eighth eigenvalue is smaller.
The accuracy panels follow the rank panels: past the onset, FPLS converges at fitted exponents $-1.0$ on all four channels against $-0.5$ to $-0.7$ for PLS, its advantage growing with $N$.
Before the onset, it is behind PLS, the price a sparse estimator pays for discarding true weight: by up to $2.7\times$ on the rank-$8$ channel and $6.0\times$ on the Werner--Holevo channel, though never on the rank-$2$ channel, where FPLS is ahead at every shot count.

\begin{figure}[h]
    \centering
    \includegraphics[width=\textwidth]{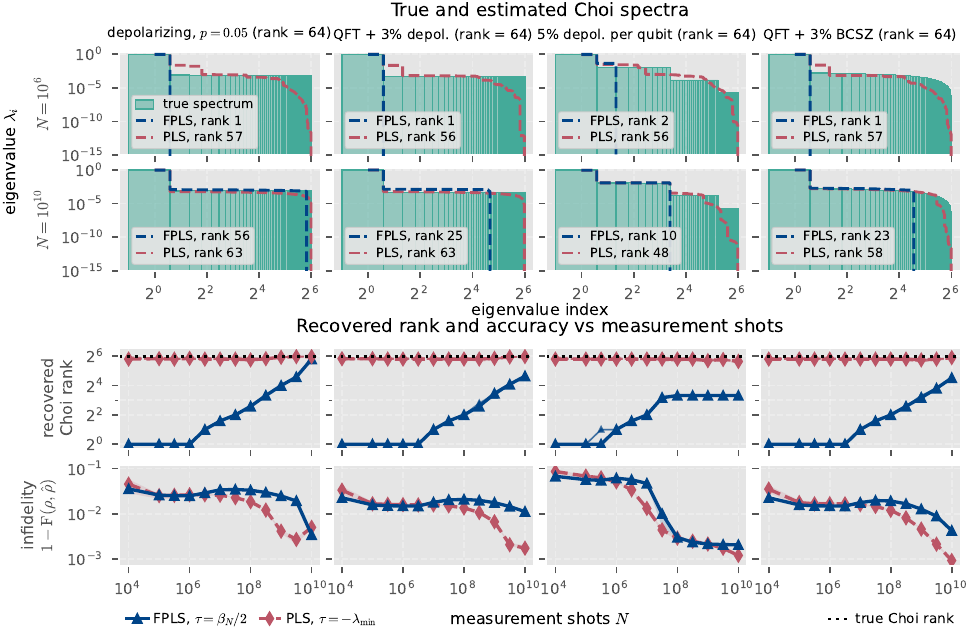}
    \caption{
    Channels with full Choi rank at $\dimab = 2^6$: depolarizing at $p = 0.05$ and the {quantum Fourier transform (QFT)} followed by global depolarizing noise at $p = 0.03$, both of full Choi rank $64$ with flat tails; then independent $5\%$ depolarizing on each of the three qubits, of full rank with a tiered tail; and the convex combination $0.97\,\Phi_{\mathrm{QFT}} + 0.03\,\Phi_{\mathrm{BCSZ}}$ of the QFT channel with a random channel drawn from the BCSZ distribution~\cite{bruzda2009random} of full rank with a spread tail.
    Rows, colors and rank cutoff are as in~\cref{Fig:ChannelRobustness}; no dashed onset line appears, since none of the four channels recovers its rank in every trial within $N \leq 10^{10}$.
    \label{Fig:FullRankTradeoff}
    }

\end{figure}
\paragraph{Channels with full Choi rank.}
We now consider channels with full Choi rank, the setting where the affinity of FPLS towards low-rank {estimates} can backfire. 
Here we consider {the} global depolarizing channel (at noise $p = 0.05$), the Quantum Fourier Transform (QFT) followed by a global depolarizing channel (at noise $p = 0.03$), the tensor-product of single-qubit depolarizing channels (each at noise $p = 0.05$), and a (0.97, 0.03)-convex combination of a QFT channel and a random channel of full Choi rank drawn from the BCSZ distribution~\cite{bruzda2009random}.
The first and third are perhaps the most well-studied channels with full Choi rank. 
The second and fourth represent a unitary channel with isotropic and anisotropic noises, respectively. 
With a gap too small to place the threshold in (at the shot counts considered here) the Bernstein rule admits a direction (eigenvector) only once the data resolves it above the threshold $\tau$, so the rank FPLS returns climbs with $N$.
PLS keeps every direction above the noise edge (at the density estimation stage) and its TP step (via HIP) spreads the remaining weight over the rest, so it returns estimates with ranks $48$ to $63$ which, in this setting of true channels with full Choi rank, is advantageous.

\section{Conclusion} \label{Sec:Conclusion}
In this article we introduced a new approach to lift any fidelity-based QST algorithm to a QPT algorithm, while allowing for the inheritance of sample complexity guarantees. 
This is enabled by the identification of a widely used \textit{partial normalization} operation as the exact fidelity projection of a bipartite density matrix onto the set of Choi states of channels~\cite{afham2026projections}.  
The partial normalization comes with a variety of benefits over existing TP regularization techniques---it is exact, closed-form, and can be computed inexpensively in a rank-factorized manner. 
It is also rank-preserving, which allows for construction of QPT algorithms that can return channel estimates with low Choi rank. 
We then {presented} two concrete (single-copy) QPT algorithms--- one non-adaptive and one adaptive---by lifting the QST algorithms of~{\cite{guta2020fast, chen2023does}}. 
Our non-adaptive QPT algorithm, called \textit{Fidelity Projected Least Squares} QPT, excels at the tomography of channels with low Choi rank: beyond a \textit{rank recovery} onset (in the number of samples), it achieves a faster rate of convergence and exactly captures the Choi rank of the channel{, while its TP-regularization step runs orders of magnitude faster and leaner than the iterative alternative of~\cite{surawy2022projected}}.

\section*{Statement on AI usage}
The authors acknowledge the use of Large Language Models (LLMs), Claude Opus 5 for most of the project and Claude Fable 5.1 and ChatGPT  for the final revision, in the preparation of this article.
This includes standard usage such as proofreading, restructuring, and basic questions.
The simulation and plotting scripts were written with the LLM; the core routines for projections, estimators and random channels were written by the authors, and the HIP baseline is the implementation of~\cite{surawy2022projected, kahn2021hip}, run as described in~\cref{App:HeadlineLadder}.

The project started towards the end of 2025, and the key idea of this article, using the closed-form fidelity projection to lift fidelity-based QST algorithms to QPT algorithms, was contributed by the authors.
Key contributions of the LLM are the formalization of the Bernstein-threshold analysis (\cref{App:ProjectionStep} to \cref{App:QPTProof}), the marginal-invertibility lemma (\cref{App:Invertibility}), and the calibrated benchmark methodology of~\cref{App:SyntheticInstances}.
The Bernstein-threshold analysis was arrived at as follows.
From the results of the numerical experiments, the authors noticed that accurate capture of the Choi rank leads to a faster decay of the infidelity, and it was noticed that the $\tau = -\lambda_{\min}$ rule recovers the rank in only about half the trials at every shot count.
The authors then reasoned that a thresholding strategy that uses the concentration radius should capture the true rank beyond some onset, and the LLM was asked to formalize this idea and provided the proofs of \cref{App:ProjectionStep} to \cref{App:QPTProof}.
The authors have verified the correctness of and take responsibility for the results in this article.

\paragraph{Acknowledgments}
AA and SS would like to thank Jan Seyfried for discussions during the project.
AA also thanks Roberto Rubboli for helpful discussions on these topics. 
AA, SS, and MT are supported by the NRF Investigatorship award (NRF-NRFI10-2024-0006).

\printbibliography

\appendix
\crefalias{section}{appendix}
\crefalias{subsection}{appendix}

\newpage
\color{black}

\section{Projected least squares at the Bernstein threshold} \label{Sec:QSTAlgorithms}
This appendix builds, step by step, the tools behind~\cref{Cor:NonAdaptSampleComp}, and proves it in the last subsection (\cref{App:QPTProof}).
Throughout this appendix, $d$ denotes the dimension of the state space of the QST problem; for the Choi states of~\cref{Sec:Lifting}, $d = \dimab$.
Informally, the proof strategy is as follows.
\begin{enumerate}
    \item First we use the fact that the (Hermitian) {least-squares} estimate $\hrl$ is a close approximation of the true state $\rho$, in spectral-norm distance (\cref{App:LSEstimate}).
    \item Then we show that the thresholded-projection (to obtain the density estimate $\hr$) remains stable under spectral-norm distance (\cref{App:ProjectionStep}).
    \item Next we prove that the thresholding rule we call the \textit{Bernstein threshold} captures the true underlying rank beyond a certain onset of sample size (\cref{App:SupraNoise,App:BernsteinThreshold}). 
    \item Thereafter, we show that accurately capturing the rank leads to an improved rate in infidelity decay (\cref{App:Rates}). 
    \item Finally, we combine all of the above to obtain the sample complexity (\cref{App:QPTProof}). 
\end{enumerate}

The QST algorithm we lift is a non-adaptive, single-copy QST algorithm called the \textit{Projected Least Squares} QST algorithm~\cite{guta2020fast,surawy2022projected}.
It has two components: a \emph{least-squares} stage, which turns the measurement outcomes into a Hermitian estimate of the state (\cref{App:LSEstimate}), and a \emph{projection} stage, which turns that estimate into a density matrix (\cref{App:ProjectionStep}); the rank and accuracy guarantees of the paper are properties of how the second stage is carried out.
The copies of the true state $\rho \in \dna$ are measured with a \textit{structured POVM}, that is, a POVM whose frame operator can be inverted in closed form, so that the LS estimate has a closed-form expression; examples are the uniform POVM and local Pauli measurements.
When their elements have equal trace, as both of these do, the LS estimate is moreover guaranteed to have unit trace. 
For simplicity, we focus on local Pauli measurements.

\subsection{The least-squares (LS) estimate} \label{App:LSEstimate}
\paragraph{The LS estimate for a general POVM.}
For the sake of completeness, we now discuss the closed form of the LS estimate, first for a general POVM and then for local Pauli measurements.
This discussion closely follows the one from~\cite{guta2020fast}.
Let $\rho \in \dna$ be the true state and let $\cle = \{E_i\}_\inm$ be a POVM associated with $\sfa$. 
The associated measurement channel (denoted as $\mathrm{M}_\cle$) and its adjoint are defined as 
\begin{equation}
    \mathrm{M}_\cle(X) = \sum_\inm  \la E_i, X \ra |i \ra \qaq \me^\dagger(v) = \sum_\inm v_i E_i, 
\end{equation}
for any $X \in \mathrm{L}(\sfa)$ and $v \in \mathbb{C}^m$. 
In (single-copy, non-adaptive) QST, one is given $N$ i.i.d. copies of $\rho$ and we measure each copy independently using the POVM $\cle$. 
Suppose we see each outcome $i$ a total of $N_i$ times such that $\sum_\inm N_i = N$. 
Then the normalized frequency vector $c := (N_i/N)_\inm$ is an unbiased estimator of the measurement probabilities $(\la E_i, \rho \ra)_\inm$. 
In the least-squares method, our goal is to find the Hermitian matrix that minimizes the following Euclidean loss: $\hrl := \argmin_{X \in \hrm(\sfa)} \| \mathrm{M}_\cle(X) - c \|^2_2 = (\me ^\dagger \me) \iv \me^\dagger(c).$
The inverse here requires $\cle$ to be informationally complete, so that $\{E_i\}_\inm$ spans $\hrm(\sfa)$.

\paragraph{Local Pauli measurements.}
We consider the special case where the POVM corresponds to local Pauli measurements on a multi-qubit system.
Let $\sfa \equiv \mathbb{C}^{d}$, for $d := 2^{n_\sfa}$, so that $\rho \in \dna$ is an $n_\sfa$-qubit state. 
We first consider the case where $n_\sfa = 1$. 
Then the Pauli POVM is the (normalized) collection of all eigenstates of the Pauli matrices: 
\begin{equation}
\begin{aligned}
    \clp_1 &:= \frac13 \left\{\kb{\pm, x},\kb{\pm, y}, \kb{\pm, z} \right\} \equiv \frac13\left\{\kb{o, s} : (o,s) \in \{+, -\} \times \{x, y, z\}\right\}.
\end{aligned}
\end{equation}
Here $s$ denotes the `setting' (the specific choice of the Pauli observable) and $o$ denotes the eigenvalue of the eigenvector that defines the measurement operator. 
The associated measurement channel then satisfies
\begin{equation}
\begin{aligned}
    ( \mathrm{M}^\dagger_{\mathcal{P}_1} \mps) (A) &= \frac1{3^2}\sum_{(o, s)} \kb{o, s}  \cdot \tr[\kb{o, s} \cdot A] = \frac1{3^2} (A + \tr[A] I ) = \frac1{3} \Delta_{2/3}(A),
\end{aligned}  
\end{equation}
where $\Delta_q(A) = (1-q) A + q \cdot \tr[A] I_2/2$ is the single-qubit depolarizing channel.
Moreover, $\Delta_{2/3} \iv (A) = 3 A - \tr[A] I_2 $, which allows us to write
\begin{equation}
\begin{aligned}
       \hrl = \left(\mathrm{M}^\dagger_{\mathcal{P}_1} \mps \right) \iv \left(\mathrm{M}^\dagger_{\mathcal{P}_1}(c)\right) &= 3 \left(3 \cdot \frac13\sum_{(o, s)} c_{(o,s)}\kb{o, s} - \frac1 3 \sum_{(o, s)} c_{(o, s)}I_2 \right) \\
        &= 3 \sum_{(o,s)} c_{(o,s)} \kb{o,s} - I_2,
\end{aligned}
\end{equation}
where the last step uses $\sum_{(o,s)} c_{(o,s)} = 1$.
For a general choice of $n_\sfa \equiv n$, the local Pauli POVM is simply the $n$-fold product of single-qubit Pauli POVMs. 
\begin{equation}
    \mathcal{P}_n := \bigtimes_{\inn} \mathcal{P}_1 = \frac1{3^n}\left\{ \kb{\vec{o}, \vec{s}} : \vec o \in \{+, -\}^n, \vec{s} \in \{x,y,z\}^n \right\},
\end{equation}
with $\ket{\vec{o}, \vec{s}} := \ket{o_1, s_1} \otimes \cdots \otimes \ket{o_n, s_n}$.
We then have
\begin{equation}
    ( \mathrm{M}^\dagger_{\mathcal{P}_n} \mpn)= \frac1{3^{n}} \Delta_{2/3}^{\otimes n} \quad  \Longrightarrow \quad      ( \mathrm{M}^\dagger_{\mathcal{P}_n} \mpn) \iv = 3^{n} (\Delta_{2/3} \iv )^{\otimes n},
\end{equation}
which allows us to write the LS estimate for $n$-qubit local Pauli measurement as
\begin{equation} \label{Eq:LSEstimate}
\begin{aligned}
        \hrl = \left(\mathrm{M}^\dagger_{\mathcal{P}_n} \mpn \right) \iv \left(\mathrm{M}^\dagger_{\mathcal{P}_n}(c)\right) &=   \sum_{(\vo, \vs)}  c_{(\vo, \vs)} \bigotimes_{i \in [n]} (3 \kb{o_i, s_i} - I_2).
\end{aligned}
\end{equation}

\paragraph{LS estimate approximates the true state.}
The key idea is that as the number of measurements $N$ approaches infinity, the LS estimate approaches the true state.
As shown in~\cite{guta2020fast}, using the matrix Bernstein inequality~\cite{tropp2015introduction}, {the probability of a deviation of any fixed size decays exponentially in $N$}:
\begin{equation} \label{Eq:LSConcentration}
    \mathrm{Pr}[\| \hrl - \rho\|_\infty \geq \eps] \leq d \exp\left(- \frac{3 N \eps^2}{8 \, g(d)}\right), \quad \eps \in [0, 1],
\end{equation}
where $\| \cdot\|_\infty$ is the spectral norm and $g(d) = 3^n = d^{\log_2 3}$ for $n = \log_2 d$ qubits measured in the local Pauli basis. {Throughout the remainder of this section, the \emph{concentration event} is the event $\{\|\hrl - \rho\|_\infty \leq \eps\}$, whose probability~\cref{Eq:LSConcentration} bounds from below by $1 - \delta$.}

\subsection{The projection step} \label{App:ProjectionStep}
The LS estimate is Hermitian and, for local Pauli measurements, of unit trace, but generally not PSD, thus it has to be \textit{projected} to the PSD cone, yet this need not be a true projection.
Existing works use different approaches to achieve this. \textcite{guta2020fast} use the Frobenius projection onto $\dna$ (mapping it to the nearest density matrix in Frobenius distance), computed by the algorithm of~\cite{smolin2012Efficient}; \textcite{surawy2022projected} generalize it to a \textit{thresholded} projection (their Algorithm 1, restated here as~\cref{Alg:SparseDensityEstimate}).
Briefly, the core idea behind thresholded projection is to choose a nonnegative threshold $\tau \geq 0$ and modify the spectrum $(\hat \lambda_i)_{i \in [d]}$  of $\hrl$ as follows:
\begin{equation}
    \hat \mu_i = \begin{cases}
        0 &: \hat \lambda_i \leq \tau, \\
        \hat \lambda_i + \tau &: \hat \lambda_i > \tau.
    \end{cases}
\end{equation}
{That is,} small and negative eigenvalues (noise) are annihilated while large eigenvalues (signal) are  boosted. 
Indeed, such a step promotes sparsity, which is typically desirable in QPT. 
Observe that the modified spectrum need not sum up to 1, and thus at the very next step the trace is restored, in one of two ways depending on the direction of the violation. 
If the filtered spectrum is \emph{supernormalized}, a common level is subtracted from every surviving eigenvalue until unit trace is restored. 
If it is \emph{subnormalized}, the shortfall is made up by walking down the raw LS spectrum and reinstating eigenvalues top-down until the trace reaches one, so directions eliminated by the filter can be brought back. Both branches are written out in~\cref{Alg:SparseDensityEstimate} to return a unit-trace estimate $\hr$.  
Eigenvalues are stated in \textit{descending} order, the convention that makes the threshold and cumulative-sum steps easiest to read.

\begin{algorithm}
\caption{Thresholded density estimation~\cite[Algorithm~1]{surawy2022projected}.}
\label{Alg:SparseDensityEstimate}

\DontPrintSemicolon
\SetKwProg{Fn}{Function}{:}{}
\SetKwFunction{FProjCP}{projCP}

\Fn{\FProjCP{${X}, \tau$}}{
    \tcp{$X$ Hermitian of unit trace; threshold $\tau \geq 0$}
    \tcp{1. Eigendecomposition and Sorting}
    $\boldsymbol{\lambda}, V \gets {\text{eigendecompose}(X)}$ \;
    
    Sort $\boldsymbol{\lambda}$ descending and permute columns of $V$ to match \;
    
    $\tau \gets$ input threshold \;
    
    \tcp{2. Apply threshold filter}
    $\boldsymbol{\mu} \gets \text{where}(\boldsymbol{\lambda} > \tau,\ \boldsymbol{\lambda} + \tau,\ 0)$\tcc*{zero-out small/negative, shift up large}
    
    \tcp{3. Setting to unit trace}
    \eIf{$\sum_i \mu_i \ge 1$}{
        \tcp{Scenario A: Supernormalized estimate (subtract mass from every $\mu_i > 0$.)}
        Find $x_0$ such that $\sum \max(\boldsymbol{\mu} - x_0, 0) = 1$ \;
        $\boldsymbol{\mu} \gets \max(\boldsymbol{\mu} - x_0, 0)$ \;
    }{
        \tcp{Scenario B: Subnormalized estimate (rescue raw eigenvalues top-down)}
        $\boldsymbol{C}_k \gets \sum_{i=1}^k \lambda_i $, with $\boldsymbol{C}_0 := 0$ \tcc*{$k$-Cumsum of LS eigenvalues; empty slices are empty.}
        Find smallest index $k$ such that $\boldsymbol{C}_k + k \cdot \tau \ge 1$ \;
        
        $\boldsymbol{\mu}[1 \dots k-1] \gets \boldsymbol{\lambda}[1 \dots k-1] + \tau$ \tcc*{Larger eigenvalues inflated by $\tau$}
        $\boldsymbol{\mu}[k] \gets 1 - (\boldsymbol{C}_{k-1} + (k-1) \cdot \tau)$ \tcc*{Partially inflate $k$-th eigenvalue}
        $\boldsymbol{\mu}[k+1 \dots d] \gets 0$ \tcc*{zero-out the rest}
    }
    
    \KwRet $V \text{diag}(\boldsymbol{\mu}) V^\dagger$ \;
}
\end{algorithm}

\cite[Lemma 2]{surawy2022projected} shows that the output $\hr$ of the above algorithm satisfies, for any threshold $\tau \leq \|\hrl - \rho\|_\infty$,
\begin{equation} \label{Eq:CP1InftyBound}
    \|\hr  - \rho\|_\infty \leq 2 \| \hrl - \rho \|_\infty.
\end{equation}
Two such choices that~{\textcite{surawy2022projected} suggest} are $\tau = 0$ and $\tau = -\lmin(\hrl)$,\footnote{We implement the rule as $\tau = \max(0, -\lmin(\hrl))$, since a negative threshold falls outside the stability guarantee of~\cite[Lemma~2]{surawy2022projected}; on every instance in this article the two rules coincide.} and thresholding with these values allows the translation of the sample complexity rates of the LS estimation.   
While the former choice of $\tau=0$ constitutes a true Frobenius projection and thus is contractive, it can typically return a high-rank estimate which is not desirable. 
Consequently,~{\textcite{surawy2022projected} use} the latter value for the threshold, which typically returns low-rank density estimates.

Note that different thresholds suit different regimes.
When the true state is full rank or only approximately low rank, $\tau = 0$ is the right choice as the estimate is exactly the Frobenius projection onto $\dna$ (\cref{Sec:ChannelRobustness}), and tends to have a high Choi rank.
When the true state has low rank $r$, one wants exactly $r$ eigenvalues to survive, and neither choice above achieves this as $N$ grows: at $\tau = 0$ every eigenvalue above the subtracted {level} survives, so the returned rank is set by how many noise eigenvalues happen to be positive and does not shrink with $N$.
The rule $\tau = -\lmin(\hrl)$ fares no better.
It discards every eigenvalue below the magnitude of the most negative one, so a noise eigenvalue survives exactly when the largest positive noise eigenvalue exceeds the largest negative one in magnitude.
The noise spectrum of the LS estimate is nearly symmetric about zero, so these two extremes are comparable at every $N$ and which of them is larger is close to a coin flip.
The rule therefore recovers the rank exactly in roughly half the trials at every $N$, with no trend (\cref{Fig:ThresholdStrategies}).

Since exact rank recovery buys the faster rate~(see \cref{Cor:DeterministicFastRate}), we want a rule that can do so.
To this end, we propose the \textit{Bernstein threshold}: $\tau = \beta_N$, with
\begin{equation} \label{Eq:BernsteinThreshold}
    \beta_N := \sqrt{\frac{8 g(d) \log(d/\delta)}{3N}}
\end{equation}
of~\cref{Eq:LSConcentration}, which corresponds to the radius of the \textit{concentration event} at $N$ samples for a given confidence $\delta$.\footnote{The tail bound~\cref{Eq:LSConcentration} is stated for deviations $\eps \in [0,1]$, so every statement we make at the radius $\beta_N$ presupposes $\beta_N \leq 1$, equivalently $N \geq \tfrac83 g(d)\log(d/\delta)$. 
This is no restriction in the regimes of interest.}
We use it because, unlike the two choices above, it captures the rank of the underlying state \emph{exactly} once $N$ is large enough: \cref{App:SupraNoise} shows that any threshold lying strictly between the noise level and the smallest nonzero eigenvalue does so, and \cref{App:BernsteinThreshold} shows that the Bernstein threshold lands in that interval on the concentration event.

\Cref{Fig:ThresholdStrategies} compares the four threshold strategies on two channels, the rank-$4$ amplitude damping channel at $\dab = 2^6$ and the rank-$2$ channel of~\cref{Fig:Headline} at $\dab = 2^8$.
At $\tau = 0$ the rank grows with $N$, from $6$ to $15$ of $64$ and from $7$ to $18$ of $256$ over the shot range; at $\tau = -\lambda_{\min}(\hrl)$ it fluctuates around the truth, between $4$ and $5$ and between $2$ and $3$, the rank being recovered in only $3$ to $9$ of the $10$ trials at every $N \geq 10^5$.
The Bernstein threshold under-counts before its empirical onset, admitting the eigenvalues one at a time as each rises above $\beta_N$, and is exact for every trial from $N = 3 \times 10^7$ on the damping channel and $10^6$ on the rank-$2$ channel, where the guaranteed onsets of~\cref{Cor:BernsteinRank}, $\beta_N < \lambda_r/2$, are $6 \times 10^7$ and $3 \times 10^6$; the halved threshold $\beta_N/2$ behaves in the same way with every step half a decade earlier, exact from $10^7$ and $3 \times 10^5$.
In accuracy, before its empirical onset the Bernstein threshold is the least accurate curve, since it discards true eigenvalues; past it the infidelity drops sharply and follows the $1/N$ rate of~\cref{Cor:DeterministicFastRate}, at fitted exponents $-1.01$ and $-0.99$, while $\tau = 0$ continues at $1/\sqrt N$, at $-0.50$ on both channels.
The two branches of $\tau = -\lambda_{\min}(\hrl)$ follow the two rates, the rank-recovered branch coinciding with the Bernstein curves and the rank-missed branch converging at $1/\sqrt N$ (hence one median per branch; see~\cref{Sec:AccuracyRate}).

\begin{figure}[t]
    \centering
    \includegraphics[width=\textwidth]{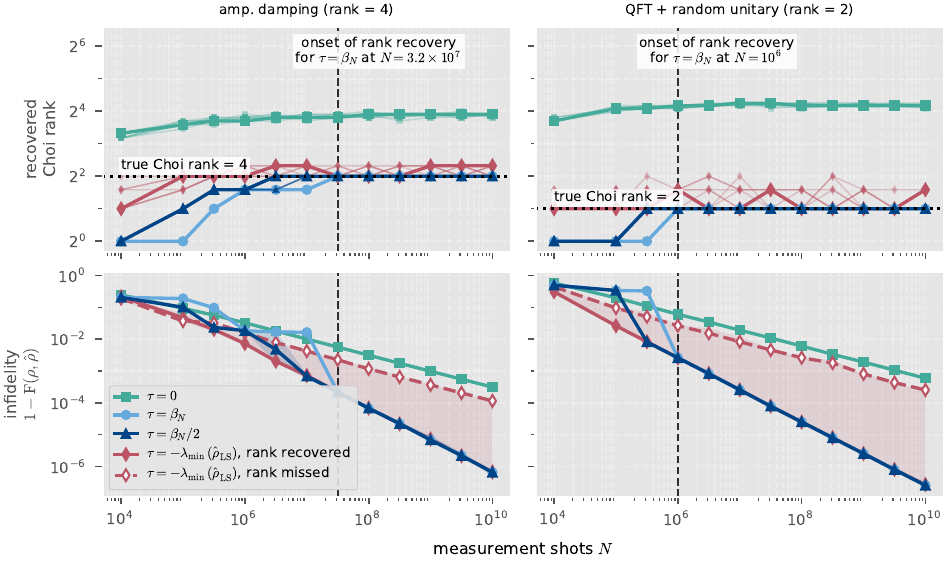}
    \caption{The threshold strategies of~\cref{Alg:SparseDensityEstimate} on two channels: independent amplitude damping on two of three qubits ($\dab = 2^6$, Choi rank $4$, $\lambda_4 = 0.031$; left) and the rank-$2$ channel of~\cref{Fig:Headline} ($\dab = 2^8$, $\lambda_2 = 0.48$; right).
    Every estimate is TP regularized by the fidelity projection; $10$ trials per shot count.
    Top row, recovered numerical Choi rank, translucent lines single trials and opaque lines their mode; bottom row, infidelity, medians with min--max bands, for $\tau = -\lambda_{\min}(\hrl)$ one median per branch, the trials that recovered the rank (filled markers) and those that did not (open markers).
    The dashed line marks the empirical onset of $\tau = \beta_N$, the smallest plotted shot count from which every trial returns the true rank.}
    \label{Fig:ThresholdStrategies}
\end{figure}

\subsection{Exact rank recovery at supra-noise thresholds} \label{App:SupraNoise}

In~\cite[Lemma 2]{surawy2022projected}, the bound~\cref{Eq:CP1InftyBound} is proved only for thresholds below the realized noise, $\tau \leq \|\hrl - \rho\|_\infty${,} as it suffices for their arguments.
Our density estimator sets $\tau$ at a multiple of the concentration radius, which on the concentration event sits \emph{above} the realized noise---precisely the regime that makes rank recovery certain, and precisely the regime~\cref{Eq:CP1InftyBound} does not cover.
The intuition is a gap: on the concentration event every null eigenvalue (which corresponds to $\li = 0$) of $\hrl$ lies within $\eps$ of $0$ and every signal eigenvalue (which corresponds to $\li > 0$) within $\eps$ of a value at least $\lambda_r$, so a threshold placed above $\eps$ but below $\lambda_r - \eps$ sits strictly between the noise band and the signal floor, and the filter separates the two without error.
The following lemma closes that gap: for \textit{supra-noise} thresholds, \cref{Alg:SparseDensityEstimate} recovers the rank exactly \emph{and} satisfies the stability bound $\|\hr - \rho\|_\infty \leq \tau + \eps \leq 2\tau$, the analogue of~\cref{Eq:CP1InftyBound} with the noise bound $\eps$ replaced by the threshold $\tau$, and the sharper $\|\hr - \rho\|_\infty \leq 2\eps$ on the interval where the rank is recovered.
\begin{tbox}
    \begin{lemma}[Stability at supra-noise thresholds] \label{Lem:SupraNoiseStability}
        Let $\rho \in \dna$ with $\rank(\rho) = r$ and smallest nonzero eigenvalue $\lambda_r$, and let $\hrl$ be the LS estimate which is Hermitian with unit trace and $\|\hrl - \rho\|_\infty \leq \eps$.
        Let $\hr$ be the output of~\cref{Alg:SparseDensityEstimate} at any threshold $\tau \geq \eps$. Then $\rank(\hr) \leq r$, and
        \begin{equation} \label{Eq:SupraNoiseStability}
            \|\hr - \hrl\|_\infty \leq \tau, \qquad \|\hr - \rho\|_\infty \leq \tau + \eps \leq 2 \tau .
        \end{equation}
        Moreover, if $\tau$ lies in the interval $[\eps, \lambda_r - \eps)$ then $\rank(\hr) = r$, $\|\hr - \hrl\|_\infty \leq \eps$ and $\|\hr - \rho\|_\infty \leq 2\eps$.
    \end{lemma}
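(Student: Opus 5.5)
The plan is to work entirely in the eigenbasis of $\hrl$. \Cref{Alg:SparseDensityEstimate} returns $\hr = V\operatorname{diag}(\boldsymbol\mu)V^\dagger$ with the same eigenvectors as $\hrl$. Hence $\|\hr - \hrl\|_\infty = \max_i |\mu_i - \hat\lambda_i|$, and once the first bound in \cref{Eq:SupraNoiseStability} is established, the second follows from the triangle inequality and $\eps \le \tau$.

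Two spectral facts will be used throughout.
\begin{itemize}
\item \emph{Weyl's inequality.} For the descending spectra, $|\hat\lambda_i - \lambda_i| \le \eps$. In particular, $\hat\lambda_i \in [-\eps, \eps] \subseteq [-\tau,\tau]$ for $i > r$, and $\hat\lambda_i \ge \lambda_r - \eps$ for $i \le r$.
\item \emph{Ky Fan (variational) bounds.} Let $P$ be the support projector of $\rho$. Then $C_r \ge \tr[P\hrl] = 1 + \tr[P(\hrl - \rho)] \ge 1 - r\eps$. Likewise, let $Q$ be the spectral projector of $\hrl$ onto any $s$ eigenvalues. Then their sum is $\tr[Q\hrl] \le \tr[Q\rho] + s\eps \le 1 + s\eps$.
\end{itemize}

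\textbf{Rank bound.} By Weyl, at most $r$ eigenvalues exceed $\tau$, so at most $s \le r$ entries survive the filter. Scenario A only subtracts a level from the survivors, so the support stays within these $s$ directions. In Scenario B, the first Ky Fan bound gives $C_r + r\tau \ge 1 - r\eps + r\tau \ge 1$, so the index $k$ found by the algorithm satisfies $k \le r$. In both cases $\rank(\hr) \le r$.

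\textbf{Entrywise deviation $|\mu_i - \hat\lambda_i| \le \tau$.}
\begin{itemize}
\item \emph{Zeroed directions.} These have $\hat\lambda_i \le \tau$, and $\hat\lambda_i \ge -\eps \ge -\tau$. For $i \le r$ the latter follows from $\hat\lambda_i \ge \lambda_r - \eps \ge -\eps$; for $i > r$ it follows from Weyl. So $|0 - \hat\lambda_i| \le \tau$.
\item \emph{Scenario A.} Here $x_0 \ge 0$ because the filtered mass is at least $1$. The key claim, and the step I expect to be the main obstacle, is $x_0 \le 2\tau$. I would prove it by showing that level $2\tau$ already undershoots: apply the second Ky Fan bound to the $s$ survivors to get
\[
\textstyle\sum_{\text{surv}} (\hat\lambda_i + \tau - 2\tau)_+ = \sum_{\text{surv}}(\hat\lambda_i - \tau) \le 1 + s(\eps - \tau) \le 1.
\]
Then unclipped survivors move by $|\tau - x_0| \le \tau$. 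Clipped survivors satisfy $\tau < \hat\lambda_i \le x_0 - \tau \le \tau$, which is impossible, so there are no clipped survivors.
\item \emph{Scenario B.} Entries $1,\dots,k-1$ move by exactly $\tau$. By minimality of $k$, the partial entry satisfies $0 < \mu_k \le \hat\lambda_k + \tau$. It is a non-survivor, since all survivors precede it when the filtered mass is below $1$, so $\hat\lambda_k \le \tau$ and therefore $\mu_k \ge 0 \ge \hat\lambda_k - \tau$.
\end{itemize}

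\textbf{Sharper statement for $\tau \in [\eps, \lambda_r - \eps)$.} Weyl now shows that exactly the top $r$ eigenvalues exceed $\tau$. The first Ky Fan bound gives filtered mass $C_r + r\tau \ge 1$, so only Scenario A occurs. Combining both Ky Fan bounds with $Q = P_r$ gives $C_r \in [1 - r\eps, 1 + r\eps]$. If no clipping occurs, then $x_0 = \tau + (C_r - 1)/r$, so $|x_0 - \tau| \le \eps$. This is self-consistent: each survivor becomes $\hat\lambda_i + \tau - x_0 \ge \lambda_r - 2\eps > \tau - \eps \ge 0$. Hence $\rank(\hr) = r$, every survivor moves by at most $\eps$, and the zeroed entries lie in $[-\eps,\eps]$. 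This gives $\|\hr - \hrl\|_\infty \le \eps$, and then $\|\hr - \rho\|_\infty \le 2\eps$ by the triangle inequality.
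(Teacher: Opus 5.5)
Your proof is correct and takes essentially the same route as the paper's: you work in the eigenbasis of $\hrl$, use Weyl to confine the survivors to the top $r$ indices, bound the water-filling level $x_0$ in Scenario A to rule out clipped survivors, and in Scenario B use the minimality of $k$ together with $k \le r$ (from $C_r + r\tau \ge 1$) to control the partial entry. The differences are cosmetic: you get $x_0 \le 2\tau$ by evaluating the water-filling sum at level $2\tau$, whereas the paper gets the tighter $x_0 \le \tau + \eps$ from the explicit formula over the unclipped survivors, and you phrase the trace bounds via Ky Fan rather than index-wise Weyl sums.
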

\end{tbox}

\begin{proof}
    We first prove all claims under the condition $\tau \in [\eps, \lambda_r - \eps)$, and then drop its upper bound.
    Let $(\hli)_{i \in [d]}$ and $(\li)_{i \in [d]}$ be the decreasingly-ordered spectrum of $\hrl$ and $\rho$, respectively.
    By Weyl's inequality~\cite{bhatiamatrixanalysis}, $|\hli- \li| \leq  \|\hrl - \rho\|_\infty \leq \eps$ for every $i \in [d]$.
    The spectral-norm closeness implies that eigenvalues of the LS estimate $(\hli)_{i > r}$ are annihilated by the threshold:
    \begin{equation}
        i > r \Rightarrow \li = 0 \Rightarrow |\hli| \leq \eps \leq \tau \Rightarrow \hat\mu_i = 0. 
    \end{equation}
    For the \emph{leading eigenvalues} with index $i \leq r$, where too we have $|\li - \hli| \leq \eps$,
    \begin{equation}
       -\eps \leq  \hli - \li \Rightarrow \hli \geq \li - \eps  \geq \lambda_r - \eps > \tau. 
    \end{equation}
    Moreover, the sum of the shifted eigenvalues exceeds one:    
    \begin{equation}
        \sum_{i \in [r]} (\hli +  \tau) =\sum_{i \in [r]} \hli + r \tau   \geq  \sum_{i \in [r]} \li + r (\tau - \eps) \geq 1,
    \end{equation}    
    where the last inequality holds since $\tau \geq \eps$ and $\sum_{i \in [r]} \lambda_i = 1$.
    Consequently, the algorithm enters Scenario A, where the estimate is supernormalized.

    Let $h(x) := \sum_{i \in [r]} \max (\hli + \tau - x, 0)$. 
    Note that $h$ is continuous and strictly decreasing whenever it is positive, $h(0) \geq 1$, and $h(x) = 0$ for all $x \geq \max_i (\hli) + \tau$. 
    Consequently, there exists a unique $x_0$ where $h(x_0) = 1$.
    We now show that $|\tau - x_0| \leq \eps$ or equivalently $x_0 \in [\tau - \eps, \tau + \eps]$, by evaluating $h$ at the two extreme points of the above interval:
    \begin{equation}
        h(\tau - \eps) = \sum_{i \in [r]} \max (\hli + \eps, 0) \geq \sum_{i \in [r]} \li  = 1 \qaq h(\tau + \eps) = \sum_{i \in [r]} \max(\hli - \eps, 0) \leq \sum_{i \in [r]} \li = 1,
    \end{equation}
    where the inequality steps come from, via Weyl's inequality, $ |\hli - \li| \leq \eps$ for all $i \in [r]$. 
    Since $h$ continuously (strictly) decreases in this interval, we have $x_0 \in [\tau - \eps, \tau + \eps]$. 

    We now show that for all $i \in [r]$, $\hat \mu_i \equiv \hli + \tau - x_0 > 0$, which would imply all shifted eigenvalues survive and $\mathrm{rank}(\hr) = r$. 
    Here we denote $\{\hat \mu_i\}_{i \in [r]}$ to be the non-zero eigenvalues of $\hr$. 
    Indeed,
    \begin{equation}
        \hli + \tau - x_0 \geq \hli - \eps > \tau - \eps \geq 0,
    \end{equation}
    as required. 
    We now show stability of $\hr$. 
    Since $\hrl$ and $\hr$ share an eigenbasis, we have $\| \hrl - \hr\|_\infty = \max_{i \in [d]} | \hli -  \hat \mu_i|$.
    For any $i \in [r]$, we have $| \hli -  \hat \mu_i| = |\tau - x_0|  \leq \eps.$
    For $r < i \leq d$, $\li = 0$, so $|\hli - \hat\mu_i| = |\hli| \leq \eps$ by Weyl's inequality.
    Consequently, $\|\hr - \hrl\|_\infty \leq \eps \leq \tau$, which proves the first inequality of~\cref{Eq:SupraNoiseStability} and the bound $\|\hr - \hrl\|_\infty \leq \eps$ of the last claim.
    Using the triangle inequality then yields the second inequality, and on the interval the sharper $\|\hr - \rho\|_\infty \leq 2\eps$.

    We now drop the upper bound of the interval and assume only $\tau \geq \eps$.
    The first step is unchanged: every $\hli$ with $i > r$ is annihilated, so the survivors of the threshold filter, the indices with $\hli > \tau$, are among the first $m \leq r$ indices, while every non-survivor satisfies $-\tau \leq -\eps \leq \hli \leq \tau$.
    Without the upper-bound on $\tau$ some signal eigenvalues may be annihilated by a sufficiently large $\tau$.
    Thus either branch of step 3 can occur and we show that, regardless of which branch, $|\hat\mu_i - \hli| \leq \tau$ for all $i$ and at most $r$ of the $\hat\mu_i$ are nonzero.
    Thus $\|\hr - \hrl\|_\infty \leq \tau$ and $\rank(\hr) \leq r$ as before.
    
    We first consider the \textit{supernormalized} branch, where the shifted survivors sum beyond unity, $\sum_{i \leq m} (\hli + \tau) \geq 1$.
    The algorithm then finds the level $x_0 \geq 0$ at which $\sum_{i \leq m} \max(\hli + \tau - x_0, 0) = 1$ and returns $\hat\mu_i = \max(\hli + \tau - x_0, 0)$ for the survivors and $\hat\mu_i = 0$ for the non-survivors.
    Let $\mathcal{S} \subseteq [m]$ be the index set of survivors left positive by the mass-subtraction; $\mathcal{S} \neq \emptyset$ since the output has unit trace.
    Three facts.
    (a) $x_0$ is bounded, as summing $\hat\mu_i$ over $\mathcal{S}$ to one gives $\sum_{i \in \mathcal{S}} (\hli + \tau - x_0) = 1$, that is,
    \begin{equation}
        x_0 = \tau + \frac{\sum_{i \in \mathcal{S}} \hli - 1}{|\mathcal{S}|} .
    \end{equation}
    By Weyl's inequality $\hli \leq \li + \eps$, and $\sum_{i \in \mathcal{S}} \li \leq \tr [\rho] = 1$ for any index set, so $\sum_{i \in \mathcal{S}} \hli \leq 1 + |\mathcal{S}| \eps$ and hence $x_0 \leq \tau + \eps$.
    Together with $x_0 \geq 0$ this places $\tau - x_0$ in $[-\eps, \tau]$, so $|\tau - x_0| \leq \tau$ since $\eps \leq \tau$.
    (b) No survivor is zeroed.
    A survivor $i$ with $\hat\mu_i = 0$ would need $\hli + \tau \leq x_0 \leq \tau + \eps$, that is $\hli \leq \eps \leq \tau$, which contradicts $\hli > \tau$.
    Hence $\mathcal{S} = [m]$: every survivor keeps positive weight, and the nonzero $\hat\mu_i$ are exactly the $m \leq r$ survivors.
    (c) Every eigenvalue moves by at most $\tau$.
    For a survivor, $\hat\mu_i - \hli = \tau - x_0$, of size at most $\tau$ by (a); for a non-survivor, $\hat\mu_i - \hli = -\hli$, of size at most $\tau$ since $-\tau \leq \hli \leq \tau$.
    
    We now consider the \textit{subnormalized} branch, where the shifted survivors sum to sub-unity: $\sum_{i \leq m} (\hli + \tau) < 1$.
    The algorithm then adds mass to the eigenvalues in decreasing order.
    Let $C_k := \sum_{i \leq k} \hli$ be the $k$-cumulative-sum. 
    The algorithm takes the smallest index $k'$ with $C_{k'} + k'\tau \geq 1$, boosts the first $k' - 1$ eigenvalues by $\tau$, gives index $k'$ the remaining mass $\hat\mu_{k'} = 1 - C_{{k'}-1} - ({k'}-1)\tau$, and zeroes the rest.
    Observe that $m < k' \leq r$:
    for the upper-bound, we have 
    \begin{equation}
        C_r + r \tau = \sum_{i\in [r]} \hli  + r \tau \geq \sum_{i \in [r]} \lambda_i - r \eps + r \tau  = 1 + r(\tau - \eps) \geq 1. 
    \end{equation}
    So $r$ already satisfies the stopping rule, hence the earliest index must be at most $r$.
    The lower-bound $k' > m$ holds as $C_m + m \tau = \sum_{i \in [m]} (\hli + \tau) < 1$ is precisely the reason why the algorithm entered this branch.
    Since the survivors are the first $m$ indices and $k' > m$, index $k'$ and every later index is a non-survivor: $\hli \leq \tau$ for $i \geq k'$, and together with $\hli \geq \li - \eps \geq -\tau$ this gives $|\hli| \leq \tau$ there.
    Moreover $0 < \hat\mu_{k'} \leq \hat\lambda_{k'} + \tau$: the lower bound because the minimality of $k'$ implies $C_{k'-1} + (k'-1)\tau < 1$ and the upper bound because $C_{k'} + k'\tau \geq 1$.
    Now we compare corresponding eigenvalues $\hat \mu_i$ and $\hli$ to obtain upper-bound $\| \hr - \hrl \|_\infty$:
    \begin{equation}
        \hat \mu_i - \hli  \begin{cases}
            = \tau &: i < k' \quad (\hat\mu_i := \hli + \tau)\\
            \in [-\tau, \tau]    &: i = k'  \quad (\hat\mu_{k'} > 0,\ \hat\lambda_{k'} \leq \tau \text{ and } \hat\mu_{k'} \leq \hat\lambda_{k'} + \tau) \\
            = - \hli &: i > k'  \quad (\hat\mu_i := 0)\\
        \end{cases}.
    \end{equation}
    In all cases, $|\hat \mu_i - \hli|\leq \tau$ and thus $\|\hr - \hrl\|_\infty \leq \tau$.
    Moreover, $\rank(\hr) = k' \leq r$.

    In both branches the triangle inequality then gives $\|\hr - \rho\|_\infty \leq \tau + \eps \leq 2 \tau$.
\end{proof}

We close with a structural remark: on the interval $[\eps, \lambda_r - \eps)$ the threshold has no effect beyond selecting the survivors.
To see this, follow one signal eigenvalue $\hli$, $i \leq r$, through the algorithm.
The threshold step shifts it up to $\hli + \tau$, and the mass-subtraction step then lowers it by the level $x_0$ that restores unit trace, so its output value is $\hat\mu_i = \hli + \tau - x_0$.
When all $r$ signal eigenvalues survive and none is zeroed by the subtraction, the unit-trace condition $\sum_{i \leq r} (\hli + \tau - x_0) = 1$ can be solved for the level,
\begin{equation}
    x_0 = \tau + \frac{\sum_{j \leq r} \hat\lambda_j - 1}{r},
\end{equation}
and the threshold appears in it with the same sign as in the shift.
Substituting, the two occurrences of $\tau$ cancel exactly:
\begin{equation}
    \hat\mu_i = \hli + \tau - x_0 = \hli - \frac{\sum_{j \leq r} \hat\lambda_j - 1}{r}, \qquad i \in [r],
\end{equation}
and $\hat\mu_i = 0$ for $i > r$.
The output is thus the top-$r$ truncation of $\hrl$, renormalized by the uniform shift that returns its trace to one, and it does not depend on $\tau$ at all.
Any two thresholds in the interval therefore return the \emph{same} estimate, which is why the curves for $\tau = \beta_N$ and $\tau = \beta_N/2$ coincide once both have recovered the rank (\cref{Fig:ThresholdStrategies} and \cref{Sec:Numerics}).

\subsection{The Bernstein threshold recovers the rank} \label{App:BernsteinThreshold}
We now show that the Bernstein threshold $\tau = \beta_N$, which is set from $N${,} $d$ and $\delta$ alone, before the data are seen, inherits the guarantee of~\cref{Lem:SupraNoiseStability} {on the concentration event, which has probability at least $1 - \delta$, as soon as $N$ is large enough for it to land in the interval $[\eps, \lambda_r - \eps)$}.
\begin{tbox}
    \begin{cor}[Exact rank recovery at the Bernstein threshold] \label{Cor:BernsteinRank}
        Let $\rank(\rho) = r$ with smallest nonzero eigenvalue $\lambda_r$. Suppose we run~\cref{Alg:SparseDensityEstimate} at $\tau = \beta_N$.
        If $\beta_N < \lambda_r/2$, equivalently $N > \tfrac{32}{3} g(d) \log(d/\delta)/\lambda_r^2$, then with probability at least $1 - \delta$, the following hold:
        \begin{equation}
            \rank(\hr) = r \qaq \|\hr - \rho\|_\infty \leq 2\,\beta_N .
        \end{equation}
    \end{cor}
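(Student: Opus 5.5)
The plan is to reduce the statement to the interval case of \cref{Lem:SupraNoiseStability}, working on the concentration event with the deviation parameter set equal to the threshold, $\eps := \beta_N$.

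First, the event. By \cref{Eq:LSConcentration} at $\eps = \beta_N$, the failure probability is at most
\begin{equation}
    d \exp\Bigl(-\tfrac{3N\beta_N^2}{8g(d)}\Bigr) = d \cdot \tfrac{\delta}{d} = \delta,
\end{equation}
by the definition \cref{Eq:BernsteinThreshold} of $\beta_N$. Hence with probability at least $1-\delta$ we have $\|\hrl - \rho\|_\infty \leq \beta_N$. The tail bound is stated for deviations in $[0,1]$, so we need $\beta_N \leq 1$. This follows automatically, since $\beta_N < \lambda_r/2 \leq 1/2$.

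Next, the interval condition. On this event we have $\tau = \beta_N \geq \eps$. We also have $\tau < \lambda_r - \eps$, because this is equivalent to $2\beta_N < \lambda_r$, which is exactly the hypothesis. The threshold therefore lies in $[\eps, \lambda_r - \eps)$. The hypotheses of \cref{Lem:SupraNoiseStability} then hold: $\hrl$ is Hermitian with unit trace for the local Pauli POVM (\cref{App:LSEstimate}), and $\|\hrl - \rho\|_\infty \leq \eps$. The lemma's \lq\lq moreover\rq\rq{} clause gives $\rank(\hr) = r$ and $\|\hr - \rho\|_\infty \leq 2\eps = 2\beta_N$.

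Finally, the equivalence with the sample-count form. Squaring $\beta_N < \lambda_r/2$ gives $\tfrac{8g(d)\log(d/\delta)}{3N} < \lambda_r^2/4$, which rearranges to $N > \tfrac{32}{3} g(d)\log(d/\delta)/\lambda_r^2$.

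The only delicate point is bookkeeping rather than a real obstacle. The threshold is fixed before the data are seen, so we must check that choosing $\eps = \beta_N$ is legitimate. It is, because the concentration event is defined with a deterministic radius; the boundary case $\tau = \eps$ is allowed by the closed lower end of the interval.
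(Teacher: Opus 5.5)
Your proof is correct and follows the paper's argument: you work on the concentration event with $\eps = \beta_N$, note that $\tau = \beta_N$ is the closed left endpoint of $[\eps, \lambda_r - \eps)$ and that $\beta_N < \lambda_r/2$ is exactly $\tau < \lambda_r - \eps$, and then invoke the interval clause of \cref{Lem:SupraNoiseStability}. Your explicit check that $\beta_N < \lambda_r/2 \leq 1/2$ keeps the tail bound within its stated range $[0,1]$ is a small point the paper only handles in a footnote.
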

\end{tbox}
\begin{proof}
    On the event $\|\hrl - \rho\|_\infty \leq \beta_N$ of~\cref{Eq:LSConcentration}, which has probability at least $1 - \delta$, take $\eps = \beta_N$ in~\cref{Lem:SupraNoiseStability}: $\tau = \beta_N$ is the left end of the interval $[\eps, \lambda_r - \eps)$, and $\beta_N < \lambda_r/2$ is exactly $\tau < \lambda_r - \beta_N$, so the lemma applies with $\tau + \beta_N = 2\beta_N$.
    The condition on $N$ is $\beta_N^2 = \tfrac83 g(d)\log(d/\delta)/N$ inserted into $\beta_N < \lambda_r/2$.
\end{proof}
The threshold therefore captures the rank on the concentration event beyond an explicit onset, that is, with probability at least $1 - \delta$ at every such $N$.
Below the onset it may underfit, returning fewer than $r$ eigenvalues, which is the behavior visible at the smallest shot counts in~\cref{Fig:Headline,Fig:ChannelRobustness}.
Numerically this radius is conservative --- across the instances of~\cref{Fig:AccuracyVsN,Fig:ChannelRobustness} the realized noise $\|\hrl - \rho\|_\infty$ sits at $0.2$ to $0.29$ of $\beta_N$ --- so a threshold well below $\beta_N$ still recovers the rank in practice, and the numerical experiments of~\cref{Sec:Numerics} accordingly use $\tau = \beta_N/2$, though we do not prove it here.

\subsection{Rates in fidelity: linear and quadratic rank conversion} \label{App:Rates}
We now aim to upper-bound the infidelity of the thresholded estimate.
For any $\rho, \sigma \in \dna$ with $\rank(\rho) \leq r$, we have
\begin{equation} \label{Eq:RankConversion}
    1 - \rf(\rho, \sigma) \leq \tfrac12 \|\rho - \sigma\|_1 \leq r \|\rho - \sigma\|_\infty,
\end{equation}
where the first inequality follows from the Fuchs--van de Graaf inequalities~(\eqref{Eq:FuchsVan}) and the second inequality is \cite[Lemma 2]{guta2020fast}.

Combining~\cref{Eq:LSConcentration}, \cref{Eq:CP1InftyBound} and~\cref{Eq:RankConversion} recovers Theorem~1 of~\textcite{guta2020fast}: for any $\rho \in \dna$ with $\rank(\rho) \leq r$, and at either of {the} PLS-QST thresholds, the thresholded estimate satisfies $\rf(\rho, \hr) \geq 1 - \eps$ with probability at least $1 - \delta$ as soon as the number of copies $N$ is at least:
\begin{equation} \label{Eq:GutaQST}
    N \geq \tfrac{32}{3}\, r^2 g(d) \log(d/\delta)/\eps^2 = O\Bigl(\frac{r^2 g(d)}{\eps^2} \log(d/\delta)\Bigr), \qquad g(d) = d^{\log_2 3} \leq d^{1.6} .
\end{equation}
\Cref{Eq:CP1InftyBound} {does} not hold at the Bernstein threshold since $\tau = \beta_N$ exceeds the noise on the concentration event.
\Cref{Lem:SupraNoiseStability} replaces it: at $\tau = \eps = \beta_N$ on the concentration event, its stability guarantee gives $\|\hr - \rho\|_\infty \leq 2\beta_N$ at every $N$ with $\beta_N \leq 1$ and with no onset, and~\cref{Eq:RankConversion} turns this into $1 - \rf(\rho, \hr) \leq 2r\beta_N$, the same bound Guţă et al.\ obtain at the sub-noise thresholds, so the Bernstein threshold inherits the slow rate~\cref{Eq:GutaQST} without a rank bound and never returns more than $r$ eigenvalues.
What the onset buys is the other direction, $\rank(\hr) \geq r$, through~\cref{Cor:BernsteinRank}.
\begin{tbox}
    \begin{theorem}[Bernstein-threshold QST] \label{Thm:SparseQST}
        Let $\rho \in \dna$ with $\rank(\rho) = r$ and smallest nonzero eigenvalue $\lambda_r > 0$, let $\delta \in (0,1)$ and $\eps \in (0, 1]$, and let $\hr$ be the output of~\cref{Alg:SparseDensityEstimate} at the Bernstein threshold $\tau = \beta_N$, for the sample size of~\cref{Eq:GutaQST}.
        Then
        \begin{equation} \label{Eq:BernsteinThresholdBound}
            \beta_N = \sqrt{\frac{8 g(d) \log(d/\delta)}{3N}} \leq \frac{\eps}{2 r},
        \end{equation}
        and with probability at least $1 - \delta$ the following hold.
        \begin{enumerate}
            \item \emph{Slow rate.} For every such $\eps$,
            \begin{equation} \label{Eq:BernsteinQST}
                \rank(\hr) \leq r \qaq \rf(\rho, \hr) \geq 1 - \eps.
            \end{equation}
            \item \emph{Rank certificate.} If moreover $\eps < r \lambda_r$, then $\beta_N < \lambda_r/2$ and $\rank(\hr) = r$.
        \end{enumerate}
    \end{theorem}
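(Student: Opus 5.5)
The proof is a chain of results already in this appendix, all evaluated on the single concentration event $\{\|\hrl - \rho\|_\infty \leq \beta_N\}$. By~\cref{Eq:LSConcentration} with $\eps = \beta_N$ this event has probability at least $1-\delta$: substituting the definition~\cref{Eq:BernsteinThreshold}, the tail bound equals exactly $d \exp(-\log(d/\delta)) = \delta$. This needs $\beta_N \leq 1$, which the bound below supplies.

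First we verify~\cref{Eq:BernsteinThresholdBound}. With $N \geq \tfrac{32}{3} r^2 g(d)\log(d/\delta)/\eps^2$ from~\cref{Eq:GutaQST} we get
\[
\beta_N^2 = \tfrac83\, g(d)\log(d/\delta)/N \leq \tfrac83 \cdot \tfrac{3\eps^2}{32 r^2} = \eps^2/(4r^2),
\]
so $\beta_N \leq \eps/(2r) \leq 1/2$. In particular $\beta_N \leq 1$, as the footnote to~\cref{Eq:BernsteinThreshold} requires.

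For the slow rate, on the concentration event we apply~\cref{Lem:SupraNoiseStability} with its noise parameter set to $\beta_N$ and threshold $\tau = \beta_N \geq \beta_N$. The general part of that lemma, which does not need the upper end of the interval, gives $\rank(\hr) \leq r$ and $\|\hr - \rho\|_\infty \leq 2\beta_N$. Since $\hr$ is a density matrix and $\rank(\rho) = r$,~\cref{Eq:RankConversion} applies and gives
\[
1 - \rf(\rho,\hr) \leq r\|\hr - \rho\|_\infty \leq 2r\beta_N \leq \eps .
\]

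For the rank certificate, if $\eps < r\lambda_r$ then $\beta_N \leq \eps/(2r) < \lambda_r/2$. \Cref{Cor:BernsteinRank} then gives $\rank(\hr) = r$ on the same concentration event. Equivalently, $\tau = \beta_N$ lies in $[\beta_N, \lambda_r - \beta_N)$, so the second part of~\cref{Lem:SupraNoiseStability} applies. Because both items rest on one event, they hold simultaneously with probability at least $1-\delta$, with no union bound.

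There is no real obstacle here. The one point to check is that~\cref{Lem:SupraNoiseStability} is applied with its noise parameter $\eps$ set to $\beta_N$ rather than to the accuracy $\eps$ of this theorem. The other is that $\hr$ is a genuine density matrix, which it is because~\cref{Alg:SparseDensityEstimate} returns nonnegative eigenvalues summing to one; this is what licenses the use of~\cref{Eq:RankConversion}.
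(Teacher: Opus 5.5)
Your proposal is correct and follows the paper's proof essentially step for step. You substitute the sample size of~\cref{Eq:GutaQST} into $\beta_N$ to get $\beta_N \leq \eps/(2r) \leq 1$, so the tail bound applies. On the concentration event you invoke~\cref{Lem:SupraNoiseStability} with $\tau = \eps = \beta_N$ and convert through~\cref{Eq:RankConversion} for the slow rate, then use~\cref{Cor:BernsteinRank} on the same event for the rank certificate.
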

\end{tbox}
\begin{proof}
    \Cref{Eq:BernsteinThresholdBound} follows from substituting~\cref{Eq:GutaQST} in the definition~\cref{Eq:BernsteinThreshold} of $\beta_N$; in particular $\beta_N \leq 1$, so the tail bound~\cref{Eq:LSConcentration} applies and the concentration event $\|\hrl - \rho\|_\infty \leq \beta_N$ has probability at least $1 - \delta$.
    On that event, \cref{Lem:SupraNoiseStability} with $\tau = \eps = \beta_N$ gives $\rank(\hr) \leq r$ and $\|\hr - \rho\|_\infty \leq 2\beta_N$, and~\cref{Eq:RankConversion} with $\sigma = \hr$ then gives
    \begin{equation}
        1 - \rf(\rho, \hr) \leq r \| \hr - \rho\|_\infty \leq 2 r \beta_N \leq \eps,
    \end{equation}
    which is (1).
    For (2), $\eps < r\lambda_r$ gives $\beta_N \leq \eps/(2r) < \lambda_r/2$, so~\cref{Cor:BernsteinRank} applies on the same event and gives $\rank(\hr) = r$.
\end{proof}
Part (1) reproduces the sample complexity~\cref{Eq:GutaQST} of~\textcite{guta2020fast} without a rank bound as input and with no condition on the spectrum.
The proviso $\eps < r\lambda_r$ enters only through the rank certificate of part (2), which plays no role in the bound of part (1); it is what~\cref{Prop:QuadraticConversion} exploits in~\cref{Cor:DeterministicFastRate} to improve the rate from $1/\eps^2$ to $1/\eps$.
The $\eps$-dependence of~\cref{Thm:SparseQST} is decided by \cref{Eq:RankConversion}, which converts the operator-norm error $ \Delta \equiv \|\hr - \rho\|_\infty$ into infidelity \emph{linearly}, $1 - \rf \leq r \Delta$.
This linear conversion is tight only when the estimate puts weight where the true state has none.
For example, take $\rho$ pure and $\si = (1-t)\rho + t \si_\perp$ with $\si_\perp$ orthogonal to $\rho$. 
Then $\rf(\rho, \si) = \sqrt{1-t}$, so $1 - \rf \approx t/2$ is of \emph{first} order in the error, whatever the state $\si_\perp$.
The following proposition is the replacement for~\cref{Eq:RankConversion} that shows that the dependence can be made quadratic if no rank-mismatch occurs.
The tomographic consequence is~\cref{Cor:DeterministicFastRate} below.
\begin{tbox}
    \begin{prop}[Quadratic rank conversion] \label{Prop:QuadraticConversion}
        Let $\rho \in \dna$ have rank $r$, with smallest \emph{nonzero} eigenvalue $\lambda_r > 0$, and let $\si \in \dna$ satisfy $\rank(\si) \leq r$ and $\rsin < \lambda_r$. Then
        \begin{equation} \label{Eq:QuadraticConversion}
            1 - \rf(\rho, \si)  \leq  \frac{r \rsin^2}{\lambda_r - \rsin}.
        \end{equation}
    Under the condition $\rsin \leq \lambda_r/4$, the inequality can be further upper-bounded as follows:
    \begin{equation}
        1 - \rf(\rho, \si)  \leq  \frac{r \rsin^2}{\lambda_r - \rsin} \leq \frac{4 r \rsin^2}{3 \lambda_r}
    \end{equation}
    \end{prop}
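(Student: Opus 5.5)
The plan is to split the infidelity into two parts. The first is the mass $\ell := \tr[(I-P)\si]$ that $\si$ places outside the support of $\rho$, where $P$ is the projector onto $\operatorname{supp}(\rho)$ and $Q := I - P$. The second is the discrepancy of $\si$ inside that support. Write $\Delta := \rsin$.

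\emph{Step 1: spectral facts from Weyl.} By Weyl's inequality, the top $r$ eigenvalues of $\si$ are at least $\lambda_r - \Delta > 0$. Together with $\rank(\si) \leq r$, this forces $\rank(\si) = r$. Compressing $\si - \rho$ to the support gives $P\si P = P\rho P + P(\si-\rho)P \geq (\lambda_r - \Delta) P$, so $P\si P$ is invertible on $\operatorname{ran} P$ and has rank $r = \rank(\si)$.

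\emph{Step 2: off-support mass is quadratic.} Write $\si$ in block form with respect to $P \oplus Q$. The top block $P\si P$ has the same rank as the whole PSD matrix, so the Schur complement vanishes:
\begin{equation*}
    Q\si Q = Q\si P\,(P\si P)\iv P\si Q .
\end{equation*}
Since $Q\rho = 0$, the off-diagonal block equals $Q(\si-\rho)P$. This block has rank at most $r$ and operator norm at most $\Delta$, so $\|Q\si P\|_2^2 \leq r\Delta^2$. Combining with the lower bound on $P\si P$ from Step 1 gives
\begin{equation*}
    \ell = \tr[Q\si Q] \leq \frac{r\Delta^2}{\lambda_r - \Delta},
\end{equation*}
which is already the right-hand side of \cref{Eq:QuadraticConversion}.

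\emph{Step 3: from off-support mass to fidelity.} Since $\sqrt\rho = P\sqrt\rho P$, we have $\sqrt\rho\,\si\sqrt\rho = \sqrt\rho\,(P\si P)\sqrt\rho$, and therefore $\rf(\rho,\si) = \rf(\rho, P\si P)$ exactly. It remains to show $1 - \rf(\rho, P\si P) \leq \ell$, or a bound of the same order. The difficulty is that $P\si P$ has trace $1-\ell$ but need not be proportional to $\rho$, so its in-support error of order $\Delta$ must be shown to contribute only at second order.

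My first attempt is to use $\rf(\rho,X) \geq \tr[\sqrt\rho\sqrt X]$ with $X = P\si P$, and then expand $\sqrt X$ around $\sqrt{P\rho P}$. I would use the integral (resolvent) representation of the square root, which is legitimate because both operators are bounded below by $\lambda_r - \Delta$ on $\operatorname{ran} P$. In this expansion the first-order term is $\tfrac12\tr[P(\si-\rho)P] = -\ell/2$, since the first-order Fréchet derivative, traced against $\sqrt\rho$, collapses to a trace. The second-order term is controlled by $\|P(\si-\rho)P\|_2^2/(\lambda_r - \Delta)$.

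This is the step I expect to be the main obstacle, because the constant must come out as $r\Delta^2/(\lambda_r - \Delta)$ in total rather than a multiple of it. If the direct expansion is lossy, my fallback is to write $\sqrt\si\sqrt\rho$ through a polar decomposition and use the variational formula $\rf(\rho,\si) = \max_U \operatorname{Re}\tr[U\sqrt\si\sqrt\rho]$ with a well-chosen unitary $U$. Either route should give $1 - \rf \leq \ell$ plus in-support terms that can be absorbed, yielding \cref{Eq:QuadraticConversion}.

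\emph{Step 4: the simplified bound.} Under $\Delta \leq \lambda_r/4$ we have $\lambda_r - \Delta \geq \tfrac34\lambda_r$, which gives the bound $\frac{4r\Delta^2}{3\lambda_r}$ immediately.
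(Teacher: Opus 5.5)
Your Steps 1, 2 and 4 are correct, and the Schur-complement argument for $\ell\le r\Delta^2/(\lambda_r-\Delta)$ is valid. But Step 3, where the statement is actually proved, is not carried out, and the target you set for it cannot give the claimed constant. Step 2 already spends the \emph{entire} right-hand side of \cref{Eq:QuadraticConversion} on $\ell$. A bound of the form ``$1-\rf\le\ell$ plus in-support terms'' therefore leaves no room to absorb those terms. You need the factor $\tfrac12$ in front of $\ell$, which your first-order computation does produce, plus an explicit bound of at most $r\Delta^2/(2(\lambda_r-\Delta))$ on everything else. A second-order term of size $\|P(\si-\rho)P\|_2^2/(\lambda_r-\Delta)$, as you suggest, would already overshoot to $\tfrac32\,r\Delta^2/(\lambda_r-\Delta)$. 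The remainder of the Fr\'echet expansion beyond second order is also left uncontrolled.

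The missing idea is that no expansion is needed. Set $X:=P\si P$, so $\tr X=1-\ell$ and $\rf(\rho,\si)=\rf(\rho,X)\ge\tr[\sqrt\rho\sqrt X]$. Then one has the exact identity
\begin{equation*}
    1-\tr\bigl[\sqrt\rho\sqrt X\bigr]=\frac{\ell}{2}+\frac12\bigl\|\sqrt\rho-\sqrt X\bigr\|_2^2 .
\end{equation*}
The last term is handled by divided differences. In eigenbases $\{u_i\}$ of $\rho$ and $\{w_j\}$ of $X$,
\begin{equation*}
    \langle u_i,(\sqrt\rho-\sqrt X)w_j\rangle=\frac{\langle u_i,(\rho-X)w_j\rangle}{\sqrt{a_i}+\sqrt{x_j}} .
\end{equation*}
Only pairs inside $\operatorname{ran}P$ contribute. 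There $a_i\ge\lambda_r$ and $x_j\ge\lambda_r-\Delta$ by your Step 1, so $(\sqrt{a_i}+\sqrt{x_j})^2\ge4(\lambda_r-\Delta)$. Hence $\tfrac12\|\sqrt\rho-\sqrt X\|_2^2\le\|P(\rho-\si)P\|_2^2/(8(\lambda_r-\Delta))\le r\Delta^2/(8(\lambda_r-\Delta))$. Together with Step 2 this gives $1-\rf(\rho,\si)\le\tfrac58\,r\Delta^2/(\lambda_r-\Delta)$, which completes your route with a better constant.

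Once you have the divided-difference bound, though, the block decomposition is optional, and applying it directly is the paper's proof. The paper takes $U=I$ in the variational formula to get $1-\rf(\rho,\si)\le\tfrac12\|\sqrt\rho-\sqrt\si\|_2^2$. It then applies the same identity to full eigenbases of $\rho$ and $\si$, using the weaker but uniform bound $(\sqrt{a_i}+\sqrt{b_j})^2\ge\lambda_r-\Delta$ whenever $a_i+b_j>0$ (this uses $\rank\si=r$ and Weyl). It finishes with $\|\rho-\si\|_2^2\le2r\Delta^2$, since $\rank(\rho-\si)\le2r$.

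The weak denominator in that argument is forced only by pairs in which one of the two eigenvalues vanishes, i.e.\ by weight that one state places outside the other's support. Isolating that weight, as your Step 2 does, lets the in-support pairs use the stronger denominator $4(\lambda_r-\Delta)$. It also makes explicit the link to the off-support weight of \cref{Rem:OffSupportWeight}, at the price of a longer argument.
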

\end{tbox}
\begin{proof}
    We first establish that the spectral-norm proximity ($\rsin < \lambda_r$) forces $\rank(\sigma) = r$. We will prove this by contradiction.
    Let us assume that $\rank(\sigma)< r$, which implies $\mathrm{dim}(\ker \sigma) \geq d - r + 1$.
    Moreover, $\rank(\rho) = \mathrm{dim}(\operatorname{supp} \rho) = r$, so $\mathrm{dim}(\ker \sigma) + \mathrm{dim}(\operatorname{supp} \rho) > d$, which implies the existence of a unit vector $v \in \ker \sigma \cap \operatorname{supp} \rho$; for this $v$,
    \begin{equation} \label{Eq:KernelSupportVector}
       \rsin \geq \la v, (\rho - \sigma) v \ra = \la v, \rho v \ra \geq  \lambda_r,
    \end{equation}
    which contradicts $\rsin < \lambda_r$.
    Thus $\rank(\sigma) = r$.
    
    Recall that for any pair of states, the variational form of squared Bures distance implies 
    \begin{equation} \label{Eq:BuresVariational}
        1 - \frs = \min_{U \in \mathrm{U}(d)} \frac12  \| \rho \hf - \sigma \hf U\|_2^2  \leq \frac12 \| \rho \hf - \sigma \hf \|_2^2. 
    \end{equation}
    We now aim to further upper-bound the above inequalities in terms of the Frobenius distance between $\rho$ and $\sigma$.
    To this end, let $\rho = \sum_i a_i \kb{u_i}$ and $\si = \sum_j b_j \kb{v_j}$ be the spectral decomposition over \emph{complete} orthonormal bases so $a_i, b_j \geq 0$. For every pair,
    \begin{equation} \label{Eq:DividedDifferences}
        \la u_i | \sqrt\rho - \sqrt\si | v_j \ra
        = \Lt(\sqrt{a_i} - \sqrt{b_j}\Rt) \la u_i , v_j \ra
        = \frac{\la u_i , (\rho - \si) v_j \ra}{\sqrt{a_i} + \sqrt{b_j}},
    \end{equation}
    whenever $a_i + b_j > 0$, while pairs with $a_i = b_j = 0$ evaluate to zero on both sides.
    Assuming the eigenvalues to be ordered decreasingly, we have $a_i \geq \lambda_r$ for all $i \in [r]$. 
    By Weyl's inequality, $ - \rsin \leq b_j - a_j \leq \rsin$, which implies $b_j \geq a_j - \rsin \geq \lambda_r - \rsin$ for all $j \in [r]$.
    Observe that
    \begin{equation} \label{Eq:49}
        (\sqrt{a_i} + \sqrt{b_j})^2 \geq a_i + b_j \geq \lambda_r + \lambda_r - \rsin \geq \lambda_r - \rsin \quad i,j \in [r]. 
    \end{equation}
    The same bound holds when exactly one of $a_i, b_j$ is nonzero, since then $(\sqrt{a_i} + \sqrt{b_j})^2$ equals that eigenvalue, which is at least $\lambda_r - \rsin$ by the two bounds above; the nonzero one is necessarily indexed in $[r]$ because $\rank(\rho) = \rank(\sigma) = r$.
    Hence $(\sqrt{a_i} + \sqrt{b_j})^2 \geq \lambda_r - \rsin$ for every pair with $a_i + b_j > 0$.
    Let $\mathcal{S} := \{(i,j) \in [d]^2 : a_i + b_j > 0\} $ be the set of such index pairs.
    We thus have
    \begin{equation} \label{Eq:RankRecoveryChain}
    \begin{aligned}
        \| \rho \hf - \si \hf\|_2^2 =  \sum_{i, j \in [d]} |\la u_i, (\rho \hf - \si \hf) v_j \ra |^2 &= {\sum_{(i,j) \in \mathcal{S}}} \frac{|\la u_i, (\rho - \sigma) v_j \ra|^2}{(\sqrt{a_i} + \sqrt{b_j})^2} \\ 
        &\leq {\sum_{(i,j) \in \mathcal{S}}} \frac{|\la u_i, (\rho - \sigma) v_j \ra|^2}{ \lambda_r - \rsin} \\ &\leq {\sum_{i,j \in [d]}} \frac{|\la u_i, (\rho - \sigma) v_j \ra|^2}{ \lambda_r - \rsin} =\frac{\| \rho - \sigma\|_2^2}{\lambda_r - \rsin},
    \end{aligned}
    \end{equation}
    where the first and last equalities use the fact that $\|A\|_2^2 = \tr[AA \dg] = \sum_{i,j} |\la u_i, Av_j\ra|^2$ for any $A \in \mathrm{L}(\sfa)$, the second equality uses~\cref{Eq:DividedDifferences} on $\mathcal{S}$ together with the fact that the pairs outside $\mathcal{S}$, where $a_i = b_j = 0$, contribute zero to the left-hand side, and the first inequality is a consequence of~\cref{Eq:49}.
    Finally, observe that $\rank(\rho - \sigma) \leq 2 r$, which implies $\| \rho - \sigma\|_2^2 \leq 2 r \rsin^2$.
    Combining this with~\cref{Eq:RankRecoveryChain,Eq:BuresVariational} results in  
    \begin{equation}
        1 - \frs \leq \Lt(\frac12\Rt) \frac{2 r \rsin^2}{\lambda_r - \rsin} \leq  \frac{4 r \rsin^2}{3\lambda_r},
    \end{equation}
    with the last inequality holding if $\rsin \leq \lambda_r/4$. 
    This concludes the proof.
\end{proof}

\begin{remark}[Off-support weight sets the rate] \label{Rem:OffSupportWeight}
    The rank hypothesis of~\cref{Prop:QuadraticConversion} is sufficient but not necessary: what decides the rate is the weight $\ell$ that $\sigma$ places outside $\operatorname{supp}(\rho)$,
    \begin{equation} \label{Eq:OffSupportMass}
        \ell := \tr[(I - P_\rho) \sigma] = 1 - \la P_\rho, \sigma \ra,
    \end{equation}
    where $P_\rho$ is the projector onto $\operatorname{supp}(\rho)$.
    Consider the pinching channel $\mathcal{P}(X) := P_\rho X P_\rho + (I - P_\rho) X (I - P_\rho)$, which leaves $\rho$ invariant.
    The data-processing inequality for the fidelity gives
    \begin{equation} \label{Eq:OffSupportLower}
        \rf(\rho, \sigma) \leq \rf\bigl(\mathcal{P}(\rho), \mathcal{P}(\sigma)\bigr) = \rf(\rho, P_\rho \sigma P_\rho) \leq \sqrt{\tr[\rho]\, \tr[P_\rho \sigma P_\rho]} = \sqrt{1 - \ell},
    \end{equation}
    where the equality holds because the block $(I - P_\rho) \sigma (I - P_\rho)$ of $\mathcal{P}(\sigma)$ is orthogonal to $\rho$, and the last inequality is $\rf(A, B) \leq \sqrt{\tr[A] \tr[B]}$ for $A, B \geq 0$.
    Hence $1 - \rf(\rho, \sigma) \geq 1 - \sqrt{1 - \ell} \geq \ell/2$, and a rate $O(\Delta^2)$ requires $\ell = O(\Delta^2)$.
    Now, if $\sigma = (1 - \eta)\, \sigma_r + \eta\, \omega$ with $\sigma_r$ as in~\cref{Prop:QuadraticConversion}, $\Delta := \|\rho - \sigma_r\|_\infty \leq \lambda_r/4$ and $\omega \in \dna$ arbitrary, then $\sigma \geq (1 - \eta)\, \sigma_r$ gives $\rf(\rho, \sigma) \geq \sqrt{1 - \eta}\, \rf(\rho, \sigma_r) \geq \rf(\rho, \sigma_r) - \eta$, that is,
    \begin{equation} \label{Eq:LeakyConversion}
        1 - \rf(\rho, \sigma) \leq \eta + \frac{4 r \Delta^2}{3 \lambda_r},
    \end{equation}
    so an estimate of full rank still attains the fast rate when $\eta = O(\Delta^2)$.
    Exact rank recovery (\cref{Cor:BernsteinRank}) is the simplest way to meet this condition.
    Numerical experiments indicate that the thresholds $\tau = 0$ and $\tau = -\lmin$ fail to do so.
\end{remark}

We may now derive the fast rate ($1 - \rf \propto N \iv$) by combining~\cref{Cor:BernsteinRank} with~\cref{Prop:QuadraticConversion}: past a finite onset, the Bernstein threshold recovers the rank with probability at least $1 - \delta$, and the quadratic conversion then applies.

\begin{tbox}
    \begin{cor}[Fast rate at the Bernstein threshold] \label{Cor:DeterministicFastRate}
        Let $\rho \in \dna$ with $\rank(\rho) = r$ have smallest nonzero eigenvalue $\lambda_r > 0$, let $\delta \in (0,1)$, and run~\cref{Alg:SparseDensityEstimate} at the Bernstein threshold $\tau = \beta_N$ of~\cref{Eq:BernsteinThreshold}.
        If
        \begin{equation} \label{Eq:FastRateOnset}
            N \geq \frac{512}{3}\, \frac{g(d) \log(d/\delta)}{\lambda_r^2},
            \qquad \text{equivalently} \qquad \beta_N \leq \frac{\lambda_r}{8},
        \end{equation}
        then with probability at least $1 - \delta$,
        \begin{equation} \label{Eq:DeterministicFastRate}
            \rank(\hr) = r
            \qaq
            1 - \rf(\rho, \hr) = O\Bigl(\frac{r\, \beta_N^2}{\lambda_r}\Bigr) = O\Bigl(\frac{r\, g(d) \log(d/\delta)}{\lambda_r N}\Bigr).
        \end{equation}
    \end{cor}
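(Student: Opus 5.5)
The proof is a composition of \cref{Cor:BernsteinRank} with \cref{Prop:QuadraticConversion}, both applied on the single concentration event $\{\|\hrl - \rho\|_\infty \leq \beta_N\}$. By \cref{Eq:LSConcentration} this event has probability at least $1-\delta$, provided $\beta_N \leq 1$. The first step is to translate the hypothesis. Inserting $N \geq \frac{512}{3} g(d)\log(d/\delta)/\lambda_r^2$ into the definition \cref{Eq:BernsteinThreshold} gives $\beta_N^2 = \frac83 g(d)\log(d/\delta)/N \leq \lambda_r^2/64$, that is, $\beta_N \leq \lambda_r/8$. Since $\lambda_r \leq 1$, this also gives $\beta_N \leq 1$, so the tail bound is applicable. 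The same computation run backwards gives the stated equivalence.

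Next we work on the concentration event. Because $\beta_N \leq \lambda_r/8 < \lambda_r/2$, \cref{Cor:BernsteinRank} (equivalently \cref{Lem:SupraNoiseStability} with $\tau = \eps = \beta_N$, which lies in $[\eps, \lambda_r - \eps)$) yields $\rank(\hr) = r$ and $\|\hr - \rho\|_\infty \leq 2\beta_N$. This is the first claim of \cref{Eq:DeterministicFastRate}.

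For the fidelity bound we apply \cref{Prop:QuadraticConversion} with $\sigma = \hr$. Its hypotheses hold: $\hr$ is a density matrix by construction of \cref{Alg:SparseDensityEstimate}, $\rank(\hr) \leq r$, and $\|\rho - \hr\|_\infty \leq 2\beta_N \leq \lambda_r/4 < \lambda_r$. The factor $8$ in the onset was chosen precisely so that the doubled error $2\beta_N$ meets the proposition's condition $\|\rho-\sigma\|_\infty \leq \lambda_r/4$. The refined bound of the proposition then gives
\begin{equation}
1 - \rf(\rho,\hr) \leq \frac{4 r (2\beta_N)^2}{3\lambda_r} = \frac{16 r \beta_N^2}{3\lambda_r}.
\end{equation}
Substituting $\beta_N^2$ turns this into $\frac{128}{9}\, r g(d)\log(d/\delta)/(\lambda_r N)$, which is the second expression in \cref{Eq:DeterministicFastRate}.

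There is no genuine obstacle here, since everything reduces to bookkeeping of constants. The one point to check is that both conclusions hold on the same event, so that no union bound is needed and the confidence stays at $1-\delta$ rather than degrading. This holds because \cref{Cor:BernsteinRank} and the stability bound of \cref{Lem:SupraNoiseStability} are both deterministic consequences of the event $\|\hrl-\rho\|_\infty\leq\beta_N$.
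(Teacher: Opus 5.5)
Your proposal is correct and follows the paper's proof essentially line for line. You translate the onset into $\beta_N \leq \lambda_r/8$ and invoke \cref{Cor:BernsteinRank} on the concentration event to get $\rank(\hr) = r$ and $\|\hr - \rho\|_\infty \leq 2\beta_N \leq \lambda_r/4$, then apply the second bound of \cref{Prop:QuadraticConversion} to obtain $\tfrac{16}{3} r \beta_N^2/\lambda_r$; your additional observation that $\lambda_r \leq 1$ forces $\beta_N \leq 1$, so that the tail bound \cref{Eq:LSConcentration} applies, is a check the paper leaves implicit.
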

\end{tbox}
Explicitly, the bound reads $1 - \rf(\rho, \hr) \leq \tfrac{16}{3} r \beta_N^2/\lambda_r = \tfrac{128}{9} r g(d)\log(d/\delta)/(\lambda_r N)$.

\begin{proof}
    Squaring the definition~\cref{Eq:BernsteinThreshold} of $\beta_N$ shows that the two conditions in~\cref{Eq:FastRateOnset} are equivalent.
    Since $\beta_N \leq \lambda_r/8 < \lambda_r/2$, \cref{Cor:BernsteinRank} applies: with probability at least $1 - \delta$,
    \begin{equation}
        \rank(\hr) = r \qaq \|\hr - \rho\|_\infty \leq 2\beta_N \leq \frac{\lambda_r}{4} .
    \end{equation}
    On this event, $\sigma = \hr$ satisfies the hypotheses of~\cref{Prop:QuadraticConversion}, namely $\rank(\sigma) \leq r$ and $\rsin \leq \lambda_r/4 < \lambda_r$, so its second bound gives
    \begin{equation}
        1 - \rf(\rho, \hr) \leq \frac{4 r \|\hr - \rho\|_\infty^2}{3 \lambda_r} \leq \frac{4 r (2\beta_N)^2}{3 \lambda_r} = \frac{16}{3}\, \frac{r \beta_N^2}{\lambda_r} .
    \end{equation}
    Inserting $\beta_N^2 = \tfrac83 g(d) \log(d/\delta)/N$ gives~\cref{Eq:DeterministicFastRate}.
\end{proof}

\subsection{Proof of the sample complexity of FPLS-QPT} \label{App:QPTProof}
Everything is now in place to prove the main non-adaptive guarantee: the lifting theorem (\cref{Thm:Main}) reduces it to a QST guarantee for the Bernstein-threshold density estimate, the stability lemma (\cref{Lem:SupraNoiseStability}) and the linear conversion~\cref{Eq:RankConversion} supply the slow rate at every $N$, and the fast rate beyond the onset is~\cref{Cor:DeterministicFastRate}.
We restate the theorem for convenience.
\begin{tbox}
    \NonAdaptQPTThm*
\end{tbox}
\begin{proof}
    By~\cref{Thm:Main}, it suffices that the density estimate $\hr$ of step 3 in \cref{Alg:FPLSNonAdaptive} achieves accuracy $\eps/4$, and by~\cref{Lem:AutoInvertible} the condition $\eps/4 < 1/(2d_\sfa)$ then makes its marginal invertible, so partial normalization in step 4 in \cref{Alg:FPLSNonAdaptive} is well defined.
    For the slow bound~\cref{Eq:SlowQPTSamples}, \cref{Lem:SupraNoiseStability} gives $\|\hr - \cphi\|_\infty \leq 2\beta_N$ with probability at least $1 - \delta$ at every $N$, and~\cref{Eq:RankConversion} turns this into $1 - \rf(\cphi, \hr) \leq 2r\beta_N$; requiring $2r\beta_N \leq \eps/4$ is, by the definition~\cref{Eq:BernsteinThreshold} of $\beta_N$, the explicit sample size $N \geq \tfrac{512}{3} r^2 \dimab^{1.6}\log(\dimab/\delta)/\eps^2$.
    For the fast bound~\cref{Eq:BernsteinQPTSamples}, \cref{Cor:DeterministicFastRate} gives $\rank(\hr) = r$ and $1 - \rf(\cphi, \hr) \leq \tfrac{16}{3} r \beta_N^2/\lambda_r$ with probability at least $1 - \delta$ whenever $\beta_N \leq \lambda_r/8$.
    The explicit sample size is the condition $\beta_N^2 \leq 3\lambda_r \eps/(64 r)$, under which the infidelity bound is at most $\eps/4$; for $\eps \leq r\lambda_r/3$ this condition implies $\beta_N^2 \leq \lambda_r^2/64$, so the onset $\beta_N \leq \lambda_r/8$ of~\cref{Cor:DeterministicFastRate} is then automatic; this is why the theorem needs no separate condition on $N$ beyond the sample size.
    Finally the fidelity projection is a congruence by the invertible operator $\hr_\sfa^{-1/2} \ot I_\sfb$, so $\fc(\hat\Phi)$ has the same rank $r$ as $\hr$.
\end{proof}

\section{Calibrated benchmark instances} \label{App:SyntheticInstances}
This appendix discusses how data in~\cref{Sec:Numerics} (\cref{Fig:HIPPerformanceAnalysis}) is generated for studying HIP iterations {against} Choi-state dimension. 
The naïve approach would be to generate density estimates (based on the $-\lmin$-rule) from a fixed number of shots at different $\dab$ and numerically estimate the number of iterations required for convergence. 
However, this {approach is flawed}, as a fixed number of measurement shots would make estimates more accurate at low dimensions and less accurate at high dimensions, and thereby make it seem that the iterations increase with dimension.
What we ideally want is to initialize every density estimate at a fixed distance from $\cab$ and then see how the HIP iterations for convergence {scale} with dimension. 

To this end, we note that for any $\rho \in \dnab$ with $\rho_\sfa > 0$, its fidelity to (the nearest point in) $\cab$ is 
\begin{equation} \label{Eq:DistanceToCPTP}
    \mathrm{F}(\rho, \pc(\rho)) = \frac{1}{\sqrt{d_\sfa}} \tr[\sqrt{\rho_\sfa}],
\end{equation}
which depends solely on the marginal $\rho_\sfa$, so that $\dpu(\rho, \cab) = \sqrt{1 - \mathrm{F}(\rho, \pc(\rho))^2}$ costs one $d_\sfa \times d_\sfa$ eigendecomposition.\footnote{The choice of purified distance here is {arbitrary}. Any distance based on fidelity would suffice.}

We thus design our simulation strategy as follows. 
We fix the required fidelity (\cref{Fig:HIPPerformanceAnalysis} uses $\mathrm{F} = {0.9987}, 0.995, 0.980, 0.954$, i.e.\ $\dpu = 0.05, 0.1, 0.2, 0.3$) and a true channel of appropriate Choi rank and -dimension, randomly sampled from the BCSZ ensemble~\cite{bruzda2009random}: a Haar-random isometry for panel (a) and Choi ranks $\log_2 \dimab$ and $\dimab$ for panels (b) and (c). 
We want to find the measurement count $N$ such that the PLS density estimate (using the $\tau = -\lmin$ rule) sits at the predetermined fidelity from $\cab$. 
To this end, we iteratively vary $N$ (increase (decrease) if estimated fidelity is lower (higher) than the required fidelity) until the estimated fidelity is within 1\% of the target or 30 iterations are completed; since the noise is random, the realized distance is only statistically monotone in $N$, and the $1\%$ band is met by about a fifth of the instances, the rest stopping at the cap. 
This density estimate is then fed to the HIP algorithm to numerically estimate the number of iterations required for convergence (\cref{Fig:HIPPerformanceAnalysis}). 

The following table lists, for the isometry family of panel (a), the calibrated shot count $N$ and the realized $\dpu$ at each target, as medians over the ten instances per dimension. 
\begin{center}
    \footnotesize
    \begin{tabular}{rcccccccc}
        & \multicolumn{2}{c}{$\dpu = 0.05$} & \multicolumn{2}{c}{$\dpu = 0.1$} & \multicolumn{2}{c}{$\dpu = 0.2$} & \multicolumn{2}{c}{$\dpu = 0.3$} \\
        $\dimab$ & $N$ & realized & $N$ & realized & $N$ & realized & $N$ & realized \\
        \hline
        $2^2$    & $6.1 \times 10^2$ & $0.055$ & $1.4 \times 10^2$ & $0.106$ & $3.8 \times 10^1$ & $0.217$ & $2.3 \times 10^1$ & $0.331$ \\
        $2^4$    & $7.8 \times 10^3$ & $0.050$ & $1.7 \times 10^3$ & $0.101$ & $3.0 \times 10^2$ & $0.206$ & $9.9 \times 10^1$ & $0.315$ \\
        $2^6$    & $5.6 \times 10^4$ & $0.051$ & $1.4 \times 10^4$ & $0.100$ & $3.5 \times 10^3$ & $0.203$ & $9.6 \times 10^2$ & $0.331$ \\
        $2^8$    & $3.6 \times 10^5$ & $0.051$ & $8.6 \times 10^4$ & $0.100$ & $2.0 \times 10^4$ & $0.201$ & $8.5 \times 10^3$ & $0.298$ \\
        $2^{10}$ & $2.6 \times 10^6$ & $0.050$ & $5.5 \times 10^5$ & $0.102$ & $1.1 \times 10^5$ & $0.201$ & $3.5 \times 10^4$ & $0.326$ \\
    \end{tabular}
\end{center}

The medians track the target to within $10\%$ at every level, with the largest misses at the smallest dimensions and at $\dpu = 0.3$, where $\dpu$ is a coarser function of $N$; on the two random-channel families the deviations reach $20\%$.
The instances are therefore approximately, not exactly, distance matched, and dimension is the main but not the only quantity that varies along each curve of~\cref{Fig:HIPPerformanceAnalysis}; the fitted exponents of~\cref{Eq:HIPIterScaling} should be read with this residual mismatch in mind.
Instances at a given dimension and family share the true channel and differ only in the measurement randomness.
On the $140$ calibrated instances at $\dimab$ from $2^2$ to $2^8$, the ratio of the infidelities after and before the fidelity projection never exceeded $1$ (largest value $0.995$, median $0.91$), against the worst case of $4$ in~\cref{Thm:Main}; \cref{App:ContractionNumerics} extends this test to more channel families and to $\dimab = 2^{10}$.
A full-rank channel needs a far better estimate to reach the same fidelity to $\cab$, so its shot counts are larger, by $270\times$ at $\dimab = 2^{10}$ and $\dpu = 0.05$.
Finally, note that setting $N$ from the bound~\cref{Eq:SlowQPTSamples} at a fixed target $\eps$ instead would not help: at $\eps = 0.1$, i.e.\ a target $\dpu = 0.44$, the isometry instances realize $\dpu = 1.6 \times 10^{-3}$ at $\dimab = 2^2$ and $0.8 \times 10^{-3}$ at $2^{10}$ (medians over three trials), two orders of magnitude inside the target and halving across the sweep, because the slack in the bound is both large and dimension dependent.

\paragraph{Memory constraints due to the simulation {pipeline}.} 
The least-squares stage materializes the $6^n$ outcome distribution, giving a peak footprint of roughly $64 \cdot 6^{n}$ bytes --- about $3.6$ GiB at $n = 10$, which is what sets $\dimab = 2^{10}$ as the largest dimension reported here rather than any property of the projection itself. 
The fidelity projection costs microseconds at that dimension (\cref{Fig:Headline}, left) and would run far beyond it. 

\section{Additional proofs and numerics} \label{App:AdditionalProofs}

\subsection{Proof of \cref{Thm:FactorizedImplementation}} \label{App:FactorizedProof}

For better readability, let us first restate \cref{Thm:FactorizedImplementation}.
\begin{tbox}
\fidelityprojectioninKrausform*    
\end{tbox}

\begin{proof}
    Let $\ket\omega := d_\sfa^{-1/2} \sum_a \ket{a}\ket{a}$ and, for $M \in \bbc^{d_\sfb \times d_\sfa}$, let
    \begin{equation}
        \vecm(M) := (I_\sfa \ot M)\ket\omega = d_\sfa^{-1/2} \sum_{a, b} M_{ba} \ket{a}\ket{b},
    \end{equation}
    a linear bijection $\bbc^{d_\sfb \times d_\sfa} \to \bbc^{\dimab}$ that stores the entries of $M$ as the components of a bipartite vector; then $\fc(\Theta) = \sum_i \vecm(K_i)\vecm(K_i)\dg$~\cite[Ch.~2]{Watrous2018Theory}.
    Conversely, let us write each eigenvector in the spectral decomposition $\hr = \sum_i \lambda_i \kb{v_i}$ as a bipartite vector, $\ket{v_i} = \sum_{a, b} (V_i)_{ab} \ket{a}\ket{b}$ with $V_i \in \bbc^{d_\sfa \times d_\sfb}$ the array of its components; then $\sqrt{\lambda_i}\, \ket{v_i} = \vecm(K_i)$ for $K_i := \sqrt{d_\sfa \lambda_i}\, V_i^{\mathsf T}$, so the eigendecomposition furnishes Kraus operators directly, by a reshape and a rescaling.
    Tracing out $\sfb$ term by term gives
    \begin{equation} \label{Eq:MarginalIsDefect}
        d_\sfa\, \hr_\sfa  =  \Big(\sum_i K_i \dg K_i\Big)^{\mathsf T}  =  R^{\mathsf T},
    \end{equation}
    so $\Theta$ is trace preserving exactly when $R = I_\sfa$.
    Since $(M \ot I_\sfb)\vecm(K_i) = \vecm(K_i M^{\mathsf T})$, conjugating a Choi state by $M \ot I_\sfb$ multiplies each Kraus operator on the right by $M^{\mathsf T}$; with $M = \hr_\sfa\ihf$ in~\cref{Eq:ClosedFormProj} and~\cref{Eq:MarginalIsDefect}, the normalization is $K_i \mapsto K_i R\ihf$, which is trace preserving identically, $\sum_i {K_i'}\dg K_i' = R\ihf R\, R\ihf = I_\sfa$.
    The eigenpairs are supplied by the density estimator, which thresholds a spectrum, and $\vecm^{-1}$ is a reshape, so the $K_i$ are available from the supplied eigenpairs with $O(r \dimab)$ additional work; forming $R$ and applying $R\ihf$ to the $r$ Kraus operators cost $O(r d_\sfa^2 d_\sfb)$ each, diagonalizing $R$ costs $O(d_\sfa^3)$, and the memory is that of the $K_i$ and of $R$.
\end{proof}

\subsection{Marginal invertibility} \label{App:Invertibility}
This appendix proves the implication used in~\cref{Rem:Singular}: sufficient accuracy of the density estimate forces its $\sfa$-marginal to be invertible, so the fidelity projection is well defined on the accuracy event itself.

\begin{tbox}
    \begin{lemma}[Accuracy implies marginal-invertibility] \label{Lem:AutoInvertible}
        Let $\rho \in \cab$ and $\hr \in \dnab$ with $\rf(\rho, \hr) \geq 1 - \gamma$, and $\smin := \sqrt{d_\sfa \lmin(\hr_\sfa)}$.
        Then
        \begin{equation} \label{Eq:AutoInvertible}
            \smin \geq 1 - \sqrt{2 d_\sfa \gamma} .
        \end{equation}
        Consequently, $\hr_\sfa$ is invertible whenever $\gamma < 1/(2 d_\sfa)$. 
    \end{lemma}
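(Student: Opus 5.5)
The plan is to pass to the $\sfa$-marginals with the data-processing inequality, then turn the marginal fidelity bound into a lower bound on $\lmin(\hr_\sfa)$ using the fact that $\rho_\sfa = I_\sfa/d_\sfa$. By monotonicity of the fidelity under the partial trace $\tr_\sfb$, we have $\rf(\rho_\sfa,\hr_\sfa) \geq \rf(\rho,\hr) \geq 1-\gamma$. Since $\rho\in\cab$, $\rho_\sfa = I_\sfa/d_\sfa$, so
\begin{equation}
\rf(I_\sfa/d_\sfa,\hr_\sfa) = \frac{1}{\sqrt{d_\sfa}}\tr\bigl[\hr_\sfa^{1/2}\bigr] = \frac{1}{\sqrt{d_\sfa}}\sum_{k} \sqrt{\mu_k},
\end{equation}
where $\mu_1\geq\dots\geq\mu_{d_\sfa}$ are the eigenvalues of $\hr_\sfa$, which sum to one. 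Writing $x_k := \sqrt{d_\sfa \mu_k}$, we get $\sum_k x_k^2 = d_\sfa$ and $\frac{1}{d_\sfa}\sum_k x_k \geq 1-\gamma$, and the claim is $x_{d_\sfa} = \smin \geq 1-\sqrt{2d_\sfa\gamma}$.

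Next I convert to a Euclidean distance from the all-ones vector $\mathbf 1\in\mathbb{R}^{d_\sfa}$. We have $\|x-\mathbf 1\|_2^2 = \sum_k x_k^2 - 2\sum_k x_k + d_\sfa = 2d_\sfa - 2\sum_k x_k \leq 2 d_\sfa \gamma$. Equivalently, this is the statement $\db^2(\rho_\sfa,\hr_\sfa) = 2(1-\rf) \leq 2\gamma$, rescaled by $d_\sfa$. Bounding a single coordinate by the $\ell_2$ norm then gives $|x_{d_\sfa}-1| \leq \|x-\mathbf 1\|_2 \leq \sqrt{2d_\sfa\gamma}$, which yields $\smin \geq 1-\sqrt{2d_\sfa\gamma}$.

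For the consequence: if $\gamma < 1/(2d_\sfa)$ then $\sqrt{2d_\sfa\gamma}<1$, so $\smin>0$ and hence $\lmin(\hr_\sfa)>0$, meaning $\hr_\sfa$ is invertible.

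The only step needing care is the identification of the marginal fidelity with $\tr[\sqrt{\hr_\sfa}]/\sqrt{d_\sfa}$ together with the ``sum of squares equals $d_\sfa$'' bookkeeping. Everything else is routine, and I do not expect a real obstacle. The bound is crude because it uses a single coordinate of an $\ell_2$ estimate, but that loss is exactly what produces the $d_\sfa$ factor in the statement.
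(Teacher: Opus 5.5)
Your proof is correct and follows the paper's argument essentially line by line: data processing down to the $\sfa$-marginal, the rescaled square roots $s_i = \sqrt{d_\sfa \lambda_i(\hr_\sfa)}$ with $\sum_i s_i^2 = d_\sfa$, the identity $\sum_i (s_i - 1)^2 = 2 d_\sfa\bigl(1 - \rf(I_\sfa/d_\sfa, \hr_\sfa)\bigr) \leq 2 d_\sfa \gamma$, and then keeping only the smallest coordinate. Your step $|x_{d_\sfa} - 1| \leq \|x - \mathbf{1}\|_2$ gives the lower bound directly, so you do not need the paper's observation that $\smin \leq 1$.
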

\end{tbox}
\begin{proof}
    Observe that data-processing inequality and the assumptions in the statement of the lemma imply
    \begin{equation}
    1 - \gamma \leq \rf(\rho, \hr) \leq \rf(\rho_\sfa, \hr_\sfa) =  \rf(I_\sfa/d_\sfa, \hr_\sfa).
    \end{equation}
    Now let us define $s_i := \sqrt{d_\sfa \lambda_i(\hr_\sfa)}$, so that $\sum_i s_i^2 = d_\sfa$, which implies $\rf(I_\sfa/d_\sfa, \hr_\sfa) = d_\sfa^{-1/2} \tr \hr_\sfa^{1/2} = d_\sfa^{-1} \sum_i s_i$, whence
    \begin{equation}
        \sum_{i \in [d_\sfa]} (s_i - 1)^2 = \sum_{i \in [d_\sfa]} s_i^2 - 2 \sum_{i \in [d_\sfa]} s_i + d_\sfa = 2 d_\sfa \bigl(1 - \rf(I_\sfa/d_\sfa, \hr_\sfa)\bigr) \leq 2 d_\sfa \gamma .
    \end{equation}
    Since $\smin \equiv \min_i s_i \leq 1$, dropping every term in the above sum except this term yields $(1 - \smin)^2 \leq 2 d_\sfa \gamma$, which is~\cref{Eq:AutoInvertible}.
\end{proof}

\paragraph{Extension for a singular marginal.}
For a singular $\hr_\sfa$, let $P$ be the projector onto the support of $R = d_\sfa \hr_\sfa^{\mathsf T}$ (\cref{Eq:MarginalIsDefect}), that is, onto $\operatorname{supp}(\hr_\sfa^{\mathsf T})$.
Taking the inverse in~\cref{Eq:ClosedFormProj} on the support of $\hr_\sfa$ returns a CP map with Kraus operators $\{K_i\}$ obeying $\sum_i K_i^\dagger K_i = P \leq I_\sfa$.
The simplest repair, to obtain a map that is TP on $\sfa$, is to append a single Kraus operator, the projector onto the kernel, $K_0 := I_\sfa - P$, which restores $\sum_i K_i^\dagger K_i + K_0^\dagger K_0 = I_\sfa$ and makes~\cref{Alg:FPLSNonAdaptive,alg:FPLSadaptive} total.
As written, this requires $d_\sfa = d_\sfb$. 
For $d_\sfa \neq d_\sfb$ one takes $K_0 := V (I_\sfa - P)$ with $V$ an isometry from $\ker(P)$ into $\sfb$, which exists only if $\dim \ker(P) \leq d_\sfb$.
In general, one may instead append the operators $\ketbra{\psi}{e_k}$, for an orthonormal basis $\{\ket{e_k}\}$ of $\ker(P)$ and a fixed unit vector $\ket{\psi} \in \sfb$, which works in every dimension.
The Choi state of the extended channel differs from that of the unextended map by a term supported on the kernel block of the $\sfa$ marginal, orthogonal to the support of $\hr$, so the Choi fidelity with $\hr$ is unchanged.
Clearly, such extensions are not unique.

\subsection{The lifting loss in practice} \label{App:ContractionNumerics}
\Cref{Thm:Main} bounds the loss of the fidelity projection by a factor $2$ in Bures distance, $\db(\rho, \pc(\hr)) \leq 2\,\db(\rho, \hr)$, for every density estimate $\hr$.
\Cref{Fig:Contraction} measures the ratio $\db(\rho, \pc(\hr)) / \db(\rho, \hr)$ on tomography estimates.
Since $\db^2 = 2(1 - \rf)$ on states, the corresponding ratio of infidelities is its square, so the bound $2$ is the factor $4$ in infidelity of~\cref{Eq:ProjLoss}, and a Bures ratio $c$ is an infidelity ratio $c^2$.
For each of eight families and each $\dimab = 2^2, \dots, 2^{10}$ ($d_\sfb = d_\sfa$ for even powers of $2$ and $d_\sfb = 2 d_\sfa$ for odd ones), we draw $10$ random channels, form the density estimate of~\cref{Alg:FPLSNonAdaptive} at $\tau = \beta_N/2$ from $N$ local Pauli shots, and compare its distance to the true Choi state with that of its fidelity projection.
The shot count $N$ is the sample size of~\cref{Eq:GutaQST} at $\eps = 0.5$ and $\delta = 0.05$ with the true Choi rank (for the noisy isometries, the rank $1$ of the dominant part), which, the bound being loose, gives density estimates of infidelity between $2 \times 10^{-4}$ and $3 \times 10^{-2}$.
The families are random channels of Choi rank $1$ (Haar isometries), $2$, $\log_2 \dimab$ and $\dimab$ (BCSZ~\cite{bruzda2009random}); independent amplitude damping on $1$ to $\log_2 d_\sfa$ input qubits; Haar isometries followed by $1\%$ and $5\%$ depolarizing noise, of full Choi rank with one dominant eigenvalue; and equal mixtures of about $\sqrt{\dimab}$ Haar isometries, with a nearly flat spectrum.

\begin{figure}[!t]
    \centering
    \includegraphics[width=0.8\textwidth]{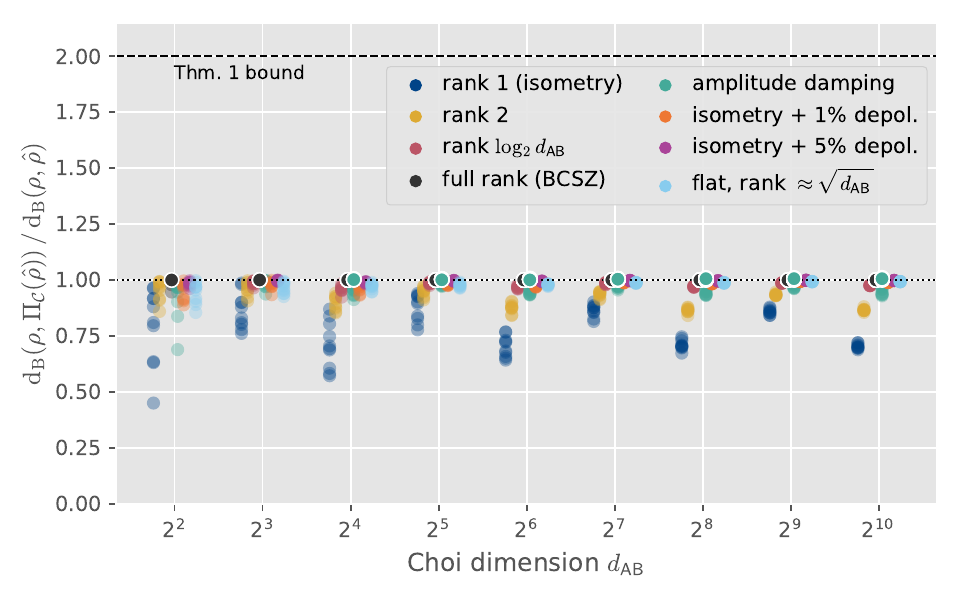}
    \caption{Ratio of the Bures distance of the channel estimate to that of the density estimate from the true Choi state, $\db(\rho, \pc(\hr))/\db(\rho, \hr)$, against the Choi dimension, for $10$ random channels per family and dimension; the families are offset horizontally for legibility. 
    Instances on which the projection is contractive (ratio at most $1$) are translucent, those on which it is not are opaque and outlined in white. The dashed line is the worst case $2$ of~\cref{Thm:Main}; squaring the ratio gives the ratio of infidelities.}
    \label{Fig:Contraction}
\end{figure}

Over all $870$ instances, the ratio lies between $0.45$ and $1.01$, with median $0.99$, against the worst case of $2$ (infidelity ratios between $0.20$ and $1.02$, against $4$).
It exceeds $1$, by at most $1\%$, only for full-rank channels (by at most $0.03\%$) and for amplitude damping whose density estimate is still below its onset and misses true eigenvalues.
For channels of low Choi rank, the projection \emph{reduces} the distance, to about $0.8$ of its value for isometries (about $0.64$ in infidelity).
The worst case of~\cref{Eq:ProjLoss} needs a density estimate far from $\cab$; a tomography estimate is close to the true Choi state, hence to $\cab$ as well, and its marginal is close to $I_\sfa/d_\sfa$.

\section{Implementation of \cref{Fig:Headline}} \label{App:HeadlineImplementation}
This appendix records, in full, how the two halves of~\cref{Fig:Headline} are produced.
The left panel isolates the \emph{cost} of TP regularization and therefore holds the quality of its input fixed while the dimension grows, whereas the right panels follow one fixed channel as the shot count grows.

\subsection{Left panel: Time-cost of TP-regularization methods} \label{App:HeadlineLadder}

\paragraph{The instance.}
At each dimension the true channel is a Haar-random isometry, of Choi rank $1$, drawn once per dimension.
Its Choi state is measured in the local Pauli basis, the least-squares estimate is formed, and the density estimate is~\cref{Alg:SparseDensityEstimate} at the PLS-QPT rule $\tau = -\lmin(\hrl)$.
This density estimate is Choi-isomorphic to a CP but not TP map.  
All three methods regularize this estimate, so the comparison is of the regularizers alone and not of the pipelines feeding them.

\paragraph{Calibration.}
As for~\cref{Fig:HIPPerformanceAnalysis} (\cref{App:SyntheticInstances}), the instances are matched in their distance to $\cab$ rather than in shot count: at every dimension the shot count is bisected in $\log N$ until $\dpu(\hr, \cab) \approx 0.05$, with the realized values listed below.
The calibrated shot counts and the realized distances are

\begin{center}
    \footnotesize
    \setlength{\tabcolsep}{3pt}
    \begin{tabular}{lccccccccc}
        \hline
        $(d_\sfa, d_\sfb)$ & $(2,2)$ & $(2,4)$ & $(4,4)$ & $(4,8)$ & $(8,8)$ & $(8,16)$ & $(16,16)$ & $(16,32)$ & $(32,32)$ \\
        $\dimab$ & $4$ & $8$ & $16$ & $32$ & $64$ & $128$ & $256$ & $512$ & $1024$ \\
        \hline
        median $N$ & $6.1{\times}10^{2}$ & $8.6{\times}10^{2}$ & $7.8{\times}10^{3}$ & $6.7{\times}10^{3}$ & $5.6{\times}10^{4}$ & $6.7{\times}10^{4}$ & $3.6{\times}10^{5}$ & $4.6{\times}10^{5}$ & $2.6{\times}10^{6}$ \\
        realized $\dpu$ & $0.055$ & $0.052$ & $0.050$ & $0.050$ & $0.051$ & $0.050$ & $0.051$ & $0.051$ & $0.050$ \\
        \hline
    \end{tabular}
\end{center}
\noindent
Ten instances are drawn per dimension, differing only in the measurement randomness.
The odd powers of two are rectangular, $d_\sfa \neq d_\sfb$; the HIP implementation of~\cite{kahn2021hip} assumes $d_\sfa = d_\sfb$ in its projection onto the TP hyperplane, and we generalize that one routine, verifying that it agrees with theirs to machine precision wherever theirs is defined, up to the ordering of the two systems, which their code takes opposite to ours.

\paragraph{The timing.}
Single-threaded wall-clock of one TP regularization, excluding the measurement and least-squares stages, which are shared.
All reported times are measured on an Apple M4 processor (10 cores, 16\,GB of memory), with Python~3.14, NumPy~2.5 and SciPy~1.18 linked against Apple's Accelerate BLAS/LAPACK, and with every computation restricted to a single thread. 
Although absolute times depend on the machine, the comparison rests on the ratios between methods and their scaling.
Our curve is the rank-factorized form of~\cref{Thm:FactorizedImplementation}, which never forms a $\dimab \times \dimab$ matrix.
HIP is run with its authors' code~\cite{kahn2021hip}, its source files unmodified, and with the default settings of its simulation driver.

The switching rules are \texttt{first} and \texttt{cos}, with $\texttt{min\_cos} = 0.99$, $\texttt{alt\_steps} = 4$, $\texttt{HIP\_steps} = 10$, $\texttt{missing\_w} = 3$, $\texttt{min\_part} = 0.1$ and at most $30$ stored hyperplanes.
The iteration is stopped once the least eigenvalue of the iterate is above $-\epsilon_{\mathrm{tol}}/\dimab$, with $\epsilon_{\mathrm{tol}} = \max\bigl(10^{-1} |\lmin(\hrl)|, 10^{-3}\bigr)$, or after $300$ iterations, and is followed by a final mixing step~\cite[Eq.~(15)]{surawy2022projected}, so that every HIP output is a valid channel.
Our only changes are the following: the projection onto the TP hyperplane is written for our ordering of the two systems and for $d_\sfa \neq d_\sfb$, as described above; one array routine that current NumPy no longer provides is replaced by its successor; and the logging of intermediate results is disabled.
The time we report is the one the code itself accounts for, which excludes that logging.
These settings are used wherever HIP's cost is reported (\cref{Fig:Headline}, left, \cref{Fig:HIPPerformanceAnalysis} and \cref{Tab:HeadToHead}), and every such run stopped on the tolerance, none on the iteration cap.
In~\cref{Fig:Headline}~(left) and~\cref{Fig:HIPPerformanceAnalysis}, HIP regularizes the calibrated density estimate described above.

In~\cref{Tab:HeadToHead} and in the accuracy figures, PLS-QPT is the authors' complete pipeline applied to the least-squares estimate, including the density-estimation routine their code calls, which differs from~\cref{Alg:SparseDensityEstimate}: the largest eigenvalues keep their least-squares values and the remaining surviving ones are lowered by $\tau$, where~\cref{Alg:SparseDensityEstimate} shifts every surviving eigenvalue by a common amount.
Wherever the accuracy or rank of PLS-QPT is reported (\cref{Fig:Headline}, right, and \cref{Fig:SpectrumMatching,Fig:AccuracyVsN,Fig:ChannelRobustness,Fig:FullRankTradeoff}), two settings depart from the defaults: the tolerance is fixed at $\epsilon_{\mathrm{tol}} = 10^{-8}$ and the iteration cap is raised to $2000$.
The reason is that the default tolerance has a floor, $10^{-3}$, that does not shrink with $N$ and would put a plateau on the infidelity of PLS-QPT: on the rank-$2$ channel of~\cref{Tab:HeadToHead} at $\dimab = 2^6$ and $N = 10^{10}$, the default settings return infidelity $2.0 \times 10^{-4}$ against $3.9 \times 10^{-5}$ at the fixed tolerance.
In~\cref{Fig:SpectrumMatching}, HIP is applied to the same density estimate as the fidelity projection.
Tightening the tolerance raises the runtime, as it does for any numerical approximation scheme: on the same channel at $N = 10^{8}$, HIP stops after $4$ iterations at the default tolerance and after $35$ at the fixed one, each iteration costing one eigendecomposition of size $\dimab$.
The costs we report for HIP are therefore those of its authors' own accuracy target, the most favourable to it among the settings we use.
The fidelity projection is a closed form, exact up to machine precision, with no tolerance to choose and a cost that does not depend on the accuracy demanded.

\paragraph{The semidefinite program.}
The diamond projection is the diamond-norm semidefinite program of~\textcite{watrous2012simpler}, applied to the difference $\fc\iv(\si) - \fc\iv(\hr)$ between a candidate channel and the CP map whose Choi state is the density estimate, with $\si$, a variable, constrained to $\cab$ by $\si \geq 0$ and $\tr_\sfb[\si] = I_\sfa/d_\sfa$.
The implementation, together with its verification against three closed-form cases, is in the repository accompanying this article~\cite{afham2026fplscode}.

\paragraph{Solver settings and stopping.}
The cost of diamond projection is decided by a tolerance we choose rather than by the algorithm's own convergence criterion, so a careless choice of stopping condition would measure our tolerance policy instead of the problem.
We therefore attempt four settings per instance, the defaults of \textsc{Clarabel} and \textsc{SCS} plus one alternative each, modelled in \texttt{cvxpy}.
The two alternatives move in opposite directions and for different reasons: \textsc{Clarabel} is an interior-point method that can stall just short of its default target, so its alternative \emph{relaxes} the gap and feasibility tolerances by one decade to $10^{-7}$; \textsc{SCS} is first order and its iterates converge slowly at the tail, so its alternative \emph{tightens} them to $\eps_{\mathrm{abs}} = \eps_{\mathrm{rel}} = 10^{-9}$ with a raised iteration cap.
A run counts only if its output is a valid channel, with marginal error below $10^{-6}$ and least eigenvalue above $-10^{-6}$; among the valid runs the fastest is plotted.

\paragraph{SDP timings.}
Each instance is solved three times with warm starting disabled and the minimum is reported.
The number plotted is the solver's own reported solve time, which excludes \texttt{cvxpy} modelling and canonicalization; those are one-off costs of the interface rather than of the projection.
All solvers run single-threaded, as do HIP and the fidelity projection, so the three curves are comparable.
The largest instances carry a ten-minute cap per solve, passed to each solver as its own time limit.

\subsection{Right panels: rank and accuracy against the shot count} \label{App:HeadlineRight}
The true channel is fixed: the mixed-unitary channel on four qubits that applies the quantum Fourier transform with probability $1/2$ and a fixed Haar-random unitary otherwise, so $\dimab = 2^8$, the Choi rank is $2$ and both nonzero Choi eigenvalues are close to $1/2$.
At each shot count, ten independent measurement records are drawn; from each, the (same) least-squares estimate is formed and handed to both FPLS- and PLS-QPT algorithms.
The two pipelines are those of~\cref{Tab:PLSvsFPLS}, with the density estimate of PLS-QPT formed by the routine of its authors' code, which differs from~\cref{Alg:SparseDensityEstimate} as described in~\cref{App:HeadlineLadder}.
Ranks count eigenvalues above $\eps_{\mathrm{zero}} = 10^{-10}$; rank curves are the mode over the ten trials, ties broken upward, and infidelity curves are the geometric mean over the ten trials.
The dotted guides carry the \emph{predicted} exponents $1/N$ and $1/\sqrt N$, which are never fitted. 
The guides are displaced off the curve they describe for visual distinguishability.

\end{document}